\newtheorem{theorem}{Theorem}
\newtheorem{lemma}{Lemma}
\newtheorem{proposition}{Proposition}
\newtheorem{corollary}{Corollary}
\newtheorem{definition}{Definition}
\newtheorem{remark}{Remark}
\newtheorem{example}{Example}
\newcommand\numberthis{\addtocounter{equation}{1}\tag{\theequation}}
\begin{document}
\title
{Resilient Learning-Based Control for Synchronization of Passive Multi-Agent Systems under Attack}
\author{Arash Rahnama and Panos J. Antsaklis,~\IEEEmembership{Fellow,~IEEE}%
\thanks{Arash Rahnama and Panos J. Antsaklis are with the Department of Electrical Engineering, University of Notre Dame, Notre Dame, IN 46556, USA (e-mail:~\{arahnama,~antsaklis.1\} @nd.edu). The support of the National Science Foundation under Grant No. CNS-1035655 and CNS-1446288 is gratefully acknowledged.}}
\IEEEpubid{0000--0000/00\$00.00 ~\copyright ~2016 IEEE example}
\maketitle
\begin{abstract}
	In this paper, we show synchronization for a group of output passive agents that communicate with each other according to an underlying communication graph to achieve a common goal. We propose a distributed event-triggered control framework that will guarantee synchronization and considerably decrease the required communication load on the band-limited network. We define a general Byzantine attack on the event-triggered multi-agent network system and characterize its negative effects on synchronization. The Byzantine agents are capable of intelligently falsifying their data and manipulating the underlying communication graph by altering their respective control feedback weights. We introduce a decentralized detection framework and analyze its steady-state and transient performances. We propose a way of identifying individual Byzantine neighbors and a learning-based method of estimating the attack parameters. Lastly, we propose learning-based control approaches to mitigate the negative effects of the adversarial attack.
\end{abstract}
\section{Introduction}\label{sec:intro}
Distributed coordination of multi-agent systems has been discussed extensively in control, communication and computer science literature. The wide range of applications in this area includes multiple robot coordination \cite{schneider1998territorial}, cooperative control of vehicle formations \cite{fax2004information}, flocking \cite{vicsek1995novel} and spacecraft formation flying \cite{beard2001coordination}. A strong body of literature exists on the state synchronization of homogeneous multi-agent systems with identical dynamics. In many practical applications of multi-agent systems, however, individual systems may have different dynamics with different state-space dimensions. This has instigated the need for the design of output-based control frameworks which do not require the full knowledge of dynamic states and the focus on synchronization of multi-agent systems with different dynamics based on their output information. The problem of synchronization naturally arises when a group of networked agents are seeking output-based agreement according to a certain quantity of interest that depends on the overall goal of the multi-agent system. More specifically, synchronization for a multi-agent system is defined as the agents following a desired output behavior that is achieved thorough local cooperation of neighboring agents. This cooperation is based on a feedback mechanism consisting of a weighted sum of the differences of the outputs of the neighboring agents. Some examples of systems under cooperative control resulting in sophisticated dynamic patterns which cannot be achieved by individual members are migration (or flocking), swarming, and torus. 

There exists a large body of valuable works in the area of synchronization and control. The problem of synchronization for multi-agent systems with dynamic communication edges has been explored in \cite{xiang2017cooperative}. Adaptive synchronization of diffusively coupled systems is discussed in \cite{shafi2015adaptive}. Synchronization of multi-agent systems that are physically coupled is discussed in \cite{wang2017coordination}. Another interesting sub-field of synchronization in multi-agent system consists of leader-follower synchronization problems, such works include \cite{xiang2017cooperative, ni2010leader,zheng2011consensus,klotz2015asymptotic}. The relationship amongst dissipativity, passivity and output synchronization has been explored in the literature as well \cite{liu2015output,yu2014output,scardovi2010synchronization}. Synchronization under switching topologies is discussed in \cite{wang2017cooperative}. Cluster-based synchronization in which only the synchronizations of separate clusters are achieved is discussed in \cite{liu2017cluster}. Some of the other recent notable works in the area of synchronization in multi-agent systems are given in \cite{liu2012synchronization,liu2017consensus,zhang2014synchronization,lu2014synchronization, chen2014pattern,su2016distributed,lu2015synchronization,wen2016event,niu2017adaptive}. 

In none of the works above, the problem of security and the negative effects of malicious attacks on synchronization have been discussed. In this work, we consider the effects of a Byzantine attack on the multi-agent network. Byzantine attacks were first proposed by \cite{lamport1982byzantine} and may cover different types of malicious behaviors \cite{zhang2015byzantine}. In our work, Byzantine agents intelligently falsify their data \textemdash Similar to the adversaries defined in \cite{fragkiadakis2013survey,vempaty2011adaptive,rawat2011collaborative}. The Byzantine agents are assumed to be powerful in the sense that they have the complete knowledge of the whole system and can update their information in an arbitrary way and send different data to distinct neighbors at the same time. Additionally, the Byzantine nodes are capable of disturbing the structure of the underlying communication graph by manipulating their feedback weights \textemdash The communication graph is usually required to meet certain conditions for synchronization to happen \cite{liu2015output,yu2014output,scardovi2010synchronization}. Lastly, we propose a distributed method of detection and mitigation as opposed to the more common centralized methods where a fusion center takes upon itself the responsibility of detecting and mitigating the attacks. There is obviously always a limitation to this approach as the central fusion unit may be compromised as well. Our proposed distributed detection and mitigation framework will eliminate this possibility. In the consensus literature, the decentralized method of detection has been proposed in works such as \cite{marano2006distributed,abdelhakim2011reliable,chen2008robust,marano2009distributed}. In \cite{marano2009distributed} for example, it is assumed that through collaboration, the
Byzantine agents are aware of the true hypothesis, which is similar to the assumption we make in the present work. As another example, in \cite{chen2008robust}, the authors rely on a sequential decentralize probability ratio test that is modified via a reputation-based
mechanism in order to filter out the false data and only accept reliable messages. Lastly, most detection and mitigation frameworks in the literature rely on exclusion of Byzantine agents from the synchronization algorithm \cite{yu2009defense,liu2012adaptive}. For example, in \cite{yan2012vulnerability}, the authors propose an adaptive outlier detection framework, based on which, the outside of the bound received information are excluded from the consensus process. In our work, we propose a mitigation scheme that takes advantage of the falsified information received from the Byzantine agents and mitigates the effects of the attack without excluding the Byzantine neighbors. This is due to the fact that excluding the Byzantine agents usually is not the best practice as most synchronization algorithms \cite{liu2015output,yu2014output,scardovi2010synchronization}, rely on balancedness and connectedness of the underlying communication graph and exclusion of Byzantine agents may contradict these conditions.  

Our framework is based on each individual agent locally deciding, based on its local test statistics that contain the information received by the agent from its neighbors, whether the entire multi-agent system has reached synchronization. We also show synchronization for an event-triggered control framework. This is motivated by the fact that event-triggered control frameworks can considerably reduce communication and computation load on the band-limited communication network \cite{wang2011event}. Additionally, it has been shown that event-based control methods can maintain the same performance index as their continuous and periodic based control counter-parts \cite{dimarogonas2012distributed,seyboth2013event}. First, we show that, under no attack, the entire event-triggered multi-agent network system is capable of reaching synchronization and that each agent may decide correctly on synchronization based on their local summary statistics, if our proposed triggering-based control framework and the underlying communication graph meet certain conditions. Next, we propose a method of identifying Byzantine agents based on the statistical distribution of Byzantine agents' outputs. We characterize and analyze the performance of the detection unit. Lastly, we propose a method of mitigation for the attacks in order to maintain the synchronization of the entire event-triggered multi-agent network system. In this vein, the contributions of our work are listed below,
\begin{itemize}
	\item We show synchronization for an event-triggered multi-agent network system with output passive agents. We introduce a local decision making process based on which each individual agent decides whether the entire system has reached synchronization or not.
	\item We propose a simple design-oriented event-triggering control framework based on simple output-based triggering conditions which guarantees synchronization and positive lower-bounds for the inner-event time-instances (lack of Zeno behavior).
	\item We define a rather general Byzantine attack framework, and characterize the effects of the attack on passive qualities of the multi-agent system in particular and synchronization of the entire system in general.
	\item We introduce a decentralized detection framework for detecting the Byzantine attack.
	\item We analyze the performance of the proposed detection framework. We characterize both the steady-state and transient performance of the detection framework.
	\item We propose a specific method of identifying individual Byzantine neighbors and learning their attack parameters.
	\item Lastly, we introduce two different learning-based mitigation processes; one based on the passive properties of the agents, and one based on the statistical distribution of the data received from the neighboring agents. Based on which, we propose a learning-based control framework that can considerably mitigate the negative effects of the attack. 	
\end{itemize}
\section{Mathematical and Statistical Preliminaries} \label{sec:prel}
Consider the dynamical system $G$, 
\begin{align*}
G:\begin{cases}
\dot{x}(t)= f(x(t),u(t)) 
&\\y(t)= h(x(t),u(t)),
\end{cases}
\end{align*}
where $f$ and $h$ are Lipschitz functions, $x(t) \in X \subset R^n$, and $u(t) \in U \subset R^m$, and $y(t) \in Y \subset R^m$  are respectively the state, input and output of the system, and $X$, $U$ and $Y$ are the state, input and output spaces. 
\begin{definition}\label{supply}
(\cite{willems1972dissipative}) The supply rate $\omega(u(t),y(t))$ is a well-defined supply rate, if for all $t_0$, $t_1$ where $t_1\geq t_0$, and all solutions $x(t)\in X$, $u(t)\in U$, and $y(t)\in Y$ of the dynamical system, we have,
\begin{align*}
\int_{t_0}^{t_1}|\omega(u(t),y(t))| dt < \infty.
\end{align*}
\end{definition}

Dissipativity and passivity are energy-based notions that characterize a dynamical system by its input/output behavior. A system is dissipative if the change in the system's stored energy is upper-bounded by the energy supplied to the system. The energy supplied to the system is mathematically modeled by the supply function, and the energy stored in the system is mathematically modeled by the storage function.
\begin{definition} \label{diss}
(\cite{willems1972dissipative}) System $G$ is dissipative with respect to the well-defined supply rate $\omega(u(t),y(t))$, if there exists a nonnegative storage function $V(x): X \to R^+$ such that for all $t_0$, $t_1$ where $t_1\geq t_0$, and all solutions $x(t)\in X$, $u(t)\in U$, and $y(t)\in Y$ of the dynamical system,
\begin{align*}
V(t_1)-V(t_0) \leq \int_{t_0}^{t_1}\omega(u(t),y(t)) dt, 
\end{align*}
is satisfied. If the storage function is differentiable, we have,
\begin{align*}
\dot{V}(t) \leq \omega(u(t),y(t)),~ \forall t\geq0.
\end{align*} 
\end{definition}
\begin{definition} \label{pass}
(\cite{bao2007process}) As a special case of dissipativity, system $G$ is called passive, if there exists a nonnegative storage function $V(x): X \to R^+$ such that,
\begin{align*}
V(t_1)-V(t_0) \leq \int_{t_0}^{t_1} u^T(t)y(t) dt
\end{align*}
is satisfied for all $t_0$, $t_1$ where $t_1\geq t_0$, and all solutions $x(t)\in X$, $u(t)\in U$, and $y(t)\in Y$ of the dynamical system. 
\end{definition}

\begin{definition}	\label{OPD}
(\cite{khalil2002nonlinear}) System $G$ is considered to be Output Feedback Passive (OFP), if it is dissipative with respect to the well-defined supply rate,
\begin{align*}
\omega(u,y)=u^Ty-\rho y^Ty,
\end{align*}
for some $\rho \in R$. Additionally, if the storage function is differentiable, we may have,
\begin{align*}
\dot{V}(t) \leq u^Ty-\rho y^Ty.
\end{align*}
\end{definition}
The above definition presents a more general form for the concept of passivity. Based on Definition \ref{OPD}, we can denote an output passive system based on its output passivity index. If $\rho<0$ then the system has a shortage of passivity. A positive value for the passivity index $\rho$ indicates an excess in passivity. If $\rho>0$, then the system is called \emph{output strictly passive} (OSP). 
\begin{definition}(\cite{khalil2002nonlinear})
System $G$ is called finite-gain $L_2$-stable, if for the smallest possible positive gain $\gamma$, and $\forall u(t)\in U$, a $\beta$ exists such that over the time interval $[0,\tau]$ and for any positive $\tau$, we have,
\begin{align*}
&||y_{\tau}||_{L2} \leq \gamma||u_{\tau}||_{L2} + \beta.
\end{align*}
Here, $||y_{\tau}||_{L2}$ and $||u_{\tau}||_{L2}$ represent the $L_2$-norm of truncated signals over the time interval $[0,\tau]$. For instance,
\begin{align*}
&||y_{\tau}||_{L2} = \sqrt{\int_{0}^{\tau}y^T(t)y(t)dt}.
\end{align*}
\end{definition}
In probability theory, the expected value $(E[X])$ of a random variable $X$, intuitively, is the long-run average value of repetitions of the experiment it represents, in the continuous sense, this is defined as, 
\begin{align*}
E[X]=\int_{-\infty}^{+\infty} x f_{PDF}(x) dx.
\end{align*}
The notation $f_{PDF}(.)$ represents the probability density function (PDF) of a distribution. Expectation of the random variable $X$ conditioned on the hypothesis (or random distribution) $H$ is represented as $E[X|H]$. The complementary distribution function of the standard normal Gaussian distribution with zero mean $(\mu=0)$ and standard deviation $\sigma=1$ is denoted as $Q(z)=\frac{\int_{z}^{\infty} e^{\frac{-t^2}{2}}dt}{\sqrt{2\pi}}$. The Gaussian distribution with mean $\mu$ and variance $\sigma^2$ is denoted as $\phi(x|\mu,\sigma^2)=\frac{ e^{\frac{-(x-\mu)^2}{2\sigma^2}}}{\sigma\sqrt{2\pi}}$. Null and alternative hypotheses are represented as $H_0$ and $H_1$. Probability of an event is represented as $P$. Probability of false alarm (type 1 error) or accepting the alternative hypothesis and rejecting the null hypothesis mistakenly is shown as $P_{FA}=Pr(D=H_1|H_0)$ and probability of detection is $P_D=Pr(D=H_1|H_1)$. 
\section{The Communication Graph Model} \label{sec:gra}
The communication flow between agents may be represented as a weighted directed graph \cite{godsil2013algebraic}. A graph is directed, if its edges have direction. We consider a finite positively weighted directed graph $G:=(V,E)$ with no loops and with the adjacency matrix $A$, where the entry $a_{i,j} \neq 0$, if there is a directed edge from vertex $i$ to vertex $j$, otherwise $a_{i,j} = 0$. The adjacency matrix $A$ represents both the link weights and the topology of the graph. $V$ is the vertex set including all vertices (all $N$ agents), $V=\{1,2,...,N\}$. $E$ is the edge set including all edges (communication links), $E \subset V \times V$. The agent $G_i$ can send information to agent $G_j$, if $(i,j) \in E$ and $a_{i,j} \neq 0$. The in-degree of a vertex $j$ is given by $d_{in}(j)= \sum_{j} a_{kj}$ and the out-degree of a vertex $j$ is given by $d_{out}(j)= \sum_{j} a_{jk}$ where $k$ respectively represents the in-neighbor ($V_{in}(j)=\{k\in V_{in}(j)|(k,j) \in E\}$) and out-neighbor ($V_{out}(j)=\{k\in V_{out}(j)|(j,k) \in E\}$) agents that have a communication link in common with agent $j$. We introduce the diagonal degree matrix $D^{N \times N}$ with $d_{j,j}=d_{out}(j), \forall j \in V$. The weighted Laplacian matrix $L$ of the graph is defined as $L=D-A$. For the graph presented in Fig. \ref{fig:Graph}, we have,
\begin{figure}[!t]
\centering
\includegraphics[scale = 0.5]{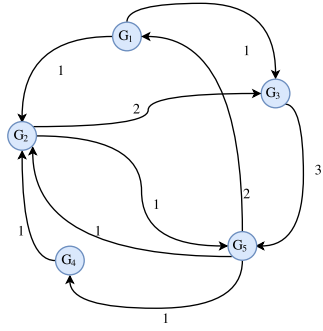}
\caption{Graph - An Example.}
\label{fig:Graph}
\end{figure}
\begin{align*} 
A=\begin{bmatrix}
0    &  1   &  1  & 0 &  0 \\
0    &  0   &  2  & 0 &  1 \\
0    &  0   &  0  & 0 &  3 \\
0    &  1   &  0  & 0 &  0 \\
2    &  1   &  0  & 1 &  0 \\
\end{bmatrix},~
D=\begin{bmatrix} 
2    &  0   &  0  & 0 &  0 \\
0    &  3   &  0  & 0 &  0 \\
0    &  0   &  3  & 0 &  0 \\
0    &  0   &  0  & 1 &  0 \\
0    &  0   &  0  & 0 &  4 \\
\end{bmatrix},~
L=\begin{bmatrix} 
2    &  -1  &  -1 & 0 &  0 \\
0    &  3   &  -2 & 0  &  -1\\
0    &  0   &  3  & 0  & -3 \\
0    &  -1  &  0  & 1  &  0 \\
-2   &  -1  &  0  & -1 &  4 
\end{bmatrix}.
\end{align*}
\begin{definition}
\cite{godsil2013algebraic} A vertex is balanced, if its in-degree is equal to its out-degree. A directed wighted graph is balanced, if all of its vertices are balanced.
\end{definition}
It is important to note that the followings hold for a balanced directed graph, $1_N^TL=0$ and $L^T1_N=0$, where $1_N=[1,...,1]^T$ is a vector of size $N$. The graph presented in Fig. \ref{fig:Graph} is balanced.
\begin{definition}
\cite{godsil2013algebraic} A path of length $r$ in a directed graph is a sequence of $r+1$ distinct vertices $\{v_0,v_1,...,v_r\}$ such that for every $i \in \{0,...,r+1\}$, $(v_i,v_{i+1})$ is an edge. A weak path is a sequence of $r+1$ distinct vertices $\{v_0,v_1,...,v_r\}$ such that for every $i \in \{0,...,r+1\}$, either $(v_i,v_{i+1})$ or $(v_{i+1},v_i)$ is an edge. A directed graph is weakly connected if any two vertices can be joined by a weak path.
\end{definition}
\begin{definition}
\cite{godsil2013algebraic} A directed graph is  connected, if for any pair of distinct vertices $v_i$ and $v_j$, there is a weak path from $v_i$ to $v_j$. A directed graph is strongly connected, if for any pair of distinct vertices $v_i$ and $v_j$, there is a directed path from $v_i$ to $v_j$. 
\end{definition}
The connectivity measures of directed graphs are related to the algebraic properties of their Laplacian matrices \cite{wu2005algebraic}.
\begin{definition}
\cite{wu2005algebraic} For a directed graph $G$ with the Laplacian matrix $L$, the algebraic connectivity is a real number defined as  
\begin{align*}
\lambda(G) := min_{z \in P}  ~z^TLz,
\end{align*}
where $P=\{z \in R^N:z\bot1_N,||z||=1\}$.
\end{definition}
For a balanced connected graph G with nonnegative weights and Laplacian matrix $L$, we have $\lambda(G)=\gamma_2(\frac{L+L^T}{2})>0$, where $\gamma_2$ is the second smallest eigenvalue of the matrix $\frac{L+L^T}{2}$  ($\gamma_1=0$) \cite{wu2005algebraic}. Lastly, we define $\mathcal{N}_j^{in}$ and $\mathcal{N}_j^{out}$. $\mathcal{N}_j^{in}$ denotes the set of all neighboring nodes that send information to agent $G_j$ including the weights associates with their communication graph topology. $\mathcal{N}_j^{out}$ denotes the set of all neighboring nodes that receive information from agent $G_j$ including the weights associates with their communication graph topology. For a balanced graph, the cardinality of these two are equal  $|\mathcal{N}_j^{out}|=|\mathcal{N}_j^{in}|$. For instance, for the graph presented in Fig. \ref{fig:Graph}, we have: $\mathcal{N}_5^{in}=\{1G_2,3G_3\}$ and $|\mathcal{N}_5^{in}|=|\mathcal{N}_5^{out}|=4$. 
\section{Problem Statement} \label{sec:prob}
We consider the problem of synchronization for a multi-agent system consisting of $N$ agents under an event-triggered network control framework. We assume that agents are output passive,
\begin{align*}
\dot{V}_j(t) \leq u_j^T(t)y_j(t)-\rho_jy^T_j(t)y_j(t),~ \forall t>0~for~ j=1,...N.
\end{align*}
We consider an efficient event-based framework where agents communicate with each other only when necessary. In other words, agent $G_j$ sends new information to its neighboring agents when the last information sent to other agents is outdated and requires a new modification based on $G_j$'s current dynamics and the event-triggering condition. This considerably decreases the communication load on the shared network. Consequently, it is assumed that the agents that will receive the new information from $G_j$ will update their control inputs accordingly. Each agent establishes a new communication attempt with its neighboring agents over a band-limited networks when its triggering condition is met. The triggering conditions are output-based and simple to design,
\begin{align} \label{eq:trig}
&||e_j(t)||_2^2 > \delta_j||y_j(t)||_2^2.
\end{align}

The event-detector is located on the output of each agent to monitor the behavior of its output. An updated measure of $y_j$ is sent to the communication network when the error between the last information sent ($y_{t_k})$) and the current one, $e_j(t)=y_j(t)-y_j(t_{k})$ (for $t \in [t_{k}, t_{k+1})$) exceeds a predetermined threshold established by the designer based on the relation given in Eq. \ref{eq:trig} and the design parameter $\delta_j$. At instances for which the triggering condition is met, and new information is successfully exchanged and the error is set back to zero, $e_j(t_{k+1})=0$. These simple triggering conditions will facilitate the design process by making it easier for the designer to understand and analyze the trade-offs amongst synchronization, performance and communication load. Each agent has its own respective sampler condition which is designed based on its passivity properties and its location in the underlying communication graph. This will be analytically presented in Section \ref{sec:main}. Theorem \ref{thm:synch} outlines the design condition for each $\delta_j$. The control input for each agent is represented by the summation of the differences between the agent's output and the output of its neighboring agents multiplied by respective positive control gains,
\begin{align} \label{inp}
u_j=\sum_{k \in  \mathcal{N}_j^{in}}^{} a_k (y_k(t^n_k)-y_j(t^n_j)).
\end{align}
More specifically, the input $u_j$ for agent $G_j$ consists of the summation of $a_k(y_k(t^n_k)-y_j(t^n_j))$, where $y_j(t^n_j)$ represents agent $G_j$'s last output sent to its neighbors, and $y_k(t^n_k)$ represents the last received output from the neighboring agent $k$ where  $k \in  \mathcal{N}_j^{in}$. $a_k>0$ represents a control gain established by agent $G_j$ for each neighboring agent,
\[\begin{cases}
a_k   & \quad \text{if agent $G_j$ receives information from agent $G_k$ } \\
0  & \quad \text{otherwise.}
\end{cases}\]

One can represent the underlying communication graph according to Section \ref{sec:gra}, in which case the control gains $a_k$ represent the arc weights in the graph. The  assumption made here is that during the initialization and design of the gains and  communication links for the entire multi-agent, the underlying communication graph is connected and balanced. We denote the outputs of $N$ agents by the vector $Y=[y_1,y_2,...,y_N]^T$. We define the matrix $\Phi \in R^{(N-1)\times N}$ as follows,
\begin{align} \label{phi}
\Phi=\begin{bmatrix}
-1+(N-1)\nu    &  1-\nu    &   -\nu    & ... &   -\nu \\
-1+(N-1)\nu    & -\nu      & 1-\nu     &\ddots& \vdots\\
\vdots         &\vdots     &\ddots     &\ddots&   -\nu\\
-1+(N-1)\nu    & -\nu      & ...       & -\nu &  1-\nu
\end{bmatrix}
\end{align}
where $\nu=\frac{N-\sqrt{N}}{N(N-1)} \in R$. Matrix $\Phi$ exhibits the following properties: $\Phi1_N=0$, $\Phi\Phi^T=I_{N-1}$, and,
\begin{align*} 
\Phi^T \Phi=\begin{bmatrix}
\frac{N-1}{N} & \frac{-1}{N}  &  ...         & \frac{-1}{N} \\
\frac{-1}{N}  & \frac{N-1}{N} &\ddots        & \vdots       \\
\vdots        &\ddots         &\ddots        &\frac{-1}{N}  \\
\frac{-1}{N}  &...            & \frac{-1}{N} &\frac{N-1}{N}
\end{bmatrix} = I_N - \frac{1}{N}1_N1_N^T.
\end{align*}
To measure synchronization mathematically, we define,
\begin{align}\label{ave}
\bar{Y}=\frac{1}{N}1_N^TY=\frac{1}{N}\sum_{i=1}^{N}y_i,
\end{align}
and,
\begin{align}\label{dev}
Y_\Delta=(y_1-\bar{Y},y_2-\bar{Y},...,y_N-\bar{Y})^T.
\end{align}
$Y_\Delta$ represents a measure for synchronization of agents. $Y_\Delta=0$ only happens when all agents reach the same synchronized state $y_1=y_2=...=y_N=\bar{Y}$.  We have $\Phi^T \Phi Y=(I_N - \frac{1}{N}1_N1_N^T)Y=Y_\Delta$. Further,
\begin{align}\label{Ydelta}
Y^T\Phi^T \Phi \Phi^T \Phi Y=Y_\Delta^TY_\Delta.
\end{align}
Lastly, we can show that, 
\begin{align*}
&Y^TL^TY = (Y_\Delta + \frac{1}{N}1_N1_N^TY) L^TY \\&~~~~~~~~~ = Y_\Delta L^TY = Y_\Delta L^T (Y_\Delta + \frac{1}{N}1_N1_N^TY)\\&~~~~~~~~~=Y_\Delta^TL^TY_\Delta \geq \lambda(G)Y^T\Phi^T \Phi Y=\lambda(G)Y^TY-\frac{\lambda(G)}{N}Y^T1_N1_NY, \numberthis \label{Y}
\end{align*}
where $\lambda(G)$ represents the algebraic connectivity of the underlying communication graph and $L$ is the Laplacian matrix. In Section \ref{sec:main}, we represent the results for synchronization of the entire event-triggered multi-agent system and the design conditions for each event-detector based on the passivity properties of agents and algebraic properties of the communication graph. 
\section{Sensing, Detection and Fusion Frameworks} \label{sec:Det}
The three most popular signal detection approaches for spectrum sensing are matched filtering detection method, feature detection method, and energy detection method \cite{zhang2015byzantine}. Here, we adopt an energy-based detection approach for the detection center on each agent \cite{digham2007energy,urkowitz1967energy}. The energy detector measures the energy in the input wave over a specific time interval. This means that our framework is based on detecting a deterministic signal over a noisy communication channel. The energy detection method, however, cannot differentiate between noise and signal, but at the same time does not need any prior knowledge about the signal's distribution. It is assumed that the detection center makes decisions under a Neyman-Pearson (NP) set-up, and that the adversary is aware of it \cite{bickel2015mathematical}. The local summary statistic of each agent is calculated from the received signal energy from the neighboring agents. As mentioned, at each triggering instance, each agent communicates with its neighbors. In our detection framework, this means that each communication attempt will update the summary statistic of neighboring agents. This process continues until the whole multi-agent system synchronizes to a steady-state. This steady-state represents the global test statistic at which the entire multi-agent system has reached synchronization. At each updating instance, each agent makes a decision whether the entire system has reached synchronization or not. As later defined, this process also decides if a neighboring agent is Byzantine or not. In order to fulfill the premise behind this framework, each agent is equipped with a detection unit that has access to the network topology in order to gain information \cite{urkowitz1967energy}. We explain this in more details in this section.

The signals received by each agent's detection unit are assumed to be unknown in details but deterministic. The band-limited communication environment in which signals travel is known. The noise is assumed to be Gaussian and additive with zero mean. Based on the assumption of a deterministic signal, we know that the input with signal present is Gaussian with a nonzero mean. For agent $G_j$, at time instant $\tau$, the sensed signal received from the neighboring agent $G_k$, $y_k^\tau$ is given as,
\[ y_k^\tau=\begin{cases}
n_k^\tau   & \quad \text{under $H_0$ } \\
\tilde{h}_k s^\tau+n_k^\tau  & \quad \text{under $H_1$,}
\end{cases}\]
where $\tilde{h}_k$ represents the channel gain and $n_k^n$ represents the noise for the communication link from agent $G_k$ to agent $G_j$ ($H_1$ and $H_0$ here represent the hypotheses under which, the signal is present or not). The channel gain in the communication link between each two agents, models the effects of channel shadowing, channel loss and fading. $n_k^\tau$ is additive Gaussian noise with zero mean and variance $\sigma_k^2$ $(\mathcal{N}(0,\sigma_k^2))$. It is assumed that the noise $n_k^\tau$ and signal $s^\tau_k$ are statistically independent. The channel gains $\tilde{h}_k$ and noise variances $\sigma_k^2$ for channels are readily available for each agent. These assumptions are justified by the fact that each detection unit can perform simple noise power estimation and channel gain estimation (by averaging the signal-to-noise ratio over a certain time interval) between consecutive sensing intervals to accurately obtain these values \cite{shen2010deflection}. Additionally, we assume that $\tilde{h}_k$ is considered larger than the estimate value to compensate for any overhead \cite{shen2010deflection}.

It has been shown that control gain designs that compensate for the negative effects of the communication channel $\tilde{h}_k$ comparatively perform  better \cite{visser2008multinode}. As a result, one can design the optimal control gains $a_k$ (explained in details in Section \ref{sec:prob}) according to $a_k=\frac{K_k}{\tilde{h}_k}$ to compensate for channel effects. This is not a necessary rule to follow for the results presented in this paper. This weight design, however, will efficiently assign higher weights to channels with higher Signal-to-Noise ratio (more confidence in the received data) and vice-versa \cite{visser2008multinode}. Lastly, the channel gains are assumed independent of each other, known and constant over each sensing period. This is justified by the slow-changing nature of the communication links where the delay requirement is short compared to the channel coherence time \cite{arka2013selective}. Each agent $G_j$ calculates a local summary statistic $T_k$ over a detection interval of $L$ samples, from the information received from its neighboring agent $G_k$,
\begin{align} \label{ts}
T_k=\sum_{i=1}^{L}|y_k^i-y_j^i|^2.
\end{align}
It can be assumed that $L=2TW$ where $TW$ is an integer representing the time-bandwidth product of the energy detector with $T$ standing for the effective spectrum sensing time-interval and $W$ standing for the bandwidth of the sensing spectrum \cite{zhang2011distributed}. $y_j^i$ represents the last output sent from agent $G_j$ to its neighboring agents at instance $i$, which is also utilized in calculating the local summary statistic $T_k$ over the detection interval of $L$. The energy in a finite number of samples for the local summary statistic can be approximated by the sum of squares of statistically independent Gaussian random variables having certain means ($|y_k^i-y_j^i|$) and equal variances. This sum has a Chi-Square distribution with $L$ degrees of freedom $(\mathcal{X}^2_L)$ in the absence of signal. In the presence of a deterministic signal ($H_1$ hypothesis), the sampling plan yields an approximation to the energy consisting of the sum of squares of random variables, where the sum has a non-central Chi-Square distribution with $L$ degrees of freedom with the non-centrality parameter $\eta_k$,
\[\frac{T_k}{\sigma_k^2} \simeq \begin{cases}
\mathcal{X}^2_L   & \quad \text{under $H_0$ } \\
\mathcal{X}^2_L(\eta_k)  & \quad \text{under $H_1$,}
\end{cases}\]
where $\eta_k=\frac{\sum_{i=1}^{L}|\tilde{h}_k y^i_k-y_j^i|^2}{\sigma_k^2}$.
\subsection{Decision Making Step} \label{sub:decmakste}
Each agent $G_j$ makes local decisions as to whether the entire multi-agent system has reached synchronization or not. The summary statistic for synchronization, given the entire system, may be represented as $T^\star = \sum_{j \in  \mathcal{N}}^{}T_j^*$, where $T_j^*=\sum_{k \in  \mathcal{N}_j^{in}}^{}T^j_{k}$ for $j=1,...,N$. This then can be compared against a threshold $\gamma$ in order to decide if the system has synchronized. If this holds for the entire event-triggered multi-agent system then the entire multi-agent network has synchronized. Each agent $G_j$, however, makes its own decision on the synchronization hypothesis using the predefined threshold $\gamma_j$,
\[ Decision_{syn}=\begin{cases}
H_0   & \quad \text{if $T_j^*<\gamma_j$ } \numberthis \label{epsilon}\\
H_1  & \quad \text{otherwise.}\end{cases}\]
Where, $\gamma_j=\sum_{k \in  \mathcal{N}_j^{in}}^{}L\sigma_k^2+\lambda$ (see (\ref{D1})). $\lambda$ is a positive constant representing the allowed margin of error (or our confidence in the process). The exact choice of $\lambda$ depends on the desired detection and false alarm rates and is beyond the scope of work presented here. This is explained in more details in Section \ref{sec:analdet}. We assume the threshold $\gamma_j$ has already been selected based on performance, detection and false alarm criteria. The relation in (\ref{epsilon}) means that if sums of differences between an honest agent's output and the outputs of all its neighboring agents is small enough, then the honest agent may decide that the multi-agent system has reached synchronization. In other words, the entire event-triggered multi-agent system has reached synchronization, if $T_j^*<\gamma_j$ for $j=1,...,N$. 
\section{Byzantine Attack} \label{sec:Byz}
Multi-agent systems are vulnerable to attacks due their strong reliance on secure communication links and legitimate exchange of information. One of the most common type of such attacks is named Byzantine. Originally, proposed in \cite{lamport1982byzantine}, a Byzantine attack may take different forms \cite{dolev1982byzantine,ho2004byzantine}, our focus in this paper remains with intelligent data-falsification and weight manipulation attacks \cite{vempaty2011adaptive,fragkiadakis2013survey}. The main goals of Byzantine attackers is to first decrease the detection probability and increase the probability of false alarms, and then to degrade the multi-agent system's performance. This makes the problem of the Byzantine attack and defending against it very challenging and complicated. For the Byzantine agents, we adopt an approach that leaves the attacker with more power than usually allowed in practice. This leads to a conservative assessment of security risks but helps with analytical tractability. In this vein, we assume that Byzantine agents in fact know the true hypothesis and they use this knowledge to construct the most effective fictitious data in order to confuse the synchronization goal. This assumption obviously is difficult (but not impossible) to satisfy in practice. For this to be possible, the attackers should have a separate network for the cooperation amongst themselves.

As we will show in Section \ref{sec:main}, for the entire event-triggered multi-agent network system to reach synchronization, a connected balanced communication graph is required. In Section \ref{sec:main}, we also quantify the negative effects of weight manipulation resulting in an unbalanced underlying communication graph. We assume that Byzantine agents attack the multi-agent system from two different angles. First, the Byzantine agents disturb the underlying premise behind the convergence of the multi-agent system by introducing new weights that will undermine the balanced property of the underlying communication graph. Second, the Byzantine agents falsify their own information sent to other honest agents in order to conceal their identity and also to coerce the entire multi-agent system into following their desired behavior. The attack model, we consider is extremely general and covers several different Byzantine plots. To be more specific, if the event-triggered multi-agent network system is designed and initialized according to Theorem \ref{thm:synch} by the designer to reach synchronization, then we assume that at the initialization instance, the Byzantine agents $(\mathcal{N}_B)$ introduce the following fictitious weights $(a_k^\prime)$ into the underlying communication graph,
\begin{align*}
a_k^\prime = a_k + \omega_j~~~ \forall G_j \in \mathcal{N}_B,~and~ \forall G_k \in \mathcal{N}_j^{in}.
\end{align*}  
Additionally, at each communication instance, we assume that the Byzantine agents falsify their information according to,
\begin{align*}
\tilde{y}_j = y_j \pm \Delta_j~~~ \forall G_j \in \mathcal{N}_B.
\end{align*}  
Where $\Delta_j$ may represent the power of the attack inflicted by the Byzantine agent $G_j$. The model presented above allows the Byzantine nodes to manipulate their weights and falsify their information in a completely arbitrary manner based on their desire. As a result, the Byzantine agents are able to conceal themselves while degrading the performance of the entire system.
\subsection{Modeling of the Data Falsification Attack}
The main goal of the Byzantine agents is to manipulate the sensing results in a stealthy way and to reverse the synchronization status. In the presence of a synchronized state, the goal is to "vandalize" and move the multi-agent's state back to the state of lack of synchronization ($H_0 \rightarrow H_1$), and in the absence of synchronization, the goal is to "exploit" and to move the current state to  the state of the presence of synchronization at the desired value set by the Byzantine agents ($H_1 \rightarrow H_0$). This type of data injection attack is adaptive and extremely general. Each Byzantine agent may perform a stealthy manipulation of sensing data independently. The attack is "adaptive", in the sense that the data-falsification is based on the neighbors' states, and with the assumption that the adversary has prior knowledge on the detection algorithm. The attack is "covert", in the sense that the adversary manipulates the sensing data without being detected. Outsider attackers can be effectively expelled from the network with an authentication mechanism. In this work, we focus on insider attackers that reside in legitimate nodes.

Based on the assumption that Byzantine agents are intelligent and know the true hypothesis, we analyze the worst case detection performance of data-falsifications and define the attack devised by the agent $G_i$ as follows,
\[\tilde{y}_i =\begin{cases}
y_i+\Delta_i   &\quad \text{with propabilty $P_i$ }  \quad \text{under $H_0$ } \\
y_i  & \quad \text{with propabilty $1-P_i$ } \quad \text{under $H_0$,}
\end{cases}\]
and, 
\[\tilde{y}_i = \begin{cases}
y_i-\Delta_i   & \quad \text{with propabilty $P_i$ } \quad \text{under $H_1$ } \\
y_i  & \quad \text{with propabilty $1-P_i$ }  \quad \text{under $H_1$,} 
\end{cases} \]
where $P_i$ is the attack probability and $y_i$ is the Byzantine agent's true time-variant output. $\Delta_i$ is a constant value that represents the strength of the attack. $\Delta_i$ is set by the Byzantine agent based on the information it receives from its neighbors and may be positive or negative to fulfill the "exploitation" and "vandalism" objectives. For example, under the hypothesis $H_0$, we may define the test statistics $\eta_i=\frac{\sum_{k=1}^{L}|\tilde{h}_i y^k_i-y_j^k|^2}{\sigma_i^2}\approx 0$. The Byzantine agent by utilizing the attack parameter $\Delta_i>0$ or $\Delta_i<0$ may commit vandalism ($\eta^\prime_i=\frac{\sum_{k=1}^{L}|\tilde{h}_i (y^k_i+\Delta_i)-y_j^k|^2}{\sigma_i^2}\approx L\tilde{h}^2_i \Delta^2_i$, $H_0 \rightarrow H_1$). Under the hypothesis $H_1$, we may define the mean values $\mu_j=\frac{1}{L}\sum_{k=1}^{L}y_j^k$, $\mu_i=\frac{1}{L}\sum_{k=1}^{L}\tilde{h}_iy_i^k$, and $\eta_i=\frac{\sum_{k=1}^{L}|\tilde{h}_i y^i_i-y_j^k|^2}{\sigma_i^2}$ for an honest communication from agent $G_i$ to the host agent $G_j$ and $\eta^\prime_i=\frac{\sum_{k=1}^{L}|\tilde{h}_i (y^k_i-\Delta_i)-y_j^k|^2}{\sigma_i^2}$ for a Byzantine communication from agent $G_i$ to the host agent $G_j$ over the detection interval $L$. One can see that, $\eta^\prime_i=\eta_i+\frac{\sum_{k=1}^{L}(\tilde{h}^2_i \Delta^2_i+2\tilde{h}_i\Delta_iy_j^k-2\tilde{h}^2_i\Delta_iy_i^k)}{\sigma_i^2}$, hence the Byzantine agent with the selection of $\Delta_i>\frac{2(\mu_i-\mu_j)}{\tilde{h}_i}$ may commit an exploitative attack ($H_1 \rightarrow H_0$). Lastly, the Byzantine agent can adaptively estimate the relationship between its true output and its neighboring outputs based on the information it receives and accordingly set the value of $\Delta_i$. 

This modeling of Byzantine attacks is quite common in literature and covers a vast domain of adversary models \cite{zhang2015byzantine}. The above inequalities show the basic principle in terms of the amount of changes an attacker has to inject in order to fulfill "exploitation" and "vandalism" objectives, respectively. Lastly, as shown later, Byzantine agents will use large values for $\Delta_i$'s so that the magnitude of the local test statistics are dominated by the Byzantine agents' outputs and the degradation of the detection performance and the overall system's performance is maximized. This is, however, in odds with the Byzantine agents' other objective to conceal themselves. As a result, the Byzantine agents will have to choose their parameters wisely in order to fulfill both concealment and performance degradation objectives.
\section{Main Results} \label{sec:main}
\subsection{Synchronization Results}\label{sub:syn}
\begin{theorem}\label{thm:synch}
Consider the event-triggered multi-agent system described in Section \ref{sec:prob}, where each sub-system $G_j$ is output passive with the output passivity index $\rho_j$ and is controlled by the input mechanism given in (\ref{inp}). If the underlying connected communication graph is balanced, the communication time-delays and disturbances are negligible, and the communication attempts amongst all agents $G_j$ where $j=1,..., N$, are governed by the triggering conditions,
\begin{align*}
&||e_j(t)||_2^2 > \delta_j||y_j(t)||_2^2,
\end{align*}
where the design parameters $\delta_j$ are chosen such that,
\begin{align*}
0 < \delta_j \leq  \frac{\frac{2}{|\mathcal{N}_j^{in}|}(\lambda(G)+\rho_j)-\frac{1}{ \alpha}-\frac{1}{ \beta}}{\alpha+\beta},
\end{align*}
where $\alpha>0$ and $\beta>0$ are design variables and $\lambda(G)$ is the connectivity of the underlying communication graph, then the entire event-triggered multi-agent system achieves output synchronization asymptotically.
\end{theorem}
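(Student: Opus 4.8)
The plan is to use the sum of the agents' storage functions as a single Lyapunov function and to show that its derivative is controlled from above by the synchronization measure $Y_\Delta$. First I would set $V=\sum_{j=1}^{N}V_j$, which is nonnegative because each $V_j$ is, and differentiate along the closed loop. Output passivity of every agent immediately yields $\dot V \le \sum_{j} \bigl(u_j^{T}y_j - \rho_j \norm{y_j}^2\bigr)$, so the whole argument reduces to estimating the coupling term $\sum_j u_j^{T}y_j$ and turning it into something negative in $Y_\Delta$ plus controllable error terms.

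The key algebraic step is to substitute the control law (\ref{inp}) together with the relation $\hat y_j := y_j(t^n_j)=y_j-e_j$, so that each sampled difference splits as $\hat y_k-\hat y_j=(y_k-y_j)-(e_k-e_j)$. Summed over all agents, the first group assembles into the continuous consensus term $-Y^{T}LY$, while the second is a bilinear term in the measurement errors $E=[e_1,\dots,e_N]^T$ and the outputs $Y$. For the consensus part I would invoke the connectivity estimate (\ref{Y}), namely $Y^{T}LY=Y^{T}L^{T}Y\ge \lambda(G)\,Y_\Delta^{T}Y_\Delta$; this is exactly the place where balancedness and connectedness enter, since they are what guarantee $\lambda(G)>0$ and $1_N^{T}L=0$. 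The error term I would bound edge-by-edge with Young's inequality applied twice with the two free parameters $\alpha,\beta>0$: the reciprocals $\tfrac1\alpha,\tfrac1\beta$ absorb the error norms $\norm{e_k}^2,\norm{e_j}^2$, while the weights $\alpha,\beta$ multiply the output norms. Replacing each $\norm{e_j}^2$, which between consecutive events satisfies the triggering bound $\norm{e_j}^2\le \delta_j\norm{y_j}^2$ from (\ref{eq:trig}), then converts the estimate into $-\lambda(G)Y_\Delta^{T}Y_\Delta$ plus a sum of per-agent quadratics of the form $\bigl(\tfrac{|\mathcal{N}_j^{in}|}{2}[(\alpha+\beta)\delta_j+\tfrac1\alpha+\tfrac1\beta]-(\lambda(G)+\rho_j)\bigr)\norm{y_j}^2$.

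Finally I would observe that the prescribed upper bound on $\delta_j$ is precisely the threshold that makes each of these per-agent coefficients nonpositive --- the factor $\tfrac{2}{|\mathcal{N}_j^{in}|}$ being the bookkeeping that distributes the global connectivity margin across an agent's incident edges --- so that $\dot V \le -c\,Y_\Delta^{T}Y_\Delta \le 0$. A Barbalat/LaSalle argument (using that $V$ is bounded below and nonincreasing, hence $Y_\Delta^{T}Y_\Delta$ is integrable, together with boundedness of $\dot Y_\Delta$) then gives $Y_\Delta\to 0$, i.e. asymptotic output synchronization. I expect the main obstacle to be the reconciliation between the \emph{aggregate} connectivity bound, which is naturally expressed through the synchronization error $Y_\Delta^{T}Y_\Delta=\sum_j\norm{y_j-\bar Y}^2$, and the \emph{per-agent} output norms $\norm{y_j}^2$ produced by the passivity indices and the triggering condition; pairing $\lambda(G)$ with $\rho_j$ on a single $\norm{y_j}^2$, as the statement does, requires the identity in (\ref{Y}) relating $Y^{T}LY$ to $\lambda(G)(Y^{T}Y-\tfrac1N Y^{T}1_N1_N^{T}Y)$ and careful handling of the common-mode component. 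The second delicate point is the Young's-inequality step: each cross-edge error $e_k$ must be re-associated with agent $k$'s own triggering bound $\norm{e_k}^2\le\delta_k\norm{y_k}^2$ so that every error norm can be legitimately replaced, which is what dictates the symmetric appearance of $\alpha$ and $\beta$.
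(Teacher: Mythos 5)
Your proposal is correct and follows essentially the same route as the paper's own proof: summing the agents' storage functions, substituting the control law (\ref{inp}) so the coupling splits into the Laplacian term (bounded via $\lambda(G)$, where balancedness and connectedness enter) plus the error bilinear form, applying Young's inequality with the two parameters $\alpha,\beta$, replacing $\norm{e_j}^2$ by $\delta_j\norm{y_j}^2$ from the triggering condition, assembling the per-agent coefficients into the diagonal matrix whose nonnegativity is exactly the stated bound on $\delta_j$, and closing with Barbalat's lemma. The only differences are bookkeeping: the paper re-associates the cross terms $y_k^Ty_k$ to agent $k$ via balancedness (so only each agent's own error $e_j$ is ever bounded), rather than re-associating the errors $e_k$ as you describe, and the paper silently glosses over the common-mode reconciliation between $Y^TY$ and $Y_\Delta^TY_\Delta$ that you correctly flag as the delicate point.
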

\begin{proof}
Each agent $G_j$ is output passive with the storage function (Lyapunov function) $V_j$ where,
\begin{align*}
\dot{V}_j(t) \leq u_j^T(t)y_j(t)-\rho_jy^T_j(t)y_j(t),~ \forall t>0,
\end{align*}
where the output passivity level is indicated by $\rho_j\in R$. $u_j, y_j \in R^m$ are the inputs and outputs of appropriate dimensions for the agent $G_j$. The error of the triggering condition for agent $j$ is defined as $e_j(t)=y_j(t)-y_j(t^n_i)$ for triggering instances $n=0,1,2,...$. Accordingly, for each agent, we have $e^T_j(t)e_j(t)\leq \delta_j y_j^T(t)y_j(t)$ between each two triggering instances.
Given the control input in (\ref{inp}), and the framework described in Section \ref{sec:prob}, the input to the agent $G_j$ is defined as,
\begin{align*} \label{input}
u_j=\sum_{k \in  \mathcal{N}_j^{in}}^{} a_k (y_k(t^n_k)-y_j(t^n_j)) 
   =\sum_{k \in  \mathcal{N}_j^{in}}^{} a_k [(y_k(t)-e_k(t))-(y_j(t)-e_j(t))],
\end{align*}
where $n=0,1,2,...$ are the triggering instances. The relationship for the storage function of agent $G_j$ becomes,
\begin{align*}
&\dot{V}_j \leq \sum_{k \in  \mathcal{N}_j^{in}}^{} a_k [(y_k(t)-e_k(t))-(y_j(t)-e_j(t))]^Ty_j(t)-\rho_jy^T_j(t)y_j(t)\\&~~= \sum_{k \in  \mathcal{N}_i^{in}}^{} a_k [(y_k(t)-y_j(t))-(e_k(t)-e_j(t))]^Ty_j(t)-\rho_jy^T_j(t)y_j(t).
\end{align*}
In order to show synchronization for all $N$ agents, we consider the following storage function for the entire multi-agent system,
\begin{align*}
&\dot{S}= \sum_{j=1}^{N} \dot{V}_j \leq \sum_{j=1}^{N} \sum_{k \in  \mathcal{N}_j^{in}}^{} a_k [(y_k(t)-y_j(t))-(e_k(t)-e_j(t))]^Ty_j(t)-\sum_{j=1}^{N}\rho_jy^T_j(t)y_j(t).
\end{align*}
As we explained in Section \ref{sec:gra} and Section \ref{sec:prob}, the flow of information amongst agents may be represented by the Laplacian of the underlying communication graph $L$. Moreover, if we define the matrix $E=[e_1^T,e_2^T,...,e_N^T]^T$, then we have,
\begin{align} 
&\dot{S}= \sum_{j=1}^{N} \dot{V}_j \leq -Y^TL^TY+Y^TL^TE-\sum_{j=1}^{N}\rho_jy^T_j(t)y_j(t)\\&
\leq -\lambda(G)Y^TY+Y^TL^TE-\sum_{j=1}^{N}\rho_jy^T_j(t)y_j(t),  \label{S}
\end{align}
where $\lambda(G)>0$ represents the algebraic connectivity of the underlying connected communication graph. Next, we may show the following,
\begin{align}
&Y^TL^TE=E^TLY=\sum_{j=1}^{N} \sum_{k \in  \mathcal{N}_j^{in}}^{} a_k (y_j(t)-y_k(t))^Te_j(t) \nonumber\\& 
~~~~~~~~~=\sum_{j=1}^{N}\sum_{k \in  \mathcal{N}_j^{in}}^{} a_k y_j^T(t)e_j(t)-\sum_{j=1}^{N}\sum_{k \in  \mathcal{N}_j^{in}}^{} a_k y_k^T(t)e_j(t).  \label{EY}
\end{align}
For all $j$ and $k$, we can have: $y_j^T(t)e_j(t) \leq \frac{\alpha e_j^T(t)e_j(t)}{2} + \frac{y_j^T(t)y_j(t)}{2 \alpha}$ and $y_j^T(t)e_j(t) \leq \frac{\beta e_j^T(t)e_j(t)}{2} + \frac{y_k^T(t)y_k(t)}{2 \beta}$ where $\alpha,\beta>0$. Utilizing these relationships in (\ref{EY}), we have,
\begin{align*}
&Y^TL^TE \leq \sum_{j=1}^{N}\sum_{k \in  \mathcal{N}_j^{in}}^{} a_k [\frac{\alpha e_j^T(t)e_j(t)}{2} + \frac{y_j^T(t)y_j(t)}{2 \alpha}]\\&~~~~~~~~~~+\sum_{j=1}^{N}\sum_{k \in  \mathcal{N}_j^{in}}^{} a_k [\frac{\beta e_j^T(t)e_j(t)}{2} + \frac{y_k^T(t)y_k(t)}{2 \beta}].
\end{align*}
This can be further simplified to have,
\begin{align*}
&Y^TL^TE \leq \sum_{j=1}^{N} |\mathcal{N}_j^{in}| [\frac{(\alpha+\beta) e_j^T(t)e_j(t)}{2} + \frac{y_j^T(t)y_j(t)}{2 \alpha}]\\&~~~~~~~~~~+\sum_{j=1}^{N}\sum_{k \in  \mathcal{N}_j^{in}}^{} a_k [\frac{y_k^T(t)y_k(t)}{2 \beta}].
\end{align*}
Further, we know that between any two triggering instances, one can show $e^T_j(t)e_j(t)\leq \delta_j y_j^T(t)y_j(t)$. This further gives us,
\begin{align*}
&Y^TL^TE \leq \sum_{j=1}^{N} |\mathcal{N}_j^{in}| [\frac{(\alpha+\beta) \delta_j}{2} + \frac{1}{2 \alpha}]y_j^T(t)y_j(t)\\&~~~~~~~~~~+\sum_{j=1}^{N}\sum_{k \in  \mathcal{N}_j^{in}}^{} a_k [\frac{y_k^T(t)y_k(t)}{2 \beta}].
\end{align*}
We have assumed that the underlying communication graph is balanced. This property implies that $\sum_{j=1}^{N}\sum_{k \in  \mathcal{N}_j^{in}}^{} a_k [\frac{y_k^T(t)y_k(t)}{2 \beta}]=\sum_{j=1}^{N}|\mathcal{N}_j^{in}| [\frac{y_j^T(t)y_j(t)}{2 \beta}]$. This leads to,
\begin{align}\label{error}
&Y^TL^TE \leq \sum_{j=1}^{N} |\mathcal{N}_j^{in}| [\frac{(\alpha+\beta) \delta_j}{2} + \frac{1}{2 \alpha}+\frac{1}{2 \beta}]y_j^T(t)y_j(t).
\end{align}
Utilizing (\ref{error}) in (\ref{S}), we have,
\begin{align*} \label{LS}
&\dot{S}= \sum_{j=1}^{N} \dot{V}_j \leq -\lambda(G)Y^TY-\sum_{j=1}^{N}\rho_jy^T_j(t)y_j(t)\\&~~+\sum_{j=1}^{N} |\mathcal{N}_j^{in}| [\frac{(\alpha+\beta) \delta_j}{2} + \frac{1}{2 \alpha}+\frac{1}{2 \beta}]y_j^T(t)y_j(t).  \numberthis
\end{align*}
We introduce the square diagonal matrix $\Theta \in R^{N\times N}$, where
\[ [\Theta]_{j,i} = \begin{cases}
+\lambda(G)+\rho_j-|\mathcal{N}_j^{in}| [\frac{(\alpha+\beta) \delta_j}{2} + \frac{1}{2 \alpha}+\frac{1}{2 \beta}]     & \quad \text{if } j=i \\
0  & \quad \text{otherwise.}
\end{cases}\]
Given (\ref{Ydelta}) and $\Theta$, (\ref{LS}) becomes,
\begin{align*}
&\dot{S} \leq -Y_\Delta^T \Theta Y_\Delta.
\end{align*}

If the  event-triggered multi-agent system is designed according to the theorem such that for each node $G_j$, we have: $\delta_j\leq \frac{\frac{2}{|\mathcal{N}_j^{in}|}(\lambda(G)+\rho_j)-\frac{1}{ \alpha}-\frac{1}{ \beta}}{\alpha+\beta}$ then matrix $\Theta$ is semi-positive. Moreover, for the storage function $S$ we have: $S \geq 0$ and $\dot{S}\leq 0$ for $\forall y \in R^m$ and $\forall t \geq0$. This implies $\dot{S} \rightarrow 0$ as $t\rightarrow \infty$ according to Barbalat's Lemma \cite{khalil2002nonlinear}. Consequently, $Y_\Delta$ converges to the limit set $ D=\{x|Y_\Delta=0, x \in R^{mN}\}$ for all states of all agents, 
\begin{align*} 
&0 \leq Y_\Delta^T \Theta Y_\Delta \leq -\dot{S} 
\end{align*}
This also means that the entire multi-agent system synchronizes asymptotically. 
\end{proof}
\begin{remark}
The triggering conditions show that agents that are more passive with higher output passivity indices can have larger triggering intervals and will be required to send their information to the network less frequently.  
\end{remark}
\begin{remark}
Graph connectivity has a relation with the communication rate amongst agents as well. The higher the connectivity of the underlying communication graph for the multi-agent system is, the larger the triggering intervals may be (less frequent communication attempts). 
\end{remark}
\begin{remark}
The result presented in Theorem \ref{thm:synch} also shows that agents with a higher number of neighbors will be required to send their information to the network more frequently compared to others. In other words, agents with a high number of neighbors play a more crucial part in the synchronization process of the entire multi-agent system. This is due to the fact that the triggering conditions show a reciprocal relationship between triggering intervals and number of neighbors. If an agent is responsible for sending its information to a higher number of neighbors (a higher number of neighboring agents rely on its information), then the agent will have to update its neighbors more frequently. 
\end{remark}
\begin{remark}
It is important to note that we did not consider the effects of external disturbances and time-delays in Theorem \ref{thm:synch}. It is assumed that these effects are negligible. However, if the delays are large enough, or external disturbances are strong enough, then they may affect the performance of the entire system.
\end{remark}
\begin{remark}
The results in Theorem \ref{thm:synch} are quite lenient. More specifically, they may hold for non-passive systems as well. For agent $G_j$, the triggering instance $\delta_j$ should be chosen such that  $+\lambda(G)+\rho_j-|\mathcal{N}_j^{in}| [\frac{(\alpha+\beta) \delta_j}{2} + \frac{1}{2 \alpha}+\frac{1}{2 \beta}] >0$. It is clear that for a non-passive system with a shortage of output passivity $\rho_j$, one can still design a multi-agent system that will synchronize as long as $\delta_j$ is chosen such that, $0 < \delta_j\leq \frac{\frac{2}{|\mathcal{N}_j^{in}|}(\lambda(G)+\rho_j)-\frac{1}{ \alpha}-\frac{1}{ \beta}}{\alpha+\beta}$.
\end{remark}
\subsection{Zeno-Behavior Analysis} \label{sub:zeno}
In practical settings, it may be necessary to guarantee a lower-bound on the time-intervals between triggering instances. The main motivation behind this problem is to avoid Zeno-behavior for the triggering conditions. Zeno-behavior happens when an infinite number of triggering conditions are met in a finite time-interval defeating the purpose of the event-triggered control framework. In order to avoid this behavior, we introduce a small positive constant $c$ to the triggering conditions to guarantee a positive lower-bound. We have shown before that the triggering conditions given in Theorem \ref{thm:synch} do gauarntee a positive lower-bound for inner-event time instances \cite{Rahnama2017}. Here, we show that our synchronization results does hold for the triggering condition $||e_j(t)||_2^2>\delta_j||y_j(t)||_2^2+c$ for $j=1,...,N$ as well. As a result, in practical applications one can use this triggering condition to secure a positive lower-bound, if necessary.
\begin{corollary} \label{cor:Zsynch}
Consider the event-triggered multi-agent system described in Section \ref{sec:prob}, where each sub-system $G_j$ is output passive with the output passivity index $\rho_j$ and is controlled by the input given in \ref{inp}. If the underlying connected communication graph is balanced, the communication time-delays and disturbances are negligible, and the communication attempts amongst all agents $G_j$ where $j=1,..., N$, are governed by the triggering conditions,
\begin{align*}
\label{eq:trigcon}
&||e_j(t)||_2^2 > \delta_j||y_j(t)||_2^2+c,
\end{align*}
where the design parameters $\delta_j$ are chosen such that,
\begin{align*}
0 < \delta_j \leq  \frac{\frac{2}{|\mathcal{N}_j^{in}|}(\lambda(G)+\rho_j)-\frac{1}{ \alpha}-\frac{1}{ \beta}}{\alpha+\beta},
\end{align*}
then the entire event-triggered network system achieves output synchronization asymptotically.
\end{corollary}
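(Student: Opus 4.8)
The plan is to re-run the Lyapunov argument of Theorem~\ref{thm:synch} essentially verbatim, tracking the single place where the triggering rule enters. I would keep the aggregate storage function $S=\sum_{j=1}^{N}V_j$, use output passivity of each $G_j$ together with the control law~(\ref{inp}), and rewrite the coupling through the Laplacian exactly as in~(\ref{S})--(\ref{EY}). The only structural change is the inter-event error bound: with the new rule, between consecutive triggers one has $e_j^T(t)e_j(t)\le \delta_j y_j^T(t)y_j(t)+c$ in place of $e_j^T(t)e_j(t)\le \delta_j y_j^T(t)y_j(t)$. Everything upstream of this substitution (the passivity inequality, the Laplacian identity, and the algebraic-connectivity bound~(\ref{Y})) is unaffected, so I would reuse it directly.

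Carrying the modified bound through the Young-inequality estimates with the same design parameters $\alpha,\beta>0$, and invoking balancedness to cancel the neighbor terms exactly as in~(\ref{error}), every step reproduces the theorem except that the $\tfrac{(\alpha+\beta)}{2}e_j^Te_j$ contribution now splits into the old quadratic piece plus a constant. Collecting the constants across all agents and using~(\ref{Ydelta}) yields
\begin{align*}
\dot{S}\le -Y_\Delta^T\Theta Y_\Delta + C_0,\qquad C_0:=\frac{(\alpha+\beta)c}{2}\sum_{j=1}^{N}|\mathcal{N}_j^{in}|,
\end{align*}
with \emph{exactly} the same diagonal matrix $\Theta$ as in Theorem~\ref{thm:synch}. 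Hence, under the stated bound on $\delta_j$, $\Theta$ remains (semi-)positive and the quadratic part still pulls $Y_\Delta$ toward the synchronization manifold.

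The main obstacle is the additive constant $C_0>0$: it destroys the sign-definiteness $\dot S\le 0$ on which the Barbalat step of Theorem~\ref{thm:synch} depended, so one cannot conclude $Y_\Delta\to 0$ by merely quoting~\cite{khalil2002nonlinear}. To reach the exact asymptotic claim I would (i) choose each $\delta_j$ strictly inside its admissible interval so that $\Theta$ becomes positive definite with $\theta_{\min}:=\min_j[\Theta]_{j,j}>0$, giving $\dot S\le -\theta_{\min}\,Y_\Delta^TY_\Delta + C_0$, and (ii) upgrade the convergence argument to a LaSalle-type invariance analysis that exploits the Zeno-free property guaranteed by $c>0$, namely the strictly positive minimum inter-event time established in~\cite{Rahnama2017}. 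The key observation is that on the $\omega$-limit set the outputs are stationary, so $\dot y_j\equiv 0$; after the last transmission the event errors $e_j$ cease to accumulate, which removes the $c$-dependent contribution on the limit set and should force the largest invariant set to lie in $\{Y_\Delta=0\}$.

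I expect step (ii) to be the genuine difficulty. The honest concern is that each $e_j$ is repeatedly reset and regrows up to its threshold, so its pointwise magnitude need not vanish while the trajectory is still moving; making the invariance argument rigorous therefore requires showing that the energy injected by the residual event error over each dwell interval is summable along the trajectory (so that $\int^{\infty}Y_\Delta^T\Theta Y_\Delta\,dt<\infty$), rather than merely bounded. Securing this summability via the positive dwell-time and the decay of $\dot y_j$ is the crux of the proof; absent it, the same computation certifies only ultimate boundedness of $Y_\Delta$, whereas the target here is to convert that estimate into convergence to the exact synchronization manifold through the dwell-time/LaSalle mechanism above.
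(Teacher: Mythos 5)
Your derivation coincides step-for-step with the paper's own proof of this corollary: the paper carries the modified inter-event bound $e_j^T(t)e_j(t)\le \delta_j y_j^T(t)y_j(t)+c$ through the identical Young-inequality and balancedness estimates and arrives at exactly your inequality
\begin{align*}
\dot{S} \leq -Y_\Delta^T \Theta Y_\Delta + \sum_{j=1}^{N}|\mathcal{N}_j^{in}|\Bigl[\frac{(\alpha+\beta)c}{2}\Bigr],
\end{align*}
with the same diagonal matrix $\Theta$ as in Theorem \ref{thm:synch}. The divergence is entirely in the last step. The paper disposes of the constant by asserting that ``for small values of $c$'' one still has $\dot{S}\le 0$, then invokes Barbalat's lemma and concludes $Y_\Delta \to 0$ --- precisely the inference you correctly refuse to make. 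Your objection is sound: the additive constant destroys the sign-definiteness of $\dot{S}$, Barbalat does not apply, and what the computation actually certifies is ultimate boundedness of $Y_\Delta$, not convergence to the synchronization manifold. The paper itself tacitly concedes this in the remark immediately following the corollary, where it states that synchronization is only upper-bounded according to $Y_\Delta^T \Theta Y_\Delta < \sum_{j=1}^{N}|\mathcal{N}_j^{in}|\bigl[\frac{(\alpha+\beta)c}{2}\bigr]$ and that the asymptotic claim rests on $c$ being ``a very small positive number.'' So the gap you identify is genuine, but it is a gap in the paper's proof rather than an omission of yours; your account is the more honest description of what this argument proves.

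Your proposed repair --- taking $\delta_j$ strictly interior so that $\Theta$ is positive definite, then running a LaSalle/dwell-time invariance argument that exploits the Zeno-free property --- goes beyond anything in the paper, and, as you acknowledge, its crux (summability of the residual event-error energy along the trajectory) is left unproven; the paper offers no help on that point. If you want a claim that is actually provable by the route both you and the paper follow, the correct conclusion is practical synchronization: $Y_\Delta$ converges to the residual set $\{Y_\Delta : Y_\Delta^T\Theta Y_\Delta \le \sum_{j}|\mathcal{N}_j^{in}|\frac{(\alpha+\beta)c}{2}\}$, with exact asymptotic synchronization recovered only as $c\to 0$.
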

\begin{IEEEproof}
The proof follows the same line of reasoning as the proof given for Theorem \ref{thm:synch}. Following the same steps, one can show that,
\begin{align*}
&Y^TL^TE \leq \sum_{j=1}^{N} |\mathcal{N}_j^{in}| [\frac{(\alpha+\beta) \delta_j}{2} + \frac{1}{2 \alpha}]y_j^T(t)y_j(t)\\&~~~~~~~~~~+\sum_{j=1}^{N}\sum_{k \in  \mathcal{N}_j^{in}}^{} a_k [\frac{y_k^T(t)y_k(t)}{2 \beta}]+\sum_{j=1}^{N} |\mathcal{N}_j^{in}| [\frac{(\alpha+\beta) c}{2}].
\end{align*}
Further, one sees,
\begin{align*} 
&\dot{S}= \sum_{j=1}^{N} \dot{V}_j \leq -\lambda(G)Y^TY-\sum_{j=1}^{N}\rho_jy^T_j(t)y_j(t)\\&~~+\sum_{j=1}^{N} |\mathcal{N}_j^{in}| [\frac{(\alpha+\beta) \delta_j}{2} + \frac{1}{2 \alpha}+\frac{1}{2 \beta}]y_j^T(t)y_j(t)+\sum_{j=1}^{N} |\mathcal{N}_j^{in}| [\frac{(\alpha+\beta) c}{2}].
\end{align*}
By introducing the same matrix given in Theorem \ref{thm:synch}, $\Theta$, one has,
\begin{align*}
&\dot{S} \leq -Y_\Delta^T \Theta Y_\Delta +\sum_{j=1}^{N} |\mathcal{N}_j^{in}| [\frac{(\alpha+\beta) c}{2}].
\end{align*}
For small values of $c$, and if the event-triggered multi-agent system is design according to the corollary such that for each node $G_j$, we have: $\delta_j\leq \frac{\frac{2}{|\mathcal{N}_j^{in}|}(\lambda(G)+\rho_j)-\frac{1}{ \alpha}-\frac{1}{ \beta}}{\alpha+\beta}$ then matrix $\Theta$ is semi-positive. Moreover, for the storage function $S$ we have: $S \geq 0$ and $\dot{S}\leq 0$ for $\forall y \in R^m$ and $\forall t \geq0$. This implies $\dot{S} \rightarrow 0$ as $t\rightarrow \infty$ according to Barbalat's Lemma \cite{khalil2002nonlinear}. Consequently, $Y_\Delta$ converges to the limit set $ D=\{x|Y_\Delta=0, x \in R^{mN}\}$ for all states of all agents, which proves the corollary. 
\end{IEEEproof}
\begin{remark}
Corollary \ref{cor:Zsynch}, shows a trade-off between communication rate and performance. It is clear that for very large values of $c$ (very low communication rate), the synchronization state degrades quickly. In other words, synchronization is upper-bounded according to the relation, $Y_\Delta^T \Theta Y_\Delta < \sum_{j=1}^{N} |\mathcal{N}_j^{in}| [\frac{(\alpha+\beta) c}{2}]$. As a result, the designer should consider this trade-off before selecting the design parameter $c$. However, synchronization is possible based on the assumption that $c$ is chosen to be a very small positive number, and as a result the selection of $c$ is not consequential for the synchronization of the overall system.  
\end{remark}		
\subsection{Effects of Byzantine Agents on Synchronization}\label{sub:eff}
We assume that amongst $N$ agents, there are $N_B$ Byzantine nodes with the attack model described in Section \ref{sec:Byz} and $N_H$ honest nodes ($N_H+N_B=N$). $\mathcal{N}_H^{}$ and $\mathcal{N}_B^{}$ represent the set of honest and Byzantine agents, respectively. We represent the honest and Byzantine neighboring agents for $G_j$ by $\mathcal{N}_j^{in_H}$ and $\mathcal{N}_j^{in_B}$ ($\mathcal{N}_j^{in_H} \cap \mathcal{N}_j^{in_B}=\emptyset$, $\mathcal{N}_j^{in_H} \cup \mathcal{N}_j^{in_B}=\mathcal{N}_j^{in}$). $|\mathcal{N}_j^{in}|$ represents the same cardinality definition as given in Theorem \ref{thm:synch}. We define the cardinality of $\mathcal{N}_{j_B}^{in}$, $|\mathcal{N}_{j_B}^{in}|$ as only the number of neighbors for the Byzantine agents excluding their communication weights. $|\mathcal{N}_{j_B}^{in}|$ is zero for honest agents. The set of all Byzantine agents is represented by $\mathcal{N}^{B}$ and the set of all honest agents is represented by $\mathcal{N}^{H}$. For the honest agent $G_j^H$, the input under both hypotheses may be presented as,
\begin{align*} 
u_j^H=\sum_{k \in  \mathcal{N}_j^{in}}^{} a_k (y_k(t^n_k)-y_j(t^n_j)) 
= \sum_{k \in  \mathcal{N}_j^{in}}^{} a_k [(y_k(t)-y_j(t))-(e_k^\prime(t)-e_j(t))],
\end{align*}
where,
\[ e_k^\prime(t) = \begin{cases}
e_k(t)  \pm \Delta_k    & \quad \text{if } G_k \in \mathcal{N}_j^{in_B} \\
e_k(t)  & \quad \text{otherwise.}
\end{cases} \]
For the Byzantine agent $G_j^B$, the input may be presented as,
\begin{align*}
u_j^B=\sum_{k \in  \mathcal{N}_j^{in}}^{} a_k^B (y_k(t^n_k)-y_j(t^n_j)) 
= \sum_{k \in  \mathcal{N}_j^{in}}^{} a_k^B [(y_k(t)-y_j(t))-(e_k^\prime(t)-e_j(t))],
\end{align*}
where, $a_k^B=a_k + \omega_j$. The Lyapunov storage function for the entire multi-agent event-triggered network system becomes,
\begin{align*}
&\dot{S}= \sum_{j=1}^{N} \dot{V}_j \leq \sum_{j=1}^{N} \sum_{k \in  \mathcal{N}_j^{in}}^{} a_k^\prime [(y_k(t)-y_j(t))-(e_k^\prime(t)-e_j(t))]^Ty_j(t)-\sum_{j=1}^{N}\rho_jy^T_j(t)y_j(t).
\end{align*}
where,
\[ a_k^\prime = \begin{cases}
a_k + \omega_j    & \quad \text{if } G_j \in \mathcal{N}^{B} \\
a_k,& \quad \text{otherwise.}
\end{cases} \]
It is important to note that $\omega_j=0$ for honest agents. First, it can be shown that,
\begin{align*}
&\sum_{j=1}^{N} \sum_{k \in  \mathcal{N}_j^{in}}^{} a_k^\prime (y_k(t)-y_j(t))^Ty_j(t)
\\&=\sum_{j \in \mathcal{N}_B^{}}^{} \sum_{k \in  \mathcal{N}_j^{in}}^{} (a_k+\omega_j) (y_k(t)-y_j(t))^Ty_j(t)+\sum_{j \in  \mathcal{N}_H^{}}^{} \sum_{k \in  \mathcal{N}_j^{in}}^{} a_k (y_k(t)-y_j(t))^Ty_j(t)
\\&=\sum_{j \in \mathcal{N}_B^{}}^{} \sum_{k \in  \mathcal{N}_j^{in}}^{}\omega_j (y_k(t)-y_j(t))^Ty_j(t)+\sum_{j=1}^{N} \sum_{k \in  \mathcal{N}_j^{in}}^{} a_k (y_k(t)-y_j(t))^Ty_j(t)
\\&\leq\sum_{j \in \mathcal{N}_B^{}}^{} \sum_{k \in  \mathcal{N}_j^{in}}^{}\frac{\omega_j y^T_k(t)y_k(t)}{4} +\sum_{j=1}^{N} \sum_{k \in  \mathcal{N}_j^{in}}^{} a_k (y_k(t)-y_j(t))^Ty_j(t). 
\end{align*}
As a result, we have, 
\begin{align*}
&\dot{S}= \sum_{j=1}^{N} \dot{V}_j \leq -Y^TL^TY+Y^TL^{\prime T}E^\prime-\sum_{j=1}^{N}\rho_jy^T_j(t)y_j(t)\\&~~+\sum_{j \in \mathcal{N}_B^{}}^{} \sum_{k \in  \mathcal{N}_j^{in}}^{}\frac{\omega_j y^T_k(t)y_k(t)}{4}, 
\end{align*}
where,
\[E^\prime_{j,1} = \begin{cases}
e_j(t)  \pm \Delta_j   & \quad \text{if } G_j \in \mathcal{N}^{B} \\
e_j (t) & \quad \text{otherwise,}
\end{cases} \]
and $L^{\prime}$ is the Laplacian matrix of the new underlying communication graph consisting of $a^\prime_k$'s and is defined as,
\[[L^\prime]_{j,i} = \begin{cases}
\sum_{k \in  \mathcal{N}_j^{in}}^{} a_k^\prime  & \quad \text{if } j=i \\
-a_k^\prime  & \quad \text{if there is an arc from $G_i$ to $G_j$ with the gain $a_k^\prime$.}
\end{cases} \]
From Section \ref{sec:prob}, we remember,
\begin{align*}
\bar{Y}=\frac{1}{N}1_N^TY=\frac{1}{N}\sum_{i=1}^{N}y_i,
\end{align*}
and the measure for synchronization for the multi-agent system,
\begin{align*}
Y_\Delta=(y_1-\bar{Y},y_2-\bar{Y},...,y_N-\bar{Y})^T.
\end{align*}
We may follow the same steps as given in Theorem \ref{thm:synch}, and get to the following,
\begin{align*}
&\dot{S}= \sum_{j=1}^{N} \dot{V}_j \leq-Y^TL^TY+\sum_{j=1}^{N} (|\mathcal{N}_j^{in}| [\frac{1}{2 \alpha}+\frac{1}{2 \beta}] +|\mathcal{N}_{j_B}^{in}|\frac{\omega_j}{2 \alpha}) y_j^T(t)y_j(t)
\\&~~+\sum_{j=1}^{N} \sum_{k \in \mathcal{N}_j^{in}}^{} a^\prime_k [\frac{(\alpha+\beta)}{2}] e_j^{\prime^T}(t)e_j^\prime(t) -\sum_{j=1}^{N}\rho_jy^T_j(t)y_j(t)
\\&~~+\sum_{j \in \mathcal{N}_B^{}}^{} \sum_{k \in  \mathcal{N}_j^{in}}^{}(\frac{1}{4}+\frac{1}{2\beta})\omega_j y^T_k(t)y_k(t)
\\&~~ \leq -\lambda(G)Y^TY +\sum_{j=1}^{N} [(|\mathcal{N}_j^{in}| [\frac{1}{2 \alpha}+\frac{1}{2 \beta}] +|\mathcal{N}_{j_B}^{in}|\frac{\omega_j}{2 \alpha})-\rho_j]y^T_j(t)y_j(t)
\\&~~+\sum_{j=1}^{N} (|\mathcal{N}_j^{in}|+|\mathcal{N}_{j_B}^{in}|\omega_j) [\frac{(\alpha+\beta)}{2}] e_j^{\prime^T}(t)e_j^\prime(t)+\sum_{j \in \mathcal{N}_B^{}}^{} \sum_{k \in  \mathcal{N}_j^{in}}^{}(\frac{1}{4}+\frac{1}{2\beta})\omega_j y^T_k(t)y_k(t), \numberthis \label{relation3}
\end{align*}
where $\alpha$ and $\beta$ are the same parameters as given in Theorem \ref{thm:synch}. One can clearly quantify the negative effects, Byzantine nodes introduce to the entire framework by comparing (\ref{relation3}) and (\ref{LS}). The error term in (\ref{relation3}) can be expanded as well by utilizing $(e_j(t)\pm \Delta_j)^T(e_j(t)\pm \Delta_j)\leq 2 (e_j^{T}(t)e_j(t)+\Delta_j^2)$ , leading to the following,
\begin{align*} 
&  \dot{S}\leq -\lambda(G)Y^TY +\sum_{j=1}^{N} [(|\mathcal{N}_j^{in}| [\frac{1}{2 \alpha}+\frac{1}{2 \beta}] +|\mathcal{N}_{j_B}^{in}|\frac{\omega_j}{2 \alpha})-\rho_j]y^T_j(t)y_j(t)\\&~~+(\alpha+\beta) [\sum_{j=1}^{N} (|\mathcal{N}_j^{in}|+|\mathcal{N}_{j_B}^{in}|\omega_j)e_j^{T}(t)e_j(t) + \sum_{j \in \mathcal{N}_B^{}}^{} (|\mathcal{N}_j^{in}|+|\mathcal{N}_{j_B}^{in}|\omega_j) \Delta_j^2] \\&~~ 
+\sum_{j \in \mathcal{N}_B^{}}^{} \sum_{k \in  \mathcal{N}_j^{in}}^{}(\frac{1}{4}+\frac{1}{2\beta})\omega_j y^T_k(t)y_k(t).\numberthis \label{relation4}
\end{align*}
We introduce the same square diagonal matrix $\Theta \in R^{N\times N}$, where,
\[ [\Theta]_{j,i} = \begin{cases}
+\lambda(G)+\rho_j-|\mathcal{N}_j^{in}| [\frac{(\alpha+\beta) \delta_j}{2} + \frac{1}{2 \alpha}+\frac{1}{2 \beta}]     & \quad \text{if } j=i \\
0  & \quad \text{otherwise.}
\end{cases}\]
Given (\ref{Ydelta}), (\ref{relation4}) and $\Theta$, we have,
\begin{align*} 
&\dot{S}\leq - Y^T_\Delta \Theta Y_\Delta +\sum_{j \in \mathcal{N}_B^{}}^{} |\mathcal{N}_{j_B}^{in}|\frac{\omega_j}{2 \alpha}y^T_j(t)y_j(t)\\&~~+(\alpha+\beta) [\sum_{j=1}^{N} (\frac{|\mathcal{N}_j^{in}|}{2}+|\mathcal{N}_{j_B}^{in}|\omega_j)e_j^{T}(t)e_j(t) + \sum_{j \in \mathcal{N}_B^{}}^{} (|\mathcal{N}_j^{in}|+|\mathcal{N}_{j_B}^{in}|\omega_j) \Delta_j^2] \\&~~ 
+\sum_{j \in \mathcal{N}_B^{}}^{} \sum_{k \in  \mathcal{N}_j^{in}}^{}(\frac{1}{4}+\frac{1}{2\beta})\omega_j y^T_k(t)y_k(t). \numberthis \label{relation5}
\end{align*}
Given the assumption that the multi-agent system was initially designed according to Theorem \ref{thm:synch}, we have $\Theta>0$. After simplifying, and given $\dot{S} \rightarrow 0$ as $t\rightarrow \infty$, we have, 
\begin{align*} 
&0<-Y^T_\Delta \Theta Y_\Delta + \sum_{j \in \mathcal{N}_B^{}}^{} |\mathcal{N}_{j_B}^{in}|\frac{\omega_j}{2 \alpha}y^T_j(t)y_j(t)\\&+(\alpha+\beta) [\sum_{j=1}^{N} (\frac{|\mathcal{N}_j^{in}|}{2}+|\mathcal{N}_{j_B}^{in}|\omega_j)e_j^{T}(t)e_j(t) + \sum_{j \in \mathcal{N}_B^{}}^{} (|\mathcal{N}_j^{in}|+|\mathcal{N}_{j_B}^{in}|\omega_j) \Delta_j^2] \\&
+\sum_{j \in \mathcal{N}_B^{}}^{} \sum_{k \in  \mathcal{N}_j^{in}}^{}(\frac{1}{4}+\frac{1}{2\beta})\omega_j y^T_k(t)y_k(t). \numberthis \label{relation6}
\end{align*}
(\ref{relation6}) quantifies the effects of weight distortions and number of neighboring Byzantine agents on the convergence of the entire multi-agent system. More specifically, if one assumes the system is designed according to Theorem \ref{thm:synch} but initialized with the presence of Byzantine nodes then one can show an upper-bound for the effects of Byzantine agents on synchronization,
\begin{align*} 
&0< Y^T_\Delta \Theta Y_\Delta \leq \sum_{j \in \mathcal{N}_B^{}}^{} |\mathcal{N}_{j_B}^{in}|\frac{\omega_j}{2 \alpha}y^T_j(t)y_j(t)\\&+(\alpha+\beta) [\sum_{j=1}^{N} (\frac{|\mathcal{N}_j^{in}|}{2}+|\mathcal{N}_{j_B}^{in}|\omega_j)e_j^{T}(t)e_j(t) + \sum_{j \in \mathcal{N}_B^{}}^{} (|\mathcal{N}_j^{in}|+|\mathcal{N}_{j_B}^{in}|\omega_j) \Delta_j^2]  \\&
+\sum_{j \in \mathcal{N}_B^{}}^{} \sum_{k \in  \mathcal{N}_j^{in}}^{}(\frac{1}{4}+\frac{1}{2\beta})\omega_j y^T_k(t)y_k(t). \numberthis \label{relation7}
\end{align*}

It is obvious that in the presence of Byzantine agents, $Y_\Delta \neq 0$, as the honest agents will only be able to synchronize to a value that is based on the wrong data receiving from Byzantine agents ($y\pm\Delta$). In the best scenario, $y_H, y_B \rightarrow \bar{Y^\prime }=\frac{Y_H+Y_B}{N}$, where $Y_B$ represents the Byzantine outputs of all Byzantine agents sent to their neighbors and $Y_H$ represents the true outputs of honest agents. More specifically, if one assumes that the multi-agent system is designed according to Theorem \ref{thm:synch} but initialized with the presence of Byzantine agents then one can show a lower-bound and an upper-bound for the effects of Byzantine agents on synchronization and outputs of agents. The lower-bound happens when all agents synchronize to the wrong value $Y^\prime$. However, even the synchronization to this false value is not guaranteed anymore due to the positive upper-bound given in (\ref{relation7}). The positive upper-bound given in (\ref{relation7}) characterizes the worst possible outcome inflicted upon the multi-agent system by the Byzantine agents. For honest agents, (\ref{relation7}) demonstrates an upper-bound for the largest possible deviation caused by Byzantine agents between the honest agent's output and the correct synchronized value. Number of Byzantine nodes has a direct relation to this effect on synchronization. A larger number of Byzantine nodes can have a larger effect on deviating the multi-agent system from its true synchronized value. Additionally, there is the same direct effect between the Byzantine weights and synchronization, namely, the larger the Byzantine weights are, the larger deviations from the true synchronized value are. To sum up, there is a direct relationship between the upper-bound of deviations from the synchronized point (output over-shoot) and number of Byzantine neighbors and their associated Byzantine weights. The positive term involving $\Delta_j^2$ shows the relationship between the data falsification parameters and synchronization. The same direct relationship holds here as well. Falsifying the data by increasing the magnitudes of $\Delta_j$ directly weakens the synchronization of the entire multi-agent system. These results also show that an honest agent with a single Byzantine neighbor may never reach synchronization as the Byzantine neighbor by establishing the value of $\Delta_j$  and $\omega_j$ may consistently distract the honest node from reaching the synchronized value. Lastly, the positive term involving $\Delta_j^2$ guarantees a positive upper-bound for all agents, no matter if an agent has a Byzantine neighbor or not.
\begin{figure}[!t]
	\centering
	\includegraphics[scale = 0.8]{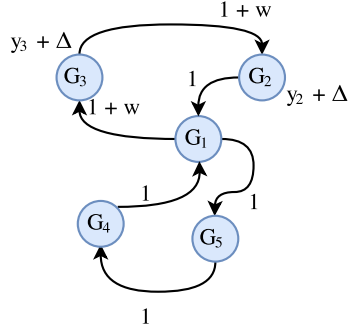}
	\caption{Graph - Byzantine Example.}
	\label{fig:GraphByz}
\end{figure}

As an example, let's look at the multi-agent system given in Fig. \ref{fig:GraphByz}. We assume that in the five agent system $G_1,G_4,G_5 \in \mathcal{N}_H$ and the rest are Byzantine agents. The honest agent $G_1$ has only one Byzantine neighbor $G_2$ with parameters $\Delta$ and $\omega$. We assume that the data falsifications for $G_2$ and  $G_3$ have a positive addition sign, $y_{B_2}(t)=y_2(t)+\Delta$ and $y_{B_3}(t)=y_3(t)+\Delta$. $y_B(t)$ is the output of Byzantine agent after data falsification. We have assumed that agent $G_2$ itself has another Byzantine neighbor $G_3$ ($|\mathcal{N}^{in}_{3_B}|=0$, $|\mathcal{N}_3^{in}|=1$ with parameters $\Delta$ and $\omega$). $G_4$ and $G_5$ are honest nodes. If $|\mathcal{N}^{in_B}_{1}|=1$, $|\mathcal{N}_1^{in}|=2$ for $G_1$, $\alpha=1$, $\beta=1$, $\lambda(G)=1$, $\rho_1=1.9$, $\delta_1=0.4$, $N=5$. Given that $|\mathcal{N}_{2_B}^{in}|=1$, $|\mathcal{N}_2^{in}|=1$ and with the assumption that $\rho_2=0.8$, $\delta_2=0.3$ for $G_2$ (Notice that $\delta_1$ and $\delta_2$ meet the synchronization conditions given in Theorem \ref{thm:synch}), for agents $G_1$ and $G_2$ at their respective triggering instances $t_k$ ($e_1^{^T}(t_k)e_1(t_k)=0$, $e_2^{^T}(t_k)e_2(t_k)=0$), we have,
\begin{align*} 
& 0 < 0.1 (y_1(t_k)-\bar{Y}(t_k))^2 \leq (2+\omega) \Delta^2+ 0.5\omega y_{2}^T(t_k)y_{2}(t_k)\\&~~+0.75\omega (y_{1}^T(t_k)y_{1}(t_k)+y_{3}^T(t_k)y_{3}(t_k)),\\&  \label{H1B2}
0 < 0.5 (y_2(t_k)-\bar{Y}(t_k))^2 \leq (2+\omega) \Delta^2 + 0.5\omega y_{2}^T(t_k)y_{2}(t_k)\\&~~+0.75\omega (y_{1}^T(t_k)y_{1}(t_k)+y_{3}^T(t_k)y_{3}(t_k)), 
\end{align*}
where $\bar{Y}(t_k)= \frac{y_1(t_k)+y_{B_2}(t_k)-\Delta+y_{B_3}(t_k)-\Delta+y_4(t_k)+y_5(t_k)}{5}$. This means that a single Byzantine agent can overtake an honest agent and control its behavior by determining the parameters $\omega$ and $\Delta$. Additionally, the Byzantine agents can affect $\bar{Y}$ through agent $G_1$.  This also means that if each honest agent in the network has only one Byzantine neighborhood and if all Byzantine neighborhoods work together by utilizing the same Byzantine parameters $\Delta$ and $\omega$ then the Byzantine agents can coerce the entire multi-agent system into following their desired behavior. However, for this attack to be meaningful, the values $\Delta$ and $\omega$ should be large enough to disrupt the overall performance of the multi-agent system. As we will see in the following sections, larger values of $\Delta$ and $\omega$ are easy to detect given our proposed detection framework. As a result a single Byzantine neighbor can be identified easily and its negative effects can be easily mitigated. In order for the Byzantine attack to be successful, each honest node should have more than only one Byzantine neighbor and $\mathcal{N}_j^{in_B}$ should be large enough. Our aim is to characterize the relationship between the number of Byzantine neighbors, detection performance, and synchronization. We will show the minimum number of Byzantine neighbors, an honest agent should have before the detection process is entirely blinded and the Byzantine agents become undetectable. We characterize the most efficient Byzantine attack. Lastly, we propose a more resilient algorithm for synchronization of the multi-agent systems.

\bf{Passivity and Effects of Byzantine Agents}:\normalfont ~Passivity to some extent can compensate and mitigate the negative effects of Byzantine agents. We later illustrate this through an example. One can determine from (\ref{relation7}), that for larger diagonal entries of $\Theta$, the effects of agents' output overshoot, or the size of the largest possible deviation from $\bar{Y}$ may be mitigated and decreased. From the definition of $\Theta$, one can see that the excess of passivity may lead to a larger entry for agent $G_j$ in $\Theta$ and a better worst case scenario in terms of deviations from the desired value $\bar{Y}$. For all agent $G_j$ where $j=1,...,N$, we can represent $\rho_j=\rho_j^\prime+\rho^\Delta_j>0$, where $\epsilon^\prime-\rho_j^\prime=\lambda(G)-|\mathcal{N}_j^{in}| [\frac{(\alpha+\beta) \delta_j}{2} + \frac{1}{2 \alpha}+\frac{1}{2 \beta}]$, with $\epsilon^\prime>0$. Then (\ref{relation7}) becomes,
\begin{align*} 
&0< Y_\Delta^T \Theta^\star Y_\Delta \leq \sum_{j \in \mathcal{N}_B^{}}^{} |\mathcal{N}_{j_B}^{in}|\frac{\omega_j}{2 \alpha}y^T_j(t)y_j(t) \\&+(\alpha+\beta) [\sum_{j=1}^{N} (\frac{|\mathcal{N}_j^{in}|}{2}+|\mathcal{N}_{j_B}^{in}|\omega_j)e_j^{T}(t)e_j(t) + \sum_{j \in \mathcal{N}_B^{}}^{} (|\mathcal{N}_j^{in}|+|\mathcal{N}_{j_B}^{in}|\omega_j) \Delta_j^2] \nonumber \\&
+\sum_{j \in \mathcal{N}_B^{}}^{} \sum_{k \in  \mathcal{N}_j^{in_B}}^{}(\frac{1}{4}+\frac{1}{2\beta})\omega_j y^T_k(t)y_k(t)-Y^T \Theta_{\rho_\Delta} Y.
\end{align*}
where 
\[ [ \Theta_{\rho_\Delta}]_{j,i} = \begin{cases}
\rho^\Delta_j   & \quad \text{if } j=i \\
0  & \quad \text{otherwise.}
\end{cases} \]
\[ [ \Theta^\star]_{j,i} = \begin{cases}
\epsilon^\prime  & \quad \text{if } j=i \\
0  & \quad \text{otherwise.}
\end{cases} \]
This means that an excess of passivity in agents can compensate for the negative effects of Byzantine weight and data manipulation. A design-based interpretation of this result tells us that one can design the triggering conditions (see \ref{eq:trig}) such that $\rho_j^\Delta>0$ for all agents, in order to increase the resilience of the entire event-triggered multi-agent system. However, this results in a decrease in the size of triggering intervals and a higher communication rate amongst agents. But the conclusion is that more passive multi-agent systems are also more resilient toward this type of Byzantine attack.
\subsection{Simulation Example} 
\begin{example}\label{exm:ex1}
\begin{figure}[!t]
	\centering
	\includegraphics[scale = 0.5]{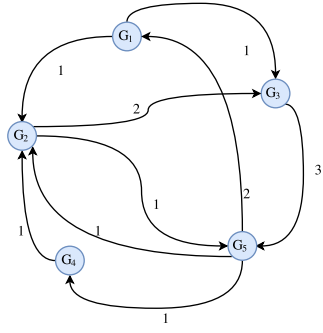}
	\caption{Graph - Example \ref{exm:ex1}.}
	\label{fig:Graph2}
\end{figure}
We consider a multi-agent event-triggered network system consisting of five agents ($i=1,...,5$) with the underlying balanced communication topology given in Fig. \ref{fig:Graph2}. We assume the following simple dynamics for sub-systems,  
\[ G_{i} = \begin{cases}
\dot{x}_i(t)= - c_i x_i(t) + u_i(t)   \\
y_i(t)=x_i(t), & \\
\end{cases} \]
with $c_1=1.2,~c_2=2.2,~c_3=2.4,~c_4=0.6,~c_5=4$. One can verify that all agents are dissipative with the storage function $V_i(x)=\frac{1}{2}x_i^T(t)x_i(t)$. This results in output passivity indices, $\rho_1=1.2,~\rho_2=2.2,~\rho_3=2.4,~\rho_4=0.6,~\rho_5=4$. The Laplacian matrix of the underlying communication graph amongst agents is,
\begin{align*}
L=\begin{bmatrix} 
2    &  -1  &  -1 & 0 &  0 \\
0    &  3   &  -2 & 0  &  -1\\
0    &  0   &  3  & 0  & -3 \\
0    &  -1  &  0  & 1  &  0 \\
-2   &  -1  &  0  & -1 &  4 
\end{bmatrix}.
\end{align*}
with the connectivity measure: $\lambda(G)=1.234$. Based on Theorem \ref{thm:synch}, one can design the following triggering conditions,
\begin{align*}
&||e_1(t)||_2^2>0.21||y_1(t)||_2^2,\\&
||e_2(t)||_2^2>0.14||y_2(t)||_2^2,\\&
||e_3(t)||_2^2>0.20||y_3(t)||_2^2,\\&
||e_4(t)||_2^2>0.60||y_4(t)||_2^2,\\&
||e_5(t)||_2^2>0.29||y_5(t)||_2^2,
\end{align*}
by selecting $\alpha_i=1,~\beta_i=1$ for $i=1,...,5$. For initial conditions, $y_1(0)=5,~y_2(0)=10,~y_3(0)=-5,~y_4(0)=1,~y_5(0)=-3$, Fig.\ref{fig:outputs} shows that the system synchronizes. Fig. \ref{fig:inner} shows the evolution of triggering condition for each agent and shows that Zeno-behavior does not happen and the event-triggered premise is met. 
\begin{figure}[!t]
	\centering
	\includegraphics[scale = 0.45]{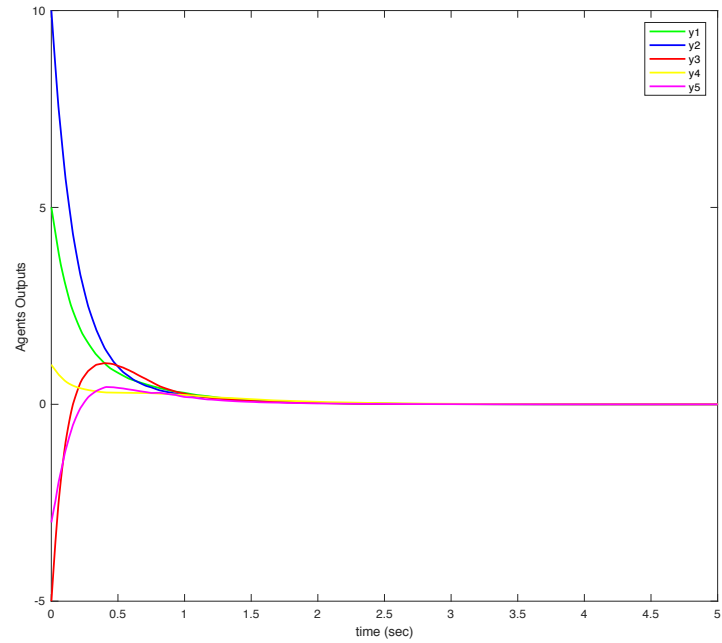}
	\caption{The Outputs of the Multi-Agent System.}
	\label{fig:outputs}
\end{figure}
\begin{figure}[!t]
	\centering
	\includegraphics[scale = 0.45]{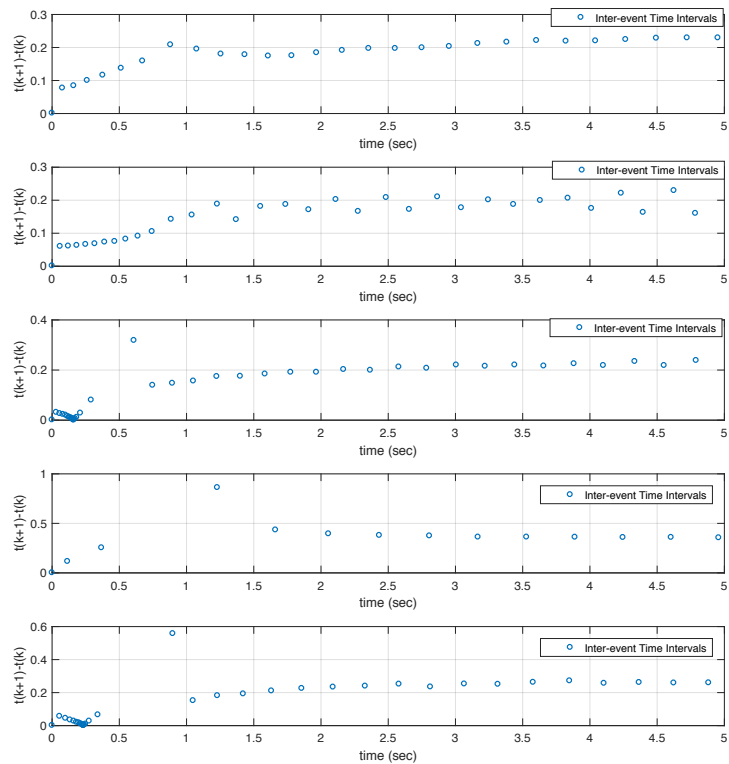}
	\caption{The Inner-Event Time Intervals of the Multi-Agent System.}
	\label{fig:inner}
\end{figure}

Now, we show the effects of a Byzantine attack on the multi-agent system to affirm our results in the previous sub-section. We consider the case where agent $G_1$ is compromised. First, we assume that $G_1$ has not manipulated its weight but that it only manipulates the information it is sending to other agents. $G_1$ sends $\tilde{Y}_1=Y_1+ \Delta$ where $\Delta=10$ instead of sending its true value. Fig \ref{fig:outputsByz1} shows the effects of the Byzantine agent on synchronization. As expected, the convergence deviates from the correct synchronized value by a positive magnitude which depends on $\Delta$, additionally, the error propagates through the network and affects other honest agents. Lastly, we consider the effects of weight manipulation and data falsification together. We assume that agent $G_5$ has changed its input weight $a_1=2$ to $a^\prime_1=8$ and also sends the same false data $\tilde{Y}$ to its neighbors. As shown in Fig. \ref{fig:outputsByz2}, a single Byzantine agent is able to deceive the entire multi-agent system into following its desired behavior by manipulating its weight and falsifying its data. Moreover, as expected, comparing Fig \ref{fig:outputsByz1} and Fig. \ref{fig:outputsByz2}, we see that by combining weight manipulation and data falsification, the upper-bound for all outputs of all agents has increased. 
\begin{figure}[!t]
	\centering
	\includegraphics[scale = 0.5]{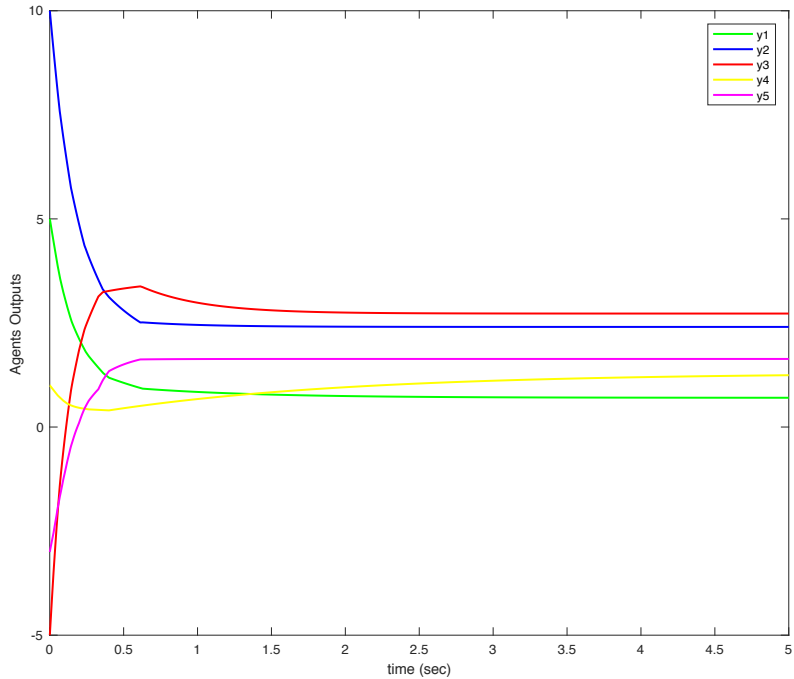}
	\caption{The Multi-Agent System's behavior under Data Falsification.}
	\label{fig:outputsByz1}
\end{figure}
\begin{figure}[!t]
	\centering
	\includegraphics[scale = 0.5]{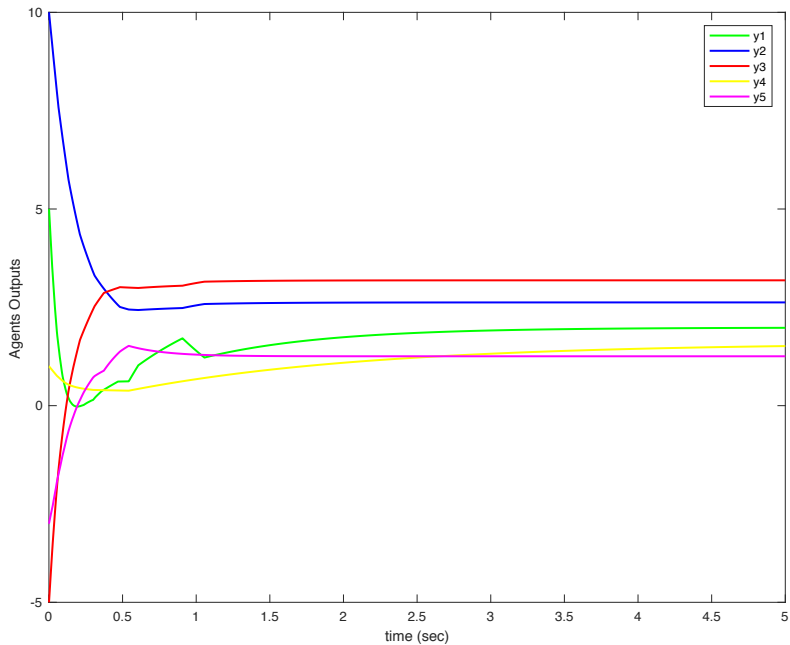}
	\caption{The Multi-Agent System's behavior under both Weight Manipulation and Data Falsification.}
	\label{fig:outputsByz2}
\end{figure}
\end{example}
It is important to note that in our analysis of attack parameters, we did not consider the attack probability ($P$). This was to characterize the worst possible effect the Byzantine attack may have on the entire system. In the following sections we will expand the above work to consider the probability of attack. Next, we will analyze the performance of the detection unit.
\section{An Analysis on the Performance of the Detection Framework}\label{sec:analdet}
\subsection{Transient Performance Analysis of the Detection Algorithm}\label{sub:transdet}
In this subsection, we analyze the transient performance of the detection framework. This analysis is based on characterization of the probability of correctly detecting that the multi-agent system is under attack and detecting the true signal (probability of detection) and the probability of false alarm; incorrectly determining that the multi-agent system has not reached synchronization when indeed it has (failing to detect the attack). As mentioned in Section \ref{sec:Byz}, Byzantine nodes attempt to replace the hypothesis $H_0$ with $H_1$ and vice versa. The probability of detection characterizes the ability of the detection unit to discover the Byzantine agents' attempt to "exploit" ($H_1 \rightarrow H_0$). The false alarm probability characterizes the probability that the detection unit does not detect the Byzantine agents' objective to "vandalize" ($H_0 \rightarrow H_1$) and falsely decides the non-existence of an attack. This means that even-though the system has reached synchronization, the detection unit follows the falsified information received from the Byzantine agents and decides against synchronization. The transient analysis of the detection unit is based on the hypothesis testing presented in (\ref{epsilon}). We remember that each agent $G_j$ independently calculates its local statistics with each neighbor $G_k$, according to  $T_k^j=\sum_{i=1}^{L}|y_k^i-y_j^i|^2$ over the time-interval of length $L$. The overall local statistics with all neighbors of $G_j$, which is utilized in the decision making process given in (\ref{epsilon}) at time-instance $t$, becomes $\wedge^t_j= (\sum_{k \in \mathcal{N}_j^{in}}^{}T_k^j)^t$. We also remember that according to the framework presented in Section \ref{sec:Byz}, the Byzantine agents are intelligent, know the true hypothesis, and will attempt to exploit and vandalize the synchronization process by confusing the detection framework and diverging the synchronization process by falsifying their data. In this subsection, we characterize the transient degradation of the detection performance in the presence of Byzantine neighbors. The local test statistic for the detection unit of an honest agents in the presence of honest and Byzantine neighbors at time-instance $t$ is $(\sum_{k \in \mathcal{N}_j^{in}}^{} T_k^j)^t=(\sum_{k \in\mathcal{N}_j^{in_H}}^{} T_k^j)^t + (\sum_{k \in \mathcal{N}_j^{in_B}}^{} \tilde{T}_k^j)^t$. For sufficiently large number of local test statistics of length $L$, the distribution of the test statistics with a Byzantine neighbor $G_k$, $\tilde{T}_k^j$, given the hypothesis $H_i$ $(i=0,1)$ is a Gaussian mixture of $\mathcal{N}((\mu_{i0})_k,(\sigma_{i0}^2)_k)$ with probability $(1-P_k)$ and $\mathcal{N}((\mu_{i1})_k,(\sigma_{i1}^2)_k)$ with probability $P_k$. The distribution of the test statistics from an honest neighbor $G_k$, $T_k^j$, given the hypothesis $H_i$ $(i=0,1)$ is a Gaussian distribution  $\mathcal{N}((\mu_{i0})_k,(\sigma_{i0}^2)_k)$, where,
\begin{align} 
&(\mu_{00})_k=L\sigma_k^2, ~(\mu_{01})_k=L\sigma_k^2+L\tilde{h}^2_k\Delta^2_k\sigma_k^2 \label{D1} \\& 
(\mu_{10})_k=(L+\eta_k)\sigma_k^2, ~(\mu_{11})_k=(L+\eta^\prime_k)\sigma_k^2 \label{D2} \\&
(\sigma_{00}^2)_k=2L\sigma_k^4,~(\sigma_{01}^2)_k=2(L+2L\tilde{h}^2_k\Delta^2_k)\sigma_k^4 \label{D3} \\&(\sigma_{10}^2)_k=2(L+2\eta_k)\sigma_k^4,~(\sigma_{11}^2)_k=2(L+2\eta^\prime_k)\sigma_k^4 \label{D4}
\end{align}
As a result the probability distribution of $\tilde{T}_k^j$ becomes,
\begin{align}
f_{PDF}(\tilde{T}_k|H_i)=(1-P_k)\phi((\mu_{i0})_k,(\sigma_{i0}^2)_k)+P_k\phi((\mu_{i1})_k,(\sigma_{i1}^2)_k), ~i=0,1.
\end{align}
$\phi(\mu,\sigma_{i0}^2)$ is the probability distribution function of $X \simeq \mathcal{N}(\mu,\sigma^2) $. In order to attain a closed form for the transient probability distribution of the detection center, first we start with a simple example and then expand the results to the general case. For the sake of convenience, we assume that $P_k=P$ for all $k \in \mathcal{N}_j^{in_B}$. If we assume that agent $G_j$ has 2 Byzantine neighbors $(G_1,G_2)$ and 2 Honest neighbors $(G_3,G_4)$ , at time-instance $t$, we have $\wedge_j^t=(\tilde{T}_1^j)^t+(\tilde{T}_2^j)^t+(T_3^j)^t+(T_4^j)^t$. $\wedge_j^t$ is the result of the summation of independent random variables. Consequently, the distribution of $\wedge_j^t$ is the result of the convolution of the distribution of these independent random variables,
\begin{align*}
&f_{PDF}(\wedge_j^t|H_k)=f_{PDF}((\tilde{T}_1^j)^t|H_k)\ast f_{PDF}((\tilde{T}_2^j)^t|H_k)\ast f_{PDF}((T_3^j)^t|H_k)\ast f_{PDF}((T_4^j)^t|H_k), ~k=0,1.
\end{align*}
Further we have,
\begin{align*}
&f_{PDF}(\wedge_j^t|H_k)=[(1-P_1)\phi((\mu_{k0})_1,(\sigma_{k0}^2)_1)+P_1\phi((\mu_{k1})_1,(\sigma_{k1}^2)_1)]\\&\ast[(1-P_2)\phi((\mu_{k0})_2,(\sigma_{k0}^2)_2)+P_2\phi((\mu_{k1})_2,(\sigma_{k1}^2)_2)]\\&\ast \phi(\mu_3,\sigma_3^2)\ast \phi(\mu_4,\sigma_4^2),\\&
=(1-P_1)(1-P_2)\phi((\mu_{k0})_1,(\sigma_{k0}^2)_1)\ast\phi((\mu_{k0})_2,(\sigma_{k0}^2)_2)\ast\phi(\mu_3,\sigma_3^2)\ast\phi(\mu_4,\sigma_4^2)\\&
+(1-P_1)P_2\phi((\mu_{k0})_1,(\sigma_{k0}^2)_1)\ast\phi((\mu_{k1})_2,(\sigma_{k1}^2)_2)\ast\phi(\mu_3,\sigma_3^2)\ast\phi(\mu_4,\sigma_4^2)\\&
+P_1(1-P_2)\phi((\mu_{k1})_1,(\sigma_{k1}^2)_1)\ast\phi((\mu_{k0})_2,(\sigma_{k0}^2)_2)\ast\phi(\mu_3,\sigma_3^2)\ast\phi(\mu_4,\sigma_4^2)\\&
+P_1P_2\phi((\mu_{k1})_1,(\sigma_{k1}^2)_1)\ast\phi((\mu_{k1})_2,(\sigma_{k1}^2)_2)\ast\phi(\mu_3,\sigma_3^2)\ast\phi(\mu_4,\sigma_4^2),  ~k=0,1.
\end{align*}
Given the fact that the convolution of two normal distributions is also a normal distribution with a mean and variance resulting from the summation of the means and variances of the initial normal distributions, and that $P_k=P$ for all $k \in \mathcal{N}_j^{in_B}$, we have,
\begin{align*}
&f_{PDF}(\wedge_j^t|H_k)=(1-P)(1-P)\phi((\mu_{k0})_1+(\mu_{k0})_2+\mu_3+\mu_4,(\sigma_{k0}^2)_1+(\sigma_{k0}^2)_2+\sigma_3^2+\sigma_4^2)\\&
+(1-P)P\phi((\mu_{k0})_1+(\mu_{k1})_2+\mu_3+\mu_4,(\sigma_{k0}^2)_1+(\sigma_{k1}^2)_2+\sigma_3^2+\sigma_4^2)\\&
+P(1-P)\phi((\mu_{k1})_1+(\mu_{k0})_2+\mu_3+\mu_4,(\sigma_{k1}^2)_1+(\sigma_{k0}^2)_2+\sigma_3^2+\sigma_4^2)\\&
+P^2\phi((\mu_{k1})_1+(\mu_{k1})_2+\mu_3+\mu_4,(\sigma_{k1}^2)_1+(\sigma_{k1}^2)_2+\sigma_3^2+\sigma_4^2),  ~k=0,1.
\end{align*}
Byzantine agents behave probabilistically in the sense that their states change from Byzantine to honest and vice versa with a probability that depends on $P$. We define the set $\mathcal{Z}_B$ as the combination of Byzantine states for Byzantine agents such that for this example, we have, $\mathcal{Z}_B=\{\{H_1,H_2\},\{B_1,H_2\},\{H_1,B_2\},\{B_1,B_2\}\}$, where the presence of $B_i$ or $H_i$ in the combinations of states indicates that the Byzantine agent $G_i$ is behaving as a Byzantine or honest agent, respectively. We denote $\mathcal{Z}^{ind}_B$ as indices of Byzantine agents in $\mathcal{Z}_B$ states, $\mathcal{Z}^{ind}_B=\{Z_1=\{\},Z_2=\{1\},Z_3=\{2\},Z_4=\{1,2\}\}$. As a result we have the complement set, $C(\mathcal{Z}^{ind}_B)=\{Z^c_1=\{1,2\},Z^c_2=\{2\},Z^c_3=\{1\},Z^c_4=\{\}\}$. Needless to say, we have the following cardinality relationship, $|\mathcal{Z}_i|+|\mathcal{N}_H|=N$.
\begin{lemma} \label{lem:prob}
	The probability distribution function of the local test statistic for agent $G_j$ with $N_B$ Byzantine neighbors and $N_H$ honest agents with the detection time-interval $L$, at time-instance $t$, given the hypothesis $H_i$ $(i=0,1)$, $\wedge_j^t$ is a Gaussian mixture determined by, 
	\begin{align*}
	&f_{PDF}(\wedge_j^t|H_k)=\sum_{Z_i \in \mathcal{Z}^{ind}_B}^{} P^{|\mathcal{Z}_i|}(1-P)^{N_B-|\mathcal{Z}_i|}\phi(\mu,\sigma^2),~\text{where},\\& \mu=\sum_{i \in \mathcal{Z}^c_i}^{}(\mu_{k0})_i+\sum_{i \in \mathcal{Z}_i }^{}(\mu_{k1})_i+\sum_{i \in \mathcal{N}_H}^{}(\mu_{k0})_i,\\& \sigma^2=\sum_{i \in \mathcal{Z}^c_i}^{}(\sigma^2_{k0})_i+\sum_{i \in \mathcal{Z}_i}^{}(\sigma^2_{k1})_i+\sum_{i \in \mathcal{N}_H}^{}(\sigma_{k0}^2)_i,~ \text{and} ~k=0,1.
	\end{align*}
\end{lemma}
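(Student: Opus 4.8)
The plan is to exploit two elementary facts: that $\wedge_j^t$ is a sum of mutually independent contributions, one per in-neighbor of $G_j$, so that its density is the convolution of the individual densities; and that convolution is bilinear and sends a pair of Gaussian densities to a single Gaussian whose mean and variance are the respective sums. The lemma then follows by expanding the resulting product of $N_B$ two-term mixtures and collecting parameters.

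First, write $\wedge_j^t = \sum_{i \in \mathcal{N}_j^{in_B}} (\tilde{T}_i^j)^t + \sum_{i \in \mathcal{N}_H} (T_i^j)^t$, a sum of independent random variables. Under $H_k$ each honest summand has the Gaussian density $\phi((\mu_{k0})_i,(\sigma_{k0}^2)_i)$, while each Byzantine summand has, by the mixture derived just before the lemma together with the assumption $P_i = P$, the density $(1-P)\phi((\mu_{k0})_i,(\sigma_{k0}^2)_i) + P\,\phi((\mu_{k1})_i,(\sigma_{k1}^2)_i)$. Since the density of an independent sum is the convolution of the densities, $f_{PDF}(\wedge_j^t \mid H_k)$ equals the convolution of these $N_B$ two-term mixtures with the $N_H$ honest Gaussians.

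Next, I would use bilinearity of convolution to distribute this product. Expanding the product of the $N_B$ Byzantine mixtures yields exactly $2^{N_B}$ terms; each term is obtained by choosing, for every Byzantine neighbor, either its honest component (prefactor $1-P$, parameters carrying second index $0$) or its Byzantine component (prefactor $P$, parameters carrying second index $1$). Such a choice is encoded by the subset of Byzantine indices that are in their Byzantine state, i.e. precisely by an element $Z_i \in \mathcal{Z}^{ind}_B$, and its scalar prefactor is $P^{|\mathcal{Z}_i|}(1-P)^{N_B-|\mathcal{Z}_i|}$. Within each term I then collapse the iterated convolution of Gaussians into one Gaussian via $\phi(\mu_1,\sigma_1^2)\ast\phi(\mu_2,\sigma_2^2)=\phi(\mu_1+\mu_2,\sigma_1^2+\sigma_2^2)$: the indices in $Z_i$ contribute $(\mu_{k1})_i,(\sigma_{k1}^2)_i$, those in the complement $\mathcal{Z}_i^c$ contribute $(\mu_{k0})_i,(\sigma_{k0}^2)_i$, and every honest index contributes $(\mu_{k0})_i,(\sigma_{k0}^2)_i$, yielding exactly the $\mu$ and $\sigma^2$ in the statement. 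Summing the $2^{N_B}$ weighted Gaussians gives the claimed mixture.

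The argument is essentially bookkeeping; its only substantive ingredients are the two standard facts above, already illustrated by the two-Byzantine-neighbor computation preceding the lemma. The one point requiring care is the combinatorial indexing — verifying that the distributive expansion of the product of mixtures is in faithful correspondence with the subsets collected in $\mathcal{Z}^{ind}_B$, and that the exponents of $P$ and $1-P$ match $|\mathcal{Z}_i|$ and $N_B-|\mathcal{Z}_i|$. This can be made fully rigorous by a short induction on $N_B$, with the displayed $N_B=2$ case serving as the base step.
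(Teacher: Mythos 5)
Your proposal is correct and follows essentially the same route as the paper: the paper justifies the lemma by computing the $N_B=2$, $N_H=2$ case via convolution of independent summands, bilinearity, and closure of Gaussians under convolution, then asserts the general statement through the index sets $\mathcal{Z}^{ind}_B$. Your write-up simply carries out that same expansion for general $N_B$ (with the induction remark making explicit what the paper leaves as an asserted generalization of its worked example), so there is nothing to flag.
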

Consequently, the transient performance of the detection unit of agent $G_j$ may be characterized by the probability of detection and false alarm as follows,
\begin{proposition}
	The probability of detection and false alarm of the detection unit for agent $G_j$ at time-instance $t$ may be characterized as,
	\begin{align*}
	&(P_D^j)^t=Pr(D=H_1|H_1)\\&=\sum_{Z_i \in \mathcal{Z}^{ind}_B}^{}P^{|\mathcal{Z}_i|}(1-P)^{N_B-|\mathcal{Z}_i|}Q(\frac{\gamma_j-\sum_{i \in \mathcal{Z}^c_i}^{}(\mu_{10})_i-\sum_{i \in \mathcal{Z}_i}^{}(\mu_{11})_i-\sum_{i \in \mathcal{N}_H}^{}(\mu_{10})_i}{\sqrt{\sum_{i \in \mathcal{N}}^{}(\sigma^2_{10})_i}}),\\&
	(P_{FA}^j)^t=Pr(D=H_1|H_0)\\&=\sum_{Z_i \in \mathcal{Z}^{ind}_B}^{}P^{|\mathcal{Z}_i|}(1-P)^{N_B-|\mathcal{Z}_i|}Q(\frac{\gamma_j-\sum_{i \in \mathcal{Z}^c_i}^{}(\mu_{00})_i-\sum_{i \in \mathcal{Z}_i}^{}(\mu_{01})_i-\sum_{i \in \mathcal{N}_H}^{}(\mu_{00})_i}{\sqrt{\sum_{i \in \mathcal{N}}^{}(\sigma^2_{00})_i}}).
	\end{align*}
\end{proposition}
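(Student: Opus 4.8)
The plan is to obtain both probabilities as upper-tail masses of the conditional density furnished by Lemma \ref{lem:prob}, using only the decision rule (\ref{epsilon}) and the standard Gaussian-tail identity. Recall that in (\ref{epsilon}) agent $G_j$ declares $D=H_1$ exactly when its aggregate statistic exceeds the threshold, i.e.\ when $\wedge_j^t \geq \gamma_j$. Hence, directly from the definitions, $(P_D^j)^t = Pr(\wedge_j^t \geq \gamma_j \mid H_1)$ and $(P_{FA}^j)^t = Pr(\wedge_j^t \geq \gamma_j \mid H_0)$, so each is simply the probability mass of $f_{PDF}(\wedge_j^t \mid H_k)$ lying to the right of $\gamma_j$, with $k=1$ for detection and $k=0$ for false alarm.

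First I would insert the Gaussian-mixture expression for $f_{PDF}(\wedge_j^t \mid H_k)$ from Lemma \ref{lem:prob} and integrate over $[\gamma_j,\infty)$. Since the mixture weights $P^{|\mathcal{Z}_i|}(1-P)^{N_B-|\mathcal{Z}_i|}$ are constants with respect to the integration variable, linearity of the integral turns the tail of the mixture into a weighted sum, over the Byzantine-state configurations $Z_i \in \mathcal{Z}^{ind}_B$, of single-Gaussian tails. The only nontrivial piece is then the elementary fact that for $X\sim\mathcal{N}(\mu,\sigma^2)$ one has $Pr(X\geq \gamma_j)=Q\big((\gamma_j-\mu)/\sigma\big)$; this is immediate from the definition $Q(z)=\frac{1}{\sqrt{2\pi}}\int_z^\infty e^{-t^2/2}\,dt$ after the substitution $t=(x-\mu)/\sigma$. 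Plugging in the per-component mean $\mu=\sum_{i\in\mathcal{Z}^c_i}(\mu_{k0})_i+\sum_{i\in\mathcal{Z}_i}(\mu_{k1})_i+\sum_{i\in\mathcal{N}_H}(\mu_{k0})_i$ from the lemma and specializing to $k=1$ and $k=0$ reproduces exactly the numerators of the detection and false-alarm formulas.

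The step requiring genuine care is the variance in the denominator, and here the statement is best read as a deliberate approximation rather than an exact equality. Lemma \ref{lem:prob} assigns each mixture component the configuration-dependent variance $\sum_{i\in\mathcal{Z}^c_i}(\sigma^2_{k0})_i+\sum_{i\in\mathcal{Z}_i}(\sigma^2_{k1})_i+\sum_{i\in\mathcal{N}_H}(\sigma^2_{k0})_i$, whereas the Proposition factors a single common variance $\sum_{i\in\mathcal{N}}(\sigma^2_{k0})_i$ out of the sum over $Z_i$. I would justify this by observing from (\ref{D1})--(\ref{D4}) that the Byzantine signature is carried predominantly by the shift in the means (through $\Delta_k$, hence $\eta_k$ versus $\eta^\prime_k$), while the components differ in spread only through the comparatively minor non-centrality correction separating $(\sigma^2_{k1})_i$ from $(\sigma^2_{k0})_i$; replacing every component variance by the honest-state value $(\sigma^2_{k0})_i$ therefore incurs a controlled error and keeps the detection and false-alarm probabilities in the closed, $Q$-function form stated. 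I would note explicitly that this substitution is exact in the limit of negligible Byzantine variance inflation, which is the reason the denominator is frozen across the mixture; with that convention fixed, the remaining manipulations are routine.
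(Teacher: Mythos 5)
Your proposal is correct and follows essentially the same route as the paper's own proof: invoke Lemma \ref{lem:prob} to get the conditional Gaussian-mixture density of $\wedge_j^t$, integrate each component's tail beyond $\gamma_j$, and write each tail via the identity $Pr(X\geq\gamma_j)=Q\bigl((\gamma_j-\mu)/\sigma\bigr)$, with the mixture weights passing outside by linearity. Your handling of the denominator is in fact more careful than the paper's, which silently inserts the common honest-state variance $\sum_{i\in\mathcal{N}}(\sigma^2_{k0})_i$ into every component's $Q$-argument even though Lemma \ref{lem:prob} assigns each component a configuration-dependent variance; you correctly flag that the stated closed form is exact only under that variance-freezing convention.
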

\begin{proof}
	$(P_D^j)^t$ can be easily derived from calculating $Pr(D=H_1|H_1)$ from a combination of Gaussian distributions with the following means and variances, $(\mu_{10})_k=(L+\eta_k)\sigma_k^2, ~(\mu_{11})_k=(L+\eta^\prime_k)\sigma_k^2$ and $(\sigma_{10}^2)_k=2(L+2\eta_k)\sigma_k^4,~(\sigma_{11}^2)_k=2(L+2\eta^\prime_k)\sigma_k^4$ for all neighbors $k \in \mathcal{N}_{in}$ (honest and Byzantine agents $G_j$) according to Lemma \ref{lem:prob}. It is important to note that for a Gaussian distribution $Y$ with the mean $\mu$ and variance $\sigma^2$, $X=\frac{Y-\mu}{\sigma^2}$ is a standard normal distribution and $P(Y>y)=P(X>x)=Q(\frac{Y-\mu}{\sigma^2})=Q(x)$. $(P_{FA}^j)^t$ or $Pr(D=H_1|H_0)$ may be calculated in a similar manner given the means and variances, $(\mu_{00})_k=L\sigma_k^2, ~(\mu_{01})_k=L\sigma_k^2+L\tilde{h}^2_k\Delta^2_k\sigma_k^2$ and $(\sigma_{00}^2)_k=2L\sigma_k^4,~(\sigma_{01}^2)_k=2(L+2L\tilde{h}^2_k\Delta^2_k)\sigma_k^4$ for all $k \in \mathcal{N}_{in}$ (honest and Byzantine agents $G_j$). It is important to note that one may first establish a desired rate of false alarm by deciding $\gamma_j$ and then determine the detection performance. 
\end{proof}
It is important to note that the probability of detection indicates the probability that agent $G_j$ \it{detects} \normalfont the Byzantine attack which is trying to distort performance by persuading the detection unit that the entire network has reached synchronization or by forcing the honest agent to follow the falsified data, and consequently \it{decides} \normalfont against it based on the distribution of the true signal. Additionally, under (\ref{epsilon}), the probability of false alarm indicates the probability that Byzantine neighbors will succeed in coercing agent $G_j$ into mistakenly deciding that the entire multi-agent network has not reached synchronization when indeed it has, thereby fulfilling its adversarial objective to move the system from $H_0 \rightarrow H_1$. 
\begin{example}
	Consider agent $G_2$ in the event-triggered multi-agent system given in Example \ref{exm:ex1} with three neighbors. We consider the same underlying communication graph and dynamics for the entire event-triggered multi-agent system. We assume that $G_5$ is a Byzantine neighbor and $G_1$ and $G_4$ are honest neighbors for agent $G_2$. We will analyze the transient detection and false alarm probability distributions for the local test statistic $\wedge_2^t$ for agent $G_2$'s detection center and quantify the harmful effects of the Byzantine neighbor $G_5$ on the detection performance. We consider the same dynamics for the agents and the initial conditions, $y_1(0)=3.5,~y_2(0)=4,~y_3(0)=0.5,~y_4(0)=3,~y_5(0)=2$. The Byzantine agent manipulates its weight to $a_5^\prime$ where, $a_5^\prime=a_5+1$. We consider the channel gains $\tilde{h}_1=0.92,~\tilde{h}_4=0.95,~\tilde{h}_5=0.96$ and assume $\sigma^2_i=1$ for all the communication links, $i=1,..,5$. We plot the transient performance of the detection unit for a set of attack strengths $\Delta_1=0.8$, $\Delta_1=0.9$, $\Delta_1=1$, $\Delta_1=1.2$ and $\Delta_1=1.6$. In all cases, we assume the probability of attack $P_5=0.5$. Lastly, the detection interval for the detection unit is $L=15$ and $\lambda=15$ where, $\gamma_j=\sum_{k \in  \mathcal{N}_j^{in}}^{}L\sigma_k^2+\lambda$ is chosen based on the desired false alarm rate (see (\ref{epsilon})). Fig. \ref{fig:detection} depicts the transient detection performance for different attack strengths. As seen, the Byzantine neighbor can considerably harm the detection performance. Similarly, the false alarm rate for the same set-up for when the entire multi-agent has synchronized (under $H_0$) is shown in Fig. \ref{fig:false}. It is clear that the Byzantine neighbor can considerably increase the false alarm rate by appropriately selecting the attack parameters. This means that agent $G_2$ will mistakenly continue its communication with its neighbors based on the false belief that the multi-agent system has not reached synchronized. Similar to the steady-state analysis of the detection framework, this example shows that the Byzantine agents can considerably degrade the transient performance of the detection unit.
	\begin{figure}[!t]
		\centering
		\includegraphics[scale = 0.5]{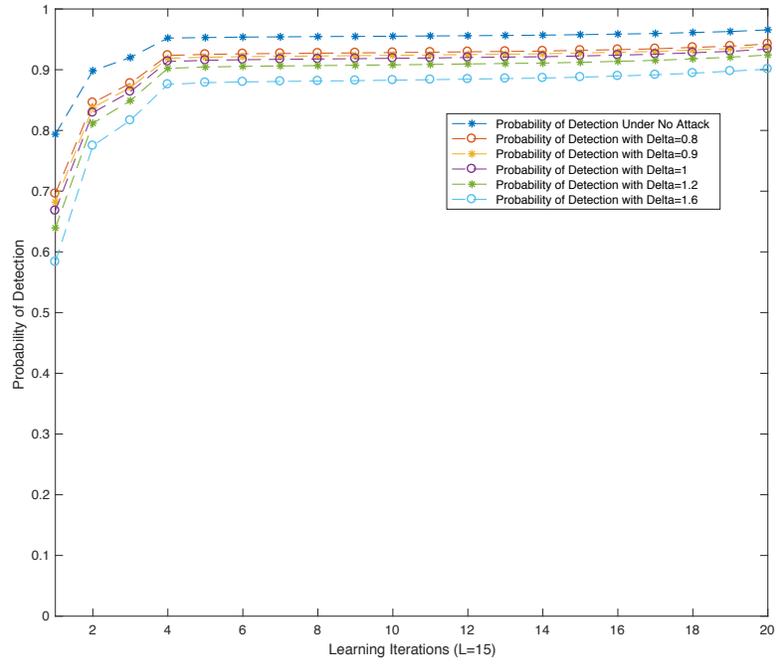}
		\caption{Probability of Detection  (Detection Interval $L=15$, Attack Parameters: $P_5=0.5$, $\Delta_5$ and $a_5^\prime=a_5+1$).}
		\label{fig:detection}
	\end{figure}
	\begin{figure}[!t]
		\centering
		\includegraphics[scale = 0.5]{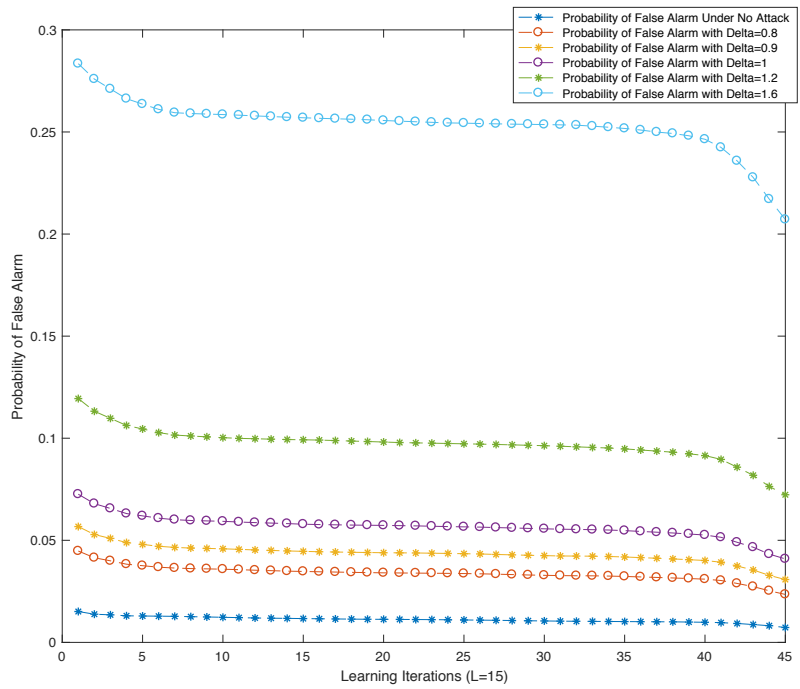}
		\caption{Probability of False Alarm (Detection Interval $L=15$, Attack Parameters: $P_5=0.5$, $\Delta_5$ and $a_5^\prime=a_5+1$).}
		\label{fig:false}
	\end{figure}
\end{example}
\subsection{Steady-State Performance Analysis of the Detection Algorithm}\label{sub:det}
Our detection platform is based on Neyman-Pearson theorem. A Neyman-Pearson based detector measures the signal-to-noise ratio for the unknown signal $y$ over a certain time-interval and detects the presence of the deterministic signal $s$ in $y$ at points, where the signal-to-noise ratio is maximized. It is known that the Neyman-Pearson theorem provides the optimal decision criterion, where the likelihood ratio is compared with a threshold $\gamma$, previously computed to minimize a given false alarm probability \cite{kay1998fundamentals}. The selection of the optimal global threshold $\gamma$ is beyond the scope of the work presented here, and we assume that $\gamma$ in (\ref{epsilon}) has already been selected based on some performance and application criteria. 

We characterize the steady-state performance of our proposed detection platform performance by examining its deflection coefficient against intelligent Byzantine attacks. The deflection coefficient of the test statistic is defined as,
\begin{align}
D(\wedge)=\frac{E[\wedge|H_1]-E[\wedge|H_0]}{E[(\wedge-E[\wedge|H_0])^2|H_0]},
\end{align} 
or the difference of the means (expectations) of the test statistic under the two independent and identically distributed hypothesis distributions, $H_0$ and $H_1$ (with the same variance), divided by the variance of the test statistic under $H_0$. The deflection coefficient formula given above can characterize the steady-state performance of the detection framework and analyze the limitations of the detection procedure by quantifying the distributions under both null and alternative hypotheses based on the number of Byzantine neighbors and attack parameters. This quantification can characterize the distance between the expectations of these two hypotheses to show a limit-case probability of correctly detecting the attacks. Since, the deflection coefficient is directly related to the area of overlapped regions between two distributions, it can efficiently characterize the decision performance in a binary hypothesis testing environment \cite{shen2010deflection,alonso2016adaptive, arka2013selective}. Moreover, the deflection coefficient can be obtained by only calculating the mean and variance from the observed data set without modeling the exact distributions. There is a direct relationship between the detection performance and positive values of the deflection coefficient. Given the event-triggered multi-agent network system design proposed in previous sections and designed according to Theorem \ref{thm:synch}, and the proposed detection framework in Section \ref{sec:Det}, here we characterize the minimum number of Byzantine neighboring agents required to make the detection coefficient for test statistics equal to zero. First, we characterize the relationship between the number of Byzantine neighbors of an honest agent and the performance of its detection unit. Second, we characterize the minimum number of Byzantine agents that can entirely blind the detection unit of a single honest agent.
\begin{theorem} \label{thm:blind}
	Consider an event-triggered multi-agent system designed according to Theorem \ref{thm:synch}. Consider that each agent $G_j$ $(j \in 1...N)$ is equipped with the detection unit proposed in Section \ref{sec:Det}. For an honest agent with $N_H$ honest and $N_B$ Byzantine neighbors, the condition for the detection unit to become entirely blinded or to make the deflection coefficient zero over the detection interval $L$ is,
	\begin{align*}
	\sum_{k=1}^{N_B} LP_k[2\tilde{h}_k\Delta_k(\mu_k-\mu_j)+\tilde{h}^2_k\Delta^2_k(\sigma_k^2-1)]=\sum_{k=1}^{N}\eta_k\sigma_k^2,
	\end{align*}
	where $\eta_k=\frac{\sum_{i=1}^{L}|\tilde{h}_k s^i_k-y_j(t_j^i)|^2}{\sigma_k^2}$, $\mu_j=\frac{1}{L}\sum_{i=1}^{L}y_j^i$ and $\mu_k=\frac{1}{L}\sum_{i=1}^{L}\tilde{h}_ky_k^i$.
\end{theorem}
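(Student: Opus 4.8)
The plan is to exploit the structure of the deflection coefficient as a ratio. Its denominator $E[(\wedge-E[\wedge|H_0])^2|H_0]$ is exactly the variance of the test statistic under $H_0$, which is a sum of (strictly positive) Gaussian variances and is therefore strictly positive and finite. Hence $D(\wedge)=0$ if and only if the numerator vanishes, i.e. $E[\wedge_j^t|H_1]=E[\wedge_j^t|H_0]$. The whole argument thus collapses to computing these two means and equating them, and the denominator never needs to be evaluated in closed form.

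First I would compute $E[\wedge_j^t|H_i]$ for $i=0,1$. Since $\wedge_j^t=\sum_{k\in\mathcal{N}_j^{in}}T_k^j$ is a sum of independent per-neighbor statistics, linearity of expectation applies. The only fact I need from Lemma \ref{lem:prob} is that the mean of a Gaussian mixture equals the probability-weighted average of its component means: an honest neighbor contributes the single-Gaussian mean $(\mu_{i0})_k$, while a Byzantine neighbor contributes $(1-P_k)(\mu_{i0})_k+P_k(\mu_{i1})_k$. Writing $E[\wedge_j^t|H_i]=\sum_{k\in\mathcal{N}_j^{in}}(\mu_{i0})_k+\sum_{k\in\mathcal{N}_j^{in_B}}P_k[(\mu_{i1})_k-(\mu_{i0})_k]$ and substituting the explicit values from (\ref{D1})--(\ref{D2}), the common $L\sigma_k^2$ terms cancel on subtraction, leaving
\begin{align*}
E[\wedge_j^t|H_1]-E[\wedge_j^t|H_0]=\sum_{k\in\mathcal{N}_j^{in}}\eta_k\sigma_k^2+\sum_{k\in\mathcal{N}_j^{in_B}}P_k[(\eta_k^\prime-\eta_k)\sigma_k^2-L\tilde{h}_k^2\Delta_k^2\sigma_k^2].
\end{align*}

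The central remaining step is to re-express $(\eta_k^\prime-\eta_k)\sigma_k^2$ in terms of the raw attack parameters. Under $H_1$ the Byzantine falsification is $\tilde{y}_k=y_k-\Delta_k$, so I would expand $|\tilde{h}_k(y_k^i-\Delta_k)-y_j^i|^2=|\tilde{h}_k y_k^i-y_j^i|^2-2\tilde{h}_k\Delta_k(\tilde{h}_k y_k^i-y_j^i)+\tilde{h}_k^2\Delta_k^2$ and sum over $i=1,\dots,L$. Using $\sum_i \tilde{h}_k y_k^i=L\mu_k$ and $\sum_i y_j^i=L\mu_j$, the cross-term collapses to $-2L\tilde{h}_k\Delta_k(\mu_k-\mu_j)$, giving $(\eta_k^\prime-\eta_k)\sigma_k^2=-2L\tilde{h}_k\Delta_k(\mu_k-\mu_j)+L\tilde{h}_k^2\Delta_k^2$. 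Substituting this into the Byzantine sum, the two $L\tilde{h}_k^2\Delta_k^2$ contributions combine into $-L\tilde{h}_k^2\Delta_k^2(\sigma_k^2-1)$, and setting the numerator to zero and rearranging yields exactly the stated blinding condition $\sum_k LP_k[2\tilde{h}_k\Delta_k(\mu_k-\mu_j)+\tilde{h}_k^2\Delta_k^2(\sigma_k^2-1)]=\sum_k\eta_k\sigma_k^2$.

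I expect the principal obstacle to be bookkeeping rather than anything conceptual, concentrated in the $\sigma_k^2$ factors. The $H_0$-side mixture-mean increment $(\mu_{01})_k-(\mu_{00})_k=L\tilde{h}_k^2\Delta_k^2\sigma_k^2$ carries an explicit $\sigma_k^2$, whereas the $L\tilde{h}_k^2\Delta_k^2$ term produced by expanding $\eta_k^\prime-\eta_k$ does not (because $\eta_k$ already absorbs a $1/\sigma_k^2$); it is precisely this mismatch that produces the $(\sigma_k^2-1)$ factor in the final identity rather than a naive cancellation, so I must resist the temptation to cancel the two $\Delta_k^2$ terms prematurely. A secondary care point is that the vandalism sign ($+\Delta_k$ under $H_0$) and the exploitation sign ($-\Delta_k$ under $H_1$) enter asymmetrically: the $H_0$ mean sees only $\Delta_k^2$ since the underlying observation is zero-mean noise, while under $H_1$ the genuine signal keeps the linear-in-$\Delta_k$ cross term alive, which is why the two hypotheses do not reduce to a single balanced expression.
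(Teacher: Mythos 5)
Your proposal is correct and follows essentially the same route as the paper: both reduce $D(\wedge)=0$ to equating $E[\wedge|H_0]=E[\wedge|H_1]$, compute the two means from the mixture structure using (\ref{D1})--(\ref{D2}), expand $\eta_k^\prime-\eta_k$ via the definitions to introduce $\mu_k$ and $\mu_j$, and rearrange to the stated condition. The only (harmless) differences are presentational: you make explicit the positivity of the $H_0$-variance denominator and skip its closed form, which the paper computes but never uses.
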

\begin{proof}
	As mentioned in Section \ref{sec:Det}, each local test statistic $T_k$ over the detection time-interval $L$ between the honest agent $G_j$ and its neighbor $G_k$ may follow a central or non-central chi-square distribution. We define, $\eta_k=\frac{\sum_{i=1}^{L}|\tilde{h}_k y^i_k-y_j^i|^2}{\sigma_k^2}$ for an honest communication from agent $G_k$ to the host agent $G_j$ over the detection interval $L$. We define, $\eta^\prime_k=\frac{\sum_{i=1}^{L}|\tilde{h}_k (y^i_k-\Delta_k)-y_j^i|^2}{\sigma_k^2}$ for a Byzantine communication from agent $G_k$ to the host agent $G_j$ over the detection interval $L$. We can see that, $\eta^\prime_k=\eta_k+\frac{\sum_{i=1}^{L}(\tilde{h}^2_k \Delta^2_k+2\tilde{h}_k\Delta_ky_j^i-2\tilde{h}^2_k\Delta_ky_k^i)}{\sigma_k^2}$. Further, the  true mean $(\mu_{kj})$ and variance $(\sigma_{kj}^2)$ under the null hypothesis $H_0$ and alternative hypothesis $H_1$ for honest communications are as follows,
	\[\mu_{kj}=\begin{cases}
	L\sigma_k^2  & \quad \text{Under $H_0$ } \\
	(L+\eta_k)\sigma_k^2 & \quad \text{Under $H_1$,}
	\end{cases} \]
	\[\sigma_{kj}^2=\begin{cases}
	2L\sigma_k^4   & \quad \text{Under $H_0$ } \\
	2(L+2\eta_k)\sigma_k^4 & \quad \text{Under $H_1$.}
	\end{cases} \]
	Above, $\eta_k \simeq 0$ is implied under $H_0$ or the hypothesis that the two agents have synchronized according to (\ref{epsilon}). For the honest agent $G_j$ with $N$ neighbors where, $N_B$ of them are Byzantine and $N_H$ of them are honest, we may have,
	\begin{align}
	&E[\wedge|H_0]=\sum_{k=1}^{N_H}L\sigma_k^2+\sum_{k=1}^{N_B} [P_k(L+L\tilde{h}^2_k\Delta^2_k)\sigma_k^2+(1-P_k)L\sigma_k^2]\\&
	E[\wedge|H_1]=\sum_{k=1}^{N_H}(L+\eta_k)\sigma_k^2+\sum_{k=1}^{N_B} [P_k((L+\eta^\prime_k)\sigma_k^2)+(1-P_k)(L+\eta_k)\sigma_k^2],\\&
	E[(\wedge-E[\wedge|H_0])^2|H_0]=\sum_{k=1}^{N_H}2L\sigma_k^4+\sum_{k=1}^{N_B} [P_kL^2\tilde{h}^4_k\Delta_k^4\sigma_k^4-P_k^2L^2\tilde{h}^4_k\Delta_k^4\sigma_k^4+2L\sigma_k^4].
	\end{align}
	Utilizing the above definitions into $E[\wedge|H_1]-E[\wedge|H_0]$ and simplifying further we have,
	\begin{align*}
	E[\wedge|H_1]-E[\wedge|H_0]=\sum_{k=1}^{N_H}\eta_k\sigma_k^2+\sum_{k=1}^{N_B} [P_k[\frac{\sum_{i=1}^{L}(\tilde{h}^2_k\Delta^2_k+2\tilde{h}_k\Delta_ky_j^i-2\tilde{h}^2_k\Delta_ky_k^i)}{\sigma_k^2}-L\tilde{h}^2_k\Delta^2_k]\sigma_k^2+\eta_k\sigma_k^2]
	\end{align*}
	
	We denote the means of the output signals of agents $G_j$ and $G_k$ over the detection time-interval $L$ as $\mu_j=\frac{1}{L}\sum_{i=1}^{L}y_j^i$ and $\mu_k=\frac{1}{L}\sum_{i=1}^{L}\tilde{h}_ky_k^i$. For the Byzantine agents to be able to blind the detection unit ($D(\wedge)=0$), they need to enforce $E[\wedge|H_0]=E[\wedge|H_1]$. This means that,
	\begin{align}
	\sum_{k=1}^{N_B} LP_k[2\tilde{h}_k\Delta_k(\mu_k-\mu_j)+\tilde{h}^2_k\Delta^2_k(\sigma_k^2-1)]=\sum_{k=1}^{N}\eta_k\sigma_k^2,
	\end{align}
	where $\eta_k=\frac{\sum_{i=1}^{L}|\tilde{h}_k s^i_k-y_j(t_j^i)|^2}{\sigma_k^2}$. This quantifies the steady-state effects of the number of neighboring Byzantine agents, attack strengths and attack probabilities on the detection unit of an honest agent and also proves the theorem.
\end{proof}
If we assume that $\Delta_k=\Delta$, $P_k=P$, and $\tilde{h}_k=\tilde{h}$ for all $k=1,...N_B$ and $\eta_k= \eta$ and $\sigma_k=\sigma$ for all $k=1,...N$, and quantify the distances between the means of Byzantine agents' outputs and the honest agent's output, $\mu_k-\mu_j=d_k=D$ for $k=1,...N_B$, then the condition given in Theorem \ref{thm:blind} simplifies to $\frac{N_B}{N}=\frac{\eta\sigma^2}{LP[2\tilde{h}\Delta D + \tilde{h}^2\Delta^2(\sigma^2-1)]}$, where $N$ represents the number of neighbors for agent $G_j$. This relation shows that an intelligent Byzantine attack can blind the entire detection framework by an appropriate selection of $P$ and $\Delta$. This also means that blinding the detection framework is still possible even in cases that the Byzantine nodes are in the minority in the neighborhood of the honest agent $G_j$. Moreover, this reveals the trade-off that if the Byzantine agents are in the minority in the neighborhood of agent $G_j$, then they will need to select larger attack parameters ($P$ and $\Delta$), in order to blind the detection unit, this, however, in return makes the job of detection easier for the honest agents. For the honest node, this shows the importance of quick detection of rogue agents. For the Byzantine agents, this shows the importance of quickly occupying the neighborhood of $G_j$ in order to maintain their inconspicuous state and fulfill their adversarial objectives. Lastly, this relationship shows the importance of the distance between the means of outputs of Byzantine agents and the honest agent. In case of a Byzantine attack, the larger the distances between the means of outputs of the occupied agents and the mean of the output of the honest agent are, the easier it is for the Byzantine agents to degrade the performance of the multi-agent systems. In other words, the Byzantine agents will require to exert less effort (smaller values for $\Delta$ and $P$) to blind the detection unit. This also helps the Byzantine agents to stay hidden. However, the job of selecting attack parameters for Byzantine agents becomes more complicated as the multi-agent system gets closer to the synchronized state. In other words, this relation also reveals the trade-off between degrading the performance by the Byzantine attack and the desire to stay hidden from the detection unit.
\begin{example}\label{exm:ex2}
	Consider agent $G_2$ in the event-triggered multi-agent system given in Example \ref{exm:ex1} with three neighbors. We consider the same underlying communication graph and initial conditions for the entire event-triggered multi-agent system. We assume that $G_4$ is a Byzantine neighbor and $G_1$ and $G_5$ are honest neighbors of agent $G_2$. We consider the following channel gains for the communication links between agents $G_2$ and its neighbors: $\tilde{h}_1=0.8,~\tilde{h}_4=0.90,~\tilde{h}_5=0.72$ and assume $\sigma^2_i=1.2$ for all the noise in all the communication links i.e. $i=1,...,5$. The detection units rely on the detection time-interval $L=20$. The deflection coefficient for agent $G_2$'s detection unit is depicted in Fig. \ref{fig:DCoe}. The contour plot (Fig. \ref{fig:contour}) shows the underneath of the three-dimensional shaded surface in Fig. \ref{fig:DCoe} and clarifies the relationship amongst the attack strength parameter $\Delta_4$, the attack probability $P_4$ and the deflection coefficient of the detection unit located on agent $G_2$ i.e. $D_2(\wedge)$. The larger values of $\Delta_4$ in general, makes the job of detection easier, if $P_4$ is kept small enough. For the Byzantine agent to be able to blind the detection unit, a certain balance between $\Delta_4$ and $P_4$ is required. Lastly, Fig \ref{fig:DCoe} shows that it is possible for a single Byzantine agent to blind the agent $G_2$'s detection unit, even though the majority of $G_2$'s neighbors are honest nodes. Indeed, the fact that $\frac{1}{3}$ of the $G_2$'s neighborhood is occupied by a Byzantine agent has severely degraded the detection performance. 
	\begin{figure}[!t]
		\centering
		\includegraphics[scale = 0.5]{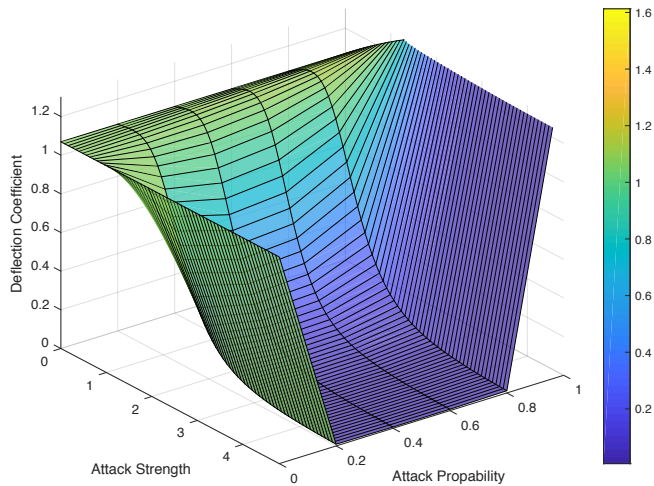}
		\caption{Deflection Coefficient for agent $G_2$ in Example \ref{exm:ex1} as a function of Attack Probability $P$ and Attack Strength $\Delta$.}
		\label{fig:DCoe}
	\end{figure}
	\begin{figure}[!t]
		\centering
		\includegraphics[scale = 0.5]{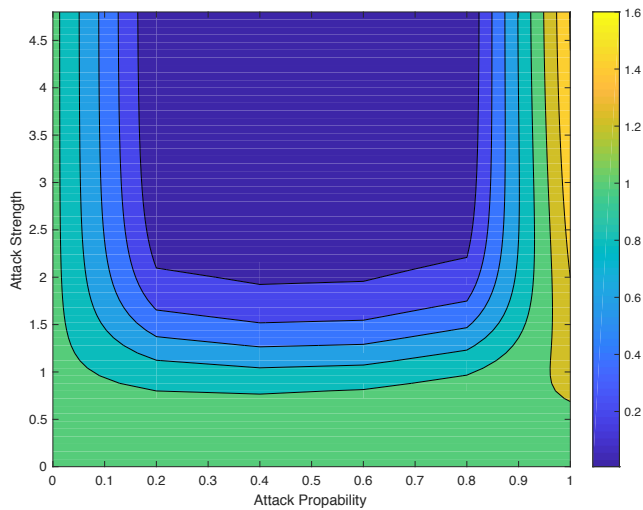}
		\caption{Contour plot of the Deflection Coefficient for agent $G_2$ as a function of Attack Probability $P$ and Attack Strength $\Delta$.}
		\label{fig:contour}
	\end{figure}
\end{example}
Next, we will propose two different learning-based control approaches to mitigate the negative effects mentioned above, imposed by Byzantine agents on the multi-agent system.
\section{A Learning-Based Control Method for Mitigating the Effects of the Byzantine Attack}\label{sec:det}
\subsection{Distributed Weight Assignments for Mitigating the Effects of Weight Manipulations}\label{sub:mitw}
In this subsection, we propose a robust distributed weight design that will achieve synchronization and in the case of an attack mitigate the adversarial effects of Byzantine agents. First, we will deal with the problem of weight manipulation. Common in the literature, it is assumed that the feedback control framework given in (\ref{inp}) is designed by the agent itself. This will leave the entire event-triggered multi-agent framework extremely vulnerable to adversarial attempts such as node capture or Byzantine attacks as the nodes themselves can independently have a great influence on the synchronization efforts (in our case, this great influence was quantified in Subsection \ref{sub:eff}). Here, we propose a synchronization algorithm in which the weights for the feedback control for agent $G_j$ are assigned by its neighbors $G_k$, where $k \in \mathcal{N}_j^{in}$. In other words, with sending its first information, the neighbor $G_k$ also sends its desired feedback weight which will be used to initiate the feedback control of agent $G_j$ in order to reach synchronization. Under this framework, we assume that each agent $G_j$ is aware of its $d_{in}(j)$ (defined in Section \ref{sec:gra}), similar to before the communication framework is balanced ($d_{in}(j)=d_{out}(j)$) and that each agent $G_j$ has only authority over designing its own triggering condition by selecting its design parameter $\delta_j$ \textemdash feedback weights are assigned by neighbors. As a result, a Byzantine agent, instead of being able to diverge the entire behavior of an overtaken agent in the event-triggered multi-agent network system and consequently mislead the entire multi-agent system, is only able to partly distract the proper behavior of its neighboring agents. An honest agent will only be taken over entirely, if the majority of its neighbors are Byzantine. This is highly unlikely in the presence of a detection framework. This will extremely lighten the burden of the mitigation process and improve the overall performance. 
\begin{theorem}\label{thm:synchAss}
Consider the event-triggered multi-agent network system described in Section \ref{sec:prob}, where each sub-system $G_j$ is output passive with the output passivity index $\rho_j$ and is controlled by the input given in (\ref{inp}). Consider that the feedback weights in (\ref{inp}) for each sub-system $G_j$  are assigned by its neighbors $G_k$, where $k \in \mathcal{N}_j^{in}$. If the underlying connected communication graph resulting from weight assignments is balanced, the communication time-delays and disturbances are negligible, and the communication attempts amongst all agents $G_j$ where $j=1,..., N$, are governed by the triggering conditions,
	\begin{align*}
	&||e_j(t)||_2^2 > \delta_j||y_j(t)||_2^2,
	\end{align*}
	where the design parameters $\delta_j$ are chosen such that,
	\begin{align*}
	0 < \delta_j \leq  \frac{\frac{2}{|\mathcal{N}_j^{in}|}(\lambda(\tilde{G})+\rho_j)-\frac{1}{ \alpha}-\frac{1}{ \beta}}{\alpha+\beta},
	\end{align*}
	where $\alpha>0$ and $\beta>0$ are design variables and $\lambda(\tilde{G})$ is the connectivity of the underlying communication graph, then the entire event-triggered multi-agent network system achieves output synchronization asymptotically.
\end{theorem}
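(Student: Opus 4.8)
The plan is to follow the Lyapunov argument of Theorem \ref{thm:synch} essentially verbatim, observing that the synchronization analysis depends only on the topology and balancedness of the \emph{resulting} communication graph $\tilde{G}$, not on which agent assigned each feedback weight. First I would note that, regardless of whether the gain $a_k$ on the edge from $G_k$ to $G_j$ is chosen by $G_j$ itself or by its neighbor $G_k$, the closed-loop input retains the form (\ref{inp}), so each agent's dissipation inequality
\begin{align*}
\dot{V}_j \leq \sum_{k \in \mathcal{N}_j^{in}} a_k[(y_k(t)-y_j(t))-(e_k(t)-e_j(t))]^T y_j(t) - \rho_j y_j^T(t) y_j(t)
\end{align*}
is unchanged. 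Summing over $j$, introducing the error stack $E=[e_1^T,\dots,e_N^T]^T$ and the Laplacian $L$ of $\tilde{G}$, gives $\dot{S} \leq -Y^TL^TY + Y^TL^TE - \sum_j \rho_j y_j^T y_j$, exactly as in (\ref{S}).

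Next I would reuse the algebraic-connectivity bound $Y^TL^TY \geq \lambda(\tilde{G})Y^TY$ from (\ref{Y}) and bound the cross term $Y^TL^TE$ via the two Young-type inequalities with parameters $\alpha,\beta>0$, reproducing the intermediate estimate that leads to (\ref{error}). The single place where the weight-assignment model enters is the balancedness identity
\begin{align*}
\sum_{j=1}^{N}\sum_{k \in \mathcal{N}_j^{in}} a_k \frac{y_k^T(t)y_k(t)}{2\beta} = \sum_{j=1}^{N} |\mathcal{N}_j^{in}| \frac{y_j^T(t)y_j(t)}{2\beta}:
\end{align*}
re-indexing the left-hand side by the source node converts it into a sum weighted by out-degrees $|\mathcal{N}_k^{out}|$, and the hypothesis that $\tilde{G}$ is balanced ($|\mathcal{N}_k^{out}|=|\mathcal{N}_k^{in}|$) lets me identify it with the right-hand side after relabeling. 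This is the crux of the theorem, and its entire content is that the graph produced by \emph{neighbor}-assigned weights is still assumed balanced and connected, so this identity remains available with $\tilde{G}$ in place of $G$.

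Having thus reduced $\dot{S}$ to $-Y_\Delta^T \Theta Y_\Delta$ with the same diagonal matrix $\Theta$ (now carrying $\lambda(\tilde{G})$ on the diagonal), I would invoke the design condition on each $\delta_j$ to force $\Theta \geq 0$, observe that $S \geq 0$ and $\dot{S} \leq 0$ for all $t\geq 0$, and close the argument by Barbalat's Lemma to conclude $\dot{S}\to 0$, hence $Y_\Delta \to 0$ and asymptotic output synchronization. I expect no genuine obstacle beyond confirming that the re-indexed balancedness identity survives the change of weight authority; every other step is a literal transcription of the proof of Theorem \ref{thm:synch} with $G$ replaced by $\tilde{G}$, which is why the mitigation scheme inherits the original guarantee.
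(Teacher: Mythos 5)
Your proposal is correct and takes essentially the same approach as the paper: the paper's proof likewise reduces to Theorem \ref{thm:synch} by defining the modified Laplacian $\tilde{L}=\tilde{D}-\tilde{A}$ for the neighbor-assigned weights and noting that, under the balancedness assumption ($d_{in}(j)=d_{out}(j)$ for all $j$), the new framework is representable exactly as the old one with $\lambda(\tilde{G})$ in place of $\lambda(G)$, after which the Lyapunov/Barbalat argument carries over verbatim. Your version is in fact more explicit than the paper's about the one step where balancedness and the weight-assignment model actually matter, namely the re-indexing of $\sum_{j}\sum_{k\in\mathcal{N}_j^{in}}a_k\,y_k^T(t)y_k(t)$ by source node to obtain $\sum_{j}|\mathcal{N}_j^{in}|\,y_j^T(t)y_j(t)$.
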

\begin{proof}
Proof is similar to Theorem \ref{thm:synch}, as we assume that all agents are honest and initiate the neighbors' feedback control based on their respective $d_{in}$ according to the assumption that the resulting underlying communication graph is balanced. One can define a modified Laplacian matrix for the communication graph $\tilde{L}=\tilde{D}-\tilde{A}$, where $\tilde{D}$ is an $N \times N$ diagonal matrix with $\tilde{d}_{j,j} = d_{in}(j)$, representing the sum of assigned weights to agent $G_j$ and $\tilde{A}$ is the $N \times N$ adjacency matrix with $\tilde{a}_{i,j} \neq 0$ representing the gain assigned by agent $G_i$ to agent $G_j$, and $\tilde{a}_{i,j}=0$ when there is no communication link between two agents $G_i$ and $G_j$. Since the communication link is balanced ($d_{in}(j)=d_{out}(j)$ for $\forall j=1,...,N$), $\tilde{L}=L$, similarly, $\lambda(\tilde{G})=\lambda(G)$ then one can simply represent the new framework based on the previous one and show synchronization for the entire event-triggered multi-agent network system following the same steps presented in Theorem \ref{thm:synch}.
\end{proof}
Next, we analyze the effects of a Byzantine attack where a Byzantine agent $G_k$ will disturb the balanced communication graph through weight manipulation by assigning $a_k^B=a_k + \omega_k$ to its neighbors, where $\omega_k>0$. For analytical tractability, we do not consider the data falsification and will show that the new approach will mitigate the negative effects of weight manipulation. In the next subsection, we will discuss the mitigation process for the data falsification part of the Byzantine attack. Similar to the previous sections, we assume that amongst the $N$ agents, there are $N_B$ Byzantine nodes with the attack model described in Section \ref{sec:Byz} and $N_H$ honest nodes ($N_H+N_B=N$). $\mathcal{N}_H^{}$ and $\mathcal{N}_B^{}$ represent the set of honest and Byzantine agents, respectively. We represent the honest and Byzantine neighboring agents for $G_j$ by $\mathcal{N}_j^{in_H}$ and $\mathcal{N}_j^{in_B}$ ($\mathcal{N}_j^{in_H} \cap \mathcal{N}_j^{in_B}=\emptyset$, $\mathcal{N}_j^{in_H} \cup \mathcal{N}_j^{in_B}=\mathcal{N}_j^{in}$). $|\mathcal{N}_j^{in}|$ represents the same cardinality definition as given in Theorem \ref{thm:synch}. The set of all Byzantine agents is represented by $\mathcal{N}^{B}$ and the set of all honest agents is represented by $\mathcal{N}^{H}$. It is important to note that for the case where the feedback weights are assigned by the neighbors, the Byzantine neighbor $G_k^B$ assigns $a_k^B=a_k + \omega_k$ to the feedback control for agent $G_j$, otherwise, $a_k^H=a_k$ is assigned. The Lyapunov storage function for the entire event-triggered multi-agent network system becomes,
\begin{align*}
	&\dot{S}= \sum_{j=1}^{N} \dot{V}_j \leq \sum_{j=1}^{N} \sum_{k \in  \mathcal{N}_j^{in}}^{} a_k [(y_k(t)-y_j(t))-(e_k(t)-e_j(t))]^Ty_j(t)\\&+\sum_{j=1}^{N} \sum_{k \in  \mathcal{N}_j^{in_B}}^{} \omega_k [(y_k(t)-y_j(t))-(e_k(t)-e_j(t))]^Ty_j(t)-\sum_{j=1}^{N}\rho_jy^T_j(t)y_j(t).
\end{align*}
It is important to note that $\omega_k=0$ for honest neighbors. First, it can be shown that,
\begin{align*}
	&\sum_{j=1}^{N} \sum_{k \in  \mathcal{N}_j^{in_B}}^{} \omega_k (y_k(t)-y_j(t))^Ty_j(t)
	\leq\sum_{j=1}^{N} \sum_{k \in  \mathcal{N}_j^{in_B}}^{}\frac{\omega_k y^T_k(t)y_k(t)}{4}. \numberthis \label{R1}
\end{align*}
We follow the same approach as before and end up with, 
\begin{align*}
	&\dot{S}= \sum_{j=1}^{N} \dot{V}_j \leq -Y^TL^TY+Y^TL^{ T}E-\sum_{j=1}^{N}\rho_jy^T_j(t)y_j(t)\\&~~+\sum_{j=1}^{N} \sum_{k \in  \mathcal{N}_j^{in_B}}^{}\frac{\omega_k y^T_k(t)y_k(t)}{4}-\sum_{j=1}^{N} \sum_{k \in  \mathcal{N}_j^{in_B}}^{} \omega_k [(e_k(t)-e_j(t))]^Ty_j(t)\\& ~~=-Y^TL^TY+Y^TL^{\prime^ T}E-\sum_{j=1}^{N}\rho_jy^T_j(t)y_j(t)+\sum_{j=1}^{N} \sum_{k \in  \mathcal{N}_j^{in_B}}^{}\frac{\omega_k y^T_k(t)y_k(t)}{4}, \numberthis \label{R2}
\end{align*}
where $\alpha$ and $\beta$ are the same parameters as given in Theorem \ref{thm:synch}. We denote, $|\mathcal{W}_j|$ as the sum of the weight manipulations that were assigned to the agent $G_j$ from its Byzantine neighbors. $L^{\prime}$ is the Laplacian matrix of the new underlying communication graph consisting of $a^\prime_k$'s and is defined as,
\[[L^\prime]_{j,i} = \begin{cases}
\sum_{k \in  \mathcal{N}_j^{in}}^{} a_k^\prime  & \quad \text{if } j=i \\
-a_k^\prime  & \quad \text{if there is an arc from $G_i$ to $G_j$ with the gain $a_k^\prime$,}
\end{cases} \]
where $a^\prime_k$ are defined as before. We may follow the same steps as given in Theorem \ref{thm:synch}, and get to the following,
\begin{align*}
	&\dot{S}= \sum_{j=1}^{N} \dot{V}_j \leq -Y^TL^TY+\sum_{j=1}^{N} (|\mathcal{N}_j^{in}|+|\mathcal{W}_j|) [\frac{(\alpha+\beta) \delta_j}{2} + \frac{1}{2 \alpha}]y_j^T(t)y_j(t)\\&~~+\sum_{j=1}^{N}\sum_{k \in  \mathcal{N}_j^{in}}^{} (a_k+\omega_k) [\frac{y_k^T(t)y_k(t)}{2 \beta}]-\sum_{j=1}^{N}\rho_jy^T_j(t)y_j(t)+\sum_{j=1}^{N} \sum_{k \in  \mathcal{N}_j^{in_B}}^{}\frac{\omega_k y^T_k(t)y_k(t)}{4}\\&~~\leq -\lambda(G)Y^TY +\sum_{j=1}^{N} (|\mathcal{N}_j^{in}|+|\mathcal{W}_j|) [\frac{(\alpha+\beta) \delta_j}{2} + \frac{1}{2 \alpha}]y_j^T(t)y_j(t)\\&~~+\sum_{j=1}^{N}\sum_{k \in  \mathcal{N}_j^{in}}^{} (a_k+\omega_k) [\frac{y_k^T(t)y_k(t)}{2 \beta}]-\sum_{j=1}^{N}\rho_jy^T_j(t)y_j(t)+\sum_{j=1}^{N} \sum_{k \in  \mathcal{N}_j^{in_B}}^{}\frac{\omega_k y^T_k(t)y_k(t)}{4}, \numberthis \label{R3}
\end{align*}
We introduce the same square diagonal matrix $\Theta \in R^{N\times N}$, where,
\[ [\Theta]_{j,i} = \begin{cases}
+\lambda(G)+\rho_j-|\mathcal{N}_j^{in}| [\frac{(\alpha+\beta) \delta_j}{2} + \frac{1}{2 \alpha}+\frac{1}{2 \beta}]     & \quad \text{if } j=i \\
0  & \quad \text{otherwise.}
\end{cases}\]
Given (\ref{R3}) and $\Theta$, we have,
\begin{align*} 
	&\dot{S}\leq - Y^T_\Delta \Theta Y_\Delta +\sum_{j=1}^{N} |\mathcal{W}_j| [\frac{(\alpha+\beta) \delta_j}{2}+\frac{1}{2 \alpha}]y_j^T(t)y_j(t)
	+\sum_{j=1}^{N} \sum_{k \in  \mathcal{N}_j^{in_B}}^{}\omega_k(\frac{1}{4}+\frac{1}{2\beta}) y^T_k(t)y_k(t). \numberthis \label{R4}
\end{align*}
Given the assumption that the multi-agent system was initially designed according to Theorem \ref{thm:synch}, we have $\Theta>0$. After simplifying, and given $\dot{S} \rightarrow 0$ as $t\rightarrow \infty$, we have, 
\begin{align*} 
	&0< Y^T_\Delta \Theta Y_\Delta \leq \sum_{j=1}^{N} |\mathcal{W}_j| [\frac{(\alpha+\beta) \delta_j}{2}+\frac{1}{2 \alpha}]y_j^T(t)y_j(t)
	+\sum_{j=1}^{N} \sum_{k \in  \mathcal{N}_j^{in_B}}^{}\omega_k(\frac{1}{4}+\frac{1}{2\beta}) y^T_k(t)y_k(t). \numberthis \label{R4}
\end{align*}
Comparing (\ref{R4}) with (\ref{relation7}), one can see that assignments of the agents' weights by their neighbors can greatly decrease the magnitude of the upper-bound on the deviations from the synchronized state. Additionally, by not allowing the Byzantine agents design their own weights, we are diversifying the negative effects caused by the weight manipulations. We are dividing the weight manipulation attack into two parts. One part is still related to the Byzantine agents and their outputs and cannot be mitigated without a direct access to the corrupt agents (second part of the summation given in (\ref{R4})). The first part of the upper-bound shown in (\ref{R4}), however, may be mitigated by the honest agents given their passivity indices and their design of the triggering conditions. This greatly helps with the synchronization process and lowers the upper-bound of the deviations. Moreover, if a Byzantine agent only has honest neighbors, through this mitigation process, the output of the Byzantine agent will eventually reach synchronization as the information and weights received by the Byzantine agent from its honest neighbors will follow the requirements given in Theorem \ref{thm:synchAss}. Consequently, the honest agents will be able to entirely mitigate the negative effects of weight manipulations \textendash This will be illustrated in Example \ref{exm:ex4}. This combined with the detection framework presented in the next section for dealing with data falsifications, can completely eradicate the negative effects of the Byzantine attack. We will explain this in more details next.

\bf{Mitigating the effects of Weight Manipulation by utilizing the Passivity Properties of Agents}:\normalfont ~ As it was characterized before passivity can ameliorate the effects of a Byzantine attack. For all agents $G_j$ where $j=1,...,N$, we can represent the passivity indices with, $\rho_j=\rho_j^\prime+\rho^\Delta_j>0$, where $\rho_j^\prime>0$ and $\rho^\Delta_j>0$. We assume that the triggering conditions are designed according to Theorem \ref{thm:synchAss}, where,  
\begin{align*}
0 < \delta_j \leq  \frac{\frac{2}{|\mathcal{N}_j^{in}|}(\lambda(\tilde{G})+\rho_j)-\frac{1}{ \alpha}-\frac{1}{ \beta}}{\alpha+\beta},
\end{align*}
Simplifying this relation based on $\rho_j$, and annotating $\rho_j^\prime=\lambda(G)-|\mathcal{N}_j^{in}| [\frac{(\alpha+\beta) \delta_j}{2} + \frac{1}{2 \alpha}+\frac{1}{2 \beta}]$, we may have, $\rho_j-\rho^\Delta_j=\rho_j^\prime$. As a result, if the triggering conditions are designed such that,
\begin{align*}
\rho^\Delta_j> \sum_{k \in  \mathcal{N}_j^{in_B}}^{} \omega_k [\frac{(\alpha+\beta) \delta_j}{2}+\frac{1}{2 \alpha}],
\end{align*}
where $\omega_k$'s are weight manipulations caused by the Byzantine neighbors, then the effects of weight manipulations committed by the Byzantine agents are assuaged. To extrapolate this result to the entire event-triggered multi-agent system, we assume that the triggering conditions are designed such that the above relation holds for each agent, we introduce the positive definite diagonal $N \times N$ matrix  $\Theta^\Delta \in R^{N\times N}$, where,
\[ [\Theta^\Delta]_{j,i} = \begin{cases}
\rho^\Delta_j-\sum_{k \in  \mathcal{N}_j^{in_B}}^{} \omega_k [\frac{(\alpha+\beta) \delta_j}{2}+\frac{1}{2 \alpha}]>0  & \quad \text{if } j=i \\
0  & \quad \text{otherwise.}
\end{cases}\]
As a result, (\ref{R4}) becomes,
\begin{align*} 
&0< Y^T_\Delta \Theta Y_\Delta \leq -Y^T \Theta^{\Delta} Y 
+\sum_{j=1}^{N} \sum_{k \in  \mathcal{N}_j^{in_B}}^{}\omega_k(\frac{1}{4}+\frac{1}{2\beta}) y^T_k(t)y_k(t). \numberthis \label{R5}
\end{align*}
Defining the positive definite matrix $\Theta^\prime=\Theta + \Theta^{\Delta}$, and simplifying further we have, 
\begin{align*} 
&0< Y^T_\Delta \Theta^\prime Y_\Delta \leq 
\sum_{j=1}^{N} \sum_{k \in  \mathcal{N}_j^{in_B}}^{}\omega_k(\frac{1}{4}+\frac{1}{2\beta}) y^T_k(t)y_k(t). \numberthis \label{R6}
\end{align*}
By comparing (\ref{R6}) with (\ref{R4}) and consequently (\ref{relation6}), one can see that the new method of distributed weight assignments in conjunction with the utilization of the passivity qualities of sub-systems can greatly mitigate the effects of a Byzantine attack's weight manipulations. It is important to note that by giving the authority to the honest agents to be able to adjust their triggering conditions according to their assessment of the magnitude of weight manipulations committed by their neighboring Byzantine agents, one can entirely mitigate this part of the Byzantine attack. This is done based on the fact that the honest agent $G_j$ is aware of its $d_{in}(j)$ and can estimate the weight manipulations by observing the difference between its $d_{in}(j)$ and the actual weights assigned to $G_j$ by its neighbors. Moreover, by decreasing the magnitude of the design variable $\delta_j$ (shortening the triggering intervals - increasing the communication rate), $G_j$ may increase $\rho^\Delta_j$ and compensate for the negative effects of weight manipulations. This will entirely eradicate the negative effects of weight manipulations committed by isolated Byzantine agents (with no Byzantine neighbors). However, in order to mitigate the negative effects of weight manipulations in cases where the Byzantine agents have Byzantine neighbors will require further attention. The mitigation method offered in the next section combined with the detection framework will attempt to further mitigate these negative effects. Lastly, one can initiate the design of the event-triggered multi-agent system by selecting smaller values for $\delta_j$, $j=1,...,N$  (a more conservative event-triggered design). This will generally result in a more resilient event-triggered multi-agent network system against Byzantine attacks.
\begin{example}\label{exm:ex4}
We consider an event-triggered multi-agent network system consisting of four agents ($i=1,...,4$) with the underlying balanced communication topology given in Fig. \ref{fig:ex4}. We assume the following dynamics for agents, 
	\[ G_{i} = \begin{cases}
	\dot{x}_i(t)= - c_i x_i(t) + u_i(t)   \\
	y_i(t)=x_i(t), & \\
	\end{cases} \]
with $c_1=1.2, c_2=1.8, c_3=2.6, c_4=0.80$ and initial conditions, $y_1(0)=2,~y_2(0)=-10,~y_3(0)=1$, and $y_4(0)=-2$. One can verify that all agents are dissipative with the storage function $V_i(x)=\frac{1}{2}x_i^T(t)x_i(t)$. This results in output passivity indices $\rho_1=1.2,~\rho_2=1.8,~\rho_3=2.6,~\rho_4=0.8$ for the agents. The Laplacian matrix of the underlying communication graph amongst agents before Byzantine attack is balances as follows,
\begin{align*}
	L=\begin{bmatrix} 
			1 & -1  &  0 &  0 \\
			0 &  2  & -2 &  0 \\
			0 &  0  &  2 & -2 \\
		   -1 & -1  &  0 &  2
	\end{bmatrix}.
\end{align*}
with the connectivity measure, $\lambda(G)=2$. We assume that the event-triggered multi-agent network system is designed based on Theorem \ref{thm:synchAss} with the following triggering conditions,
	\begin{align*}
		&||e_1(t)||_2^2>0.80||y_1(t)||_2^2,\\&
		||e_2(t)||_2^2>0.64||y_2(t)||_2^2,\\&
		||e_3(t)||_2^2>0.60||y_3(t)||_2^2,\\&
		||e_4(t)||_2^2>0.35||y_4(t)||_2^2,
	\end{align*}
by selecting $\alpha_i=1,~\beta_i=1$ for $i=1,...,4$. We assume $G_2$ and $G_4$ are Byzantine agents and instead of assigning correct weights $a_2=1$ and $a_4=2$ to agents $G_1$ and $G_3$, they assign $a_2+\omega_2$ and  $a_4+\omega_4$ to agents $G_1$ and $G_3$ where $\omega_2=2.5$ and $\omega_4=2$. At time $t=0.4s$, in order to mitigate the attack, honest agents $G_1$ and $G_3$ increase $\rho^\Delta_1$ and $\rho^\Delta_3$, and their communication rate by shortening their triggering intervals. They alter their triggering parameters from $\delta_1=0.80$ and $\delta_3=0.60$ to $\delta_1=0.40$ and $\delta_3=0.15$. Fig.\ref{fig:ex4out} shows that the system synchronizes as a consequence of this mitigation attempt. One can see clearly in Fig.\ref{fig:ex4out} that due to the Byzantine attack the agents diverge at first and it is only after the honest agents mitigate the attack by following the steps given in Sub-Section \ref{sub:mitw} that the multi-agent system takes some corrective steps and eventually synchronizes. 
\begin{figure}[!t]
	\centering
	\includegraphics[scale = 0.7]{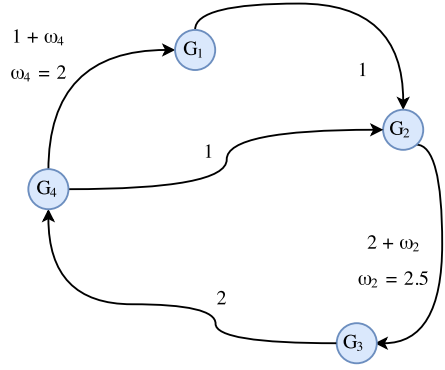}
	\caption{The Underlying Communication Graph for the Multi-Agent System Presented in Example 4 (Attack Parameters: $\omega_2=2.5$ and $\omega_4=2$).}
	\label{fig:ex4}
\end{figure}
\begin{figure}[!t]
	\centering
	\includegraphics[scale = 0.5]{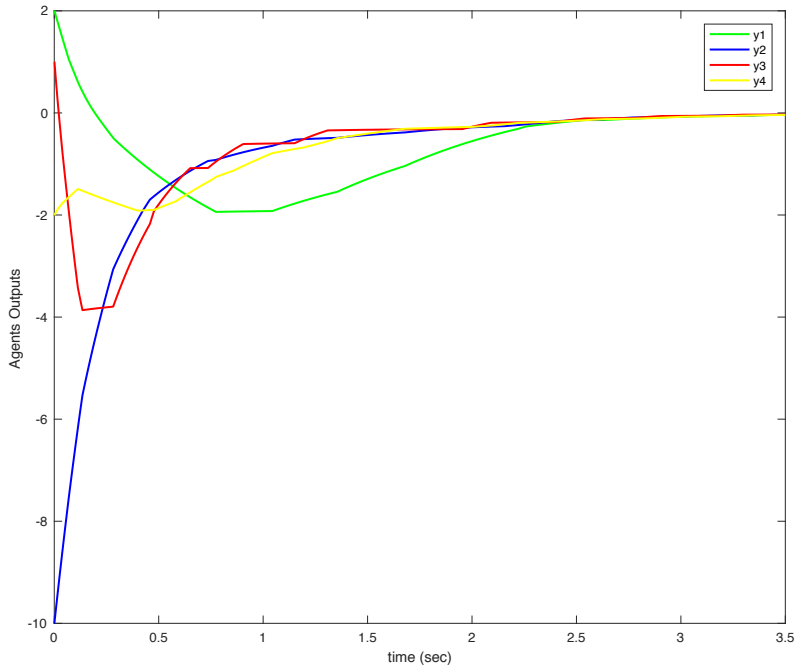}
	\caption{The Outputs of the Multi-Agent System in the Presence of Weight-Manipulation Byzantine Attack (Attack Parameters: $\omega_2=2.5$ and $\omega_4=2$) after the Mitigation Process.}
	\label{fig:ex4out}
\end{figure}
\end{example}
\subsection{A Learning-Based Distributed Algorithm for Mitigating the Effects of Data Falsification}\label{sub:learn}
\bf{Identifying the Byzantine Agents}:\normalfont ~ Here, we propose an algorithm based on which each agent $G_j$ is able to identify each of its neighbors $G_k$, where $k \in \mathcal{N}_j^{in}$ as an honest or Byzantine neighbor. The identification of each neighboring agent is done in order to realize whether the receiving information is trustworthy or not. This is necessary for the next step of the mitigation process. We already know that based on a sufficiently large number of detection data-points of length $L$ coming from the neighbor agent $G_k$, one can postulate that if the neighbor agent $G_k$ is honest, then the data points will follow a normal distribution conditioned on the hypothesis $H_i (i=0,1)$, namely $f_{PDF}(T_k|H_i)=\mathcal{N}((\mu_{i0})_k,(\sigma_{i0}^2)_k)$ where $i=0,1$ and $k \in \mathcal{N}_j^{in}$. The exact form of $(\mu_{i0})_k$ and $(\sigma_{i0}^2)_k$ are given in (\ref{D1}), (\ref{D2}), (\ref{D3}) and (\ref{D4}), and the test statistics $T_k$ is defined in Sub-section \ref{sub:transdet}. $f_{PDF}(T_k|H_i)$ is the probability density function (PDF) of test statistics under each hypothesis $H_i$, $i=0,1$. Similarly, if the neighbor agent $G_k$ is Byzantine, then the data coming from this neighbor is a Gaussian mixture from $\mathcal{N}((\mu_{i0})_k,(\sigma_{i0}^2)_k)$ with the probability of $1-P_k$ and $\mathcal{N}((\mu_{i1})_k,(\sigma_{i1}^2)_k)$ with the probability of $P_k$, where $i=0,1$ and $k \in \mathcal{N}_j^{in}$. As a result, the decision making process will be based on a hypothesis testing framework, where it is decided that the neighboring agent $G_k$ is honest under the hypothesis $H_i$, i.e. $(Dec^i_k)_0$, if the receiving data can be justified by the Gaussian distribution expected under the hypothesis $H_i$. Otherwise, if a Gaussian mixture justifies the data points better, $(Dec^i_k)_1$ under the hypothesis $i$ is decided. This means that it is decided that $G_k$ is Byzantine, if the receiving data from $G_k$ follows the distribution of the expected Gaussian mixture under the hypothesis $H_i$. This decision may be made using the maximum likelihood decision rule \cite{bickel2015mathematical},
\begin{align} \label{DecMak}
\frac{f_{PDF}(T_k|(Dec^i_k)_0)}{f_{PDF}(T_k|(Dec^i_k)_1)} \gtrless^H_B 1.
\end{align}
However, these distributions are unknown to the honest agents and the detection unit should learn these distributions' respective parameters. Next, we will cover the estimation process based on the proposed framework.

\bf{Learning the Distributions' Parameters}:\normalfont ~ The parameters in (\ref{DecMak}) are unknown and should be estimated. The formulation given in (\ref{DecMak}) is parametric and consequently, the framework may be looked at as a parametric statistical estimation problem \cite{bickel2015mathematical}. Accordingly, under the hypothesis $H_i (i=0,1)$, for the honest neighboring agents $G_k$ where $k \in \mathcal{N}_j^{in_H}$, the parameters to be estimated are $\theta_0=((\mu_{i0})_k,(\sigma_{i0}^2)_k)$ and for the Byzantine neighboring agents $G_k$ where $k \in \mathcal{N}_j^{in_B}$, the parameters to be estimated are the ones in the set $\theta_0$ and  $\theta_1=((\mu_{i1})_k,(\sigma_{i1}^2)_k,P_k)$. Here, we propose a learning-based algorithm for estimating the parameter sets $\theta_0$ and $\theta_1$. We annotate the estimation of the means and variances as, $(\tilde{\mu}_{i0})_k, (\tilde{\mu}_{i1})_k,(\tilde{\sigma}_{i0}^2)_k$, $(\tilde{\sigma}_{i1}^2)_k$, and $\tilde{P}_k$ for $H_i (i=0,1)$ and $k \in \mathcal{N}_j^{in_H}$ or $k \in \mathcal{N}_j^{in_B}$. Let us assume a detection time-interval of length $L$ where, the time-interval includes $L$ triggering instances that may be utilized as the sample points for the estimation \cite{gubner2006probability}. For example, between neighboring agents $G_j$ and $G_k$ at the discrete time-instance $i$, the test statistics attained by agent $G_j$ in regard to the neighbor $G_k$ becomes $t_k^i=\sum_{n=i-L}^{i}|y_k^n-y_j^n|^2$, and the $L$ number of samples $y_k^n$ and $y_j^n$ depend on the most recent outputs of the respective event-detectors at the time-instance $n$ during the detection interval $L$. 

In order to estimate the parameters in the set $\theta_0$ for the case of honest agents under the hypotheses $H_0$ and $H_1$, we can simply utilize the method of moments \cite{bickel2015mathematical}. For the honest neighbors, we know that the data should preferably follow a normal distribution with the means and variences given in (\ref{D1}), (\ref{D2}), (\ref{D3}) and (\ref{D4}). We assume the learning iterations of length $L_p$ (each learning iteration consists of $L_p$ data points). Each learning iteration may consist of one or two sets of data belonging to the hypothesis $H_0$ and $H_1$, respectively. At the learning iteration $l$, we may have $T_k^{(l)}=[(t_k^1)^0,(t_k^2)^0,...,(t_k^{L_0})^0,(t_k^1)^1,(t_k^2)^1,...,(t_k^{L_1})^1]$, where $L_0+L_1=L_p$. Given $L_i$ ($i=0,1$) data points, for a normal distribution and for the learning iteration $l$ with $L_p$ total number of data points, the first and second moments theoretically may be represented as,
\begin{align*}
(m_1)^i_k=\frac{1}{L_i}\sum_{j=1}^{L_i}t_k^j,
\end{align*} and,
\begin{align*}
(m_2)^i_k=\frac{1}{L_i}\sum_{j=1}^{L_i}{t_k^j}^2,
\end{align*}
for $H_i (i=0,1)$ and $k \in \mathcal{N}_j^{in_H}$. Consequently, $(m_1)^i_k= (\tilde{\mu}_{i0})_k$ is the estimator for the sample mean for $H_i (i=0,1)$ and $k \in \mathcal{N}_j^{in_H}$ at the learning iteration $l$. And for the variances, we have, $(\tilde{\sigma}_{i0}^2)_k+(\tilde{\mu}_{i0})_k^2=(m_2)^i_k$ for $H_i (i=0,1)$ with $k \in \mathcal{N}_j^{in_H}$. As a result, we may have,
\begin{align*}
(\tilde{\sigma}_{i0}^2)_k=\frac{1}{L_i}\sum_{j=1}^{L_i}{t_k^j}^2-(\frac{1}{L_i}\sum_{j=1}^{L_i}t_k^j)^2=\frac{1}{L_i}\sum_{j=1}^{L_i} (t_k^j-(\tilde{\mu}_{i0})_k)^2,
\end{align*}
for $H_i (i=0,1)$ and $k \in \mathcal{N}_j^{in_H}$. To sum up, the learned parameter set $\theta_0$ at the learning iteration $l$ for an honest communication between two honest agents becomes,
\begin{align}
\tilde{\theta}_0=((\tilde{\mu}_{i0})_k,(\tilde{\sigma}_{i0}^2)_k)=(\frac{1}{L_i}\sum_{j=1}^{L_i}t_k^j,\frac{1}{L_i}\sum_{j=1}^{L_i} (t_k^j-(\tilde{\mu}_{i0})_k)^2),
\end{align} 
for $H_i (i=0,1)$ and $k \in \mathcal{N}_j^{in_H}$.

Next, we define the the complete learning process for an honest neighboring agent based on the above estimators. Each learning phase consists of $L_p$ detection time-intervals of length $L$ (where the parameters are estimated as above) or $L_p\times L$ data points. One can recall from Sub-Section \ref{sub:decmakste} that each agent is consistently deciding whether the system has reached synchronization ($Decision_{syn}$), i.e. $H_0$ is the correct hypothesis, or otherwise, i.e. $H_1$ is the correct hypothesis. As a result, each data point $l_i$ after the detection time-interval of length $L$ under the set of the $L_p$ data points comes with an index indicating if the estimation is happening under the hypothesis $H_0$ or $H_1$. We annotated the number of these estimation as $L_i$ given the hypothesis $H_i$ ($i=0,1$). Needless to say $L_0+L_1=L_p$. As an application related side-note, one can set a required lower-bound for the number of data points under hypothesis $H_i$ ($i=0,1$) before the estimation (learning process) for the parameters under the hypothesis starts. For instance, one can require, $L_1\geq \tau_1$, before the learning process starts for the parameters under $H_1$. For the learning process that starts at time $l^i+1$ respectively for ($i=0,1$) and when the next $L_p$ data points are available, we already have, $(\tilde{\mu}_0)_k=[(\tilde{\mu}_{00})_k^0,...,(\tilde{\mu}_{00})_k^{l^0}]$, $(\tilde{\sigma}_0^2)_k=[(\tilde{\sigma}_{00}^2)_k^0,...,(\tilde{\sigma}_{00}^2)_k^{l^0}]$ and similarly, $(\tilde{\mu}_1)_k=[(\tilde{\mu}_{10})_k^0,...,(\tilde{\mu}_{10})_k^{l^1}]$, $(\tilde{\sigma}_1^2)_k=[(\tilde{\sigma}_{10}^2)_k^0,...,(\tilde{\sigma}_{10}^2)_k^{l^1}]$. We can define our so-far estimates as $(\tilde{\mu}_{00})_k(l^0)=\frac{\sum_{j=1}^{l^0}(\tilde{\mu}_{00})_k^j}{l^0}$, $(\tilde{\mu}_{10})_k(l^1)=\frac{\sum_{j=1}^{l^1}(\tilde{\mu}_{10})_k^j}{l^1}$ as the (current) estimate for the means under each hypothesis and $(\tilde{\sigma}_{00}^2)_k(l^0)=\frac{\sum_{j=1}^{l^0}(\tilde{\sigma}_{00}^2)_k^j}{l^0}$ and $(\tilde{\sigma}_{10}^2)_k(l^1)=\frac{\sum_{j=1}^{l^1}(\tilde{\sigma}_{10}^2)_k^j}{l^1}$ as the current estimates for the variances under each hypothesis. These values also play the rule of the initial points for the next learning iteration. As a result, in a recursive manner, the next learned values at $l^i+1$ for the means may be determined as follows,

\begin{align}
&(\tilde{\mu}_{00})_k(l^0+1)=\frac{l^0}{l^0+1}(\tilde{\mu}_{00})_k(l^0)+\frac{1}{(l^0+1)L_0}\sum_{i=1}^{L_0}(t_k^i)^0,\\&
(\tilde{\mu}_{10})_k(l^1+1)=\frac{l^1}{l^1+1}(\tilde{\mu}_{10})_k(l^1)+\frac{1}{(l^1+1)L_1}\sum_{i=1}^{L_1}(t_k^i)^1,
\end{align}
where $(t_k^i)^0$ and $(t_k^i)^1$ respectively represent the next set of test statistic data points received under the hypothesis $H_0$ or $H_1$ for the next learning interval $l=1,...,L_p$ from agent $G_k$. This recursive algorithm will require the estimation framework to only record the true values of the last $l^0$ or $l^1$ estimates respectively in a queue, and by calculating the new estimate at $l^i+1$, the first element of the respective queue may be discarded and the new estimate may be added to the queue. This means that the learning process will make use of $l^i \times L_p \times L$ data points while only storing $l^i$ data points for each hypothesis $H_i$ ($i=0,1$), respectively. Similarly, the process for learning the variances becomes,
\begin{align}
&(\tilde{\sigma}_{00}^2)_k(l^0+1)=\frac{l^0}{l^0+1}[(\tilde{\sigma}_{00}^2)_k(l^0)]\nonumber+\frac{1}{(l^0+1)L_0}\sum_{i=1}^{L_0}((t_k^i)^0-(\tilde{\mu}_{00})_k(l^0+1))^2,\numberthis\\&
(\tilde{\sigma}_{10}^2)_k(l^1+1)=\frac{l^1}{l^1+1}[(\tilde{\sigma}_{10}^2)_k(l^1)]\nonumber+\frac{1}{(l^1+1)L_1}\sum_{i=1}^{L_1}((t_k^i)^1-(\tilde{\mu}_{10})_k(l^1+1))^2,\numberthis
\end{align}
where $(t_k^i)^0$ and $(t_k^i)^1$ respectively represent the next set of test statistic data points received under the hypothesis $H_0$ or $H_1$ for the next learning interval $l=1,...,L_p$ from agent $G_k$. Lastly, as mentioned, as a design matter, one can put performance criteria such as $L_0>\tau_1$ and $L_1>\tau_2$ as quantities to be met first before the learning process for each of parameter sets under each hypothesis starts in order to make sure that the learning data-set is large enough for a more precise estimation. This two-level estimation process, will achieve a very good learning-based estimates while maintaining low memory requirements, as only the values of the last $l^i$ estimates (containing the information for $L \times L_p$ data points) and their respective hypothesis keys are required to be memorized in a feedback, recursive manner.

For Byzantine agents, we take another common approach, previously utilized in control literature \cite{david1999maximum,bresler1986exact}, called maximum liklihood method (MLE) of parameter estimation. This is due to the fact that additional to the means and variances of the Byzantine data, one needs to also estimate the latent variable $P_k$ ($k \in \mathcal{N}_j^{in_B}$) or the probability of the attack.  MLE, developed by Fisher \cite{fisher1997absolute}, has many desirable theoretical properties, such as consistency, efficiency and unbiasedness under certain conditions \cite{bickel2015mathematical}. The likelihood is the joint probability of a set of observations, conditioned on
a choice for the parameters $\tilde{\theta}_1$, $Lik(\tilde{\theta}_1,y)=P(y|\tilde{\theta}_1)$, where $y$ represents the data sample points, $\tilde{\theta}_1$ is the set of parameters to be estimated, and P is the probability distribution. According to this relation, the parameter set ($\tilde{\theta}_1^{MLE}$) that
maximizes the likelihood of the observed data gives the best estimator. This value is called the maximum likelihood estimate (MLE),
\begin{align*}
\tilde{\theta}_1^{MLE}= arg max_{\tilde{\theta}_1}~ Lik(\tilde{\theta}_1,y).
\end{align*}
Each learning phase $t$ consists of $L_p$ data-points. We denote the test statistic between the agent $G_j$ and the Byzantine neighbor $G_k$ during the learning phase of length $L_p$ as, $T_k^{(t)}=[(t_k^1)^0,(t_k^2)^0,...,(t_k^{L_0})^0,(t_k^1)^1,(t_k^2)^1,...,(t_k^{L_1})^1]$. Similar to before $L_0+L_1=L_p$. Additionally, similar to before, one can start the learning process for each set of parameters under each hypothesis $H_i$ ($i=0,1$), once $L_0>\tau_1$ and $L_1>\tau_2$. We utilize the estimates resulting from each learning phase as initial values for the next round of estimations. Additionally, we utilize the Expectation-Maximization (EM) algorithm for the learning process. The Expectation-Maximization (EM) algorithm is an iterative learning-based method for finding $\tilde{\theta}_1^{MLE}$ \cite{bickel2015mathematical}. The EM algorithm alternates between two steps, an expectation step which calculates the expectation of the log-likelihood given the current estimates for the parameters, and a maximization step which computes the parameters that maximize the expected log-likelihood found in the first step \textemdash This step involves derivations with respect to unknown parameters (means and variances) and substitutions for the latent parameter set $Z$. These new values then initialize the next expectation step. We annotate the latent parameters as $Z=[z_0,z_1]$  (in our case, the latent variables represent the attack probabilities for the Byzantine agent $G_k$, i.e. $\tilde{P}_k$ and $1-\tilde{P}_k$). For Byzantine neighboring agent $G_k$, we have,
\begin{align*}
Lik(\tilde{\theta}_1,y)=Lik((\tilde{\mu}_{ij})_k,(\tilde{\sigma}_{ij}^2)_k;T_k,Z)=p(T_k,Z|(\tilde{\mu}_{ij})_k,(\tilde{\sigma}_{ij}^2)_k),
\end{align*}
where $T_k$ represents our data points, $p(.)$ is the joint PDF of the data points and latent variables conditioned on the parameters and $j=0,1$. Further, we can describe the above based on a marginal and a conditional distribution, this gives us,
\begin{align} \label{mullog0}
p(T_k,Z|(\tilde{\mu}_{ij})_k,(\tilde{\sigma}_{ij}^2)_k)=p(z_j|T_k,(\tilde{\mu}_{ij})_k,(\tilde{\sigma}_{ij}^2)_k)p(T_k|(\tilde{\mu}_{ij})_k,(\tilde{\sigma}_{ij}^2)_k)=\pi_k^jp(T_k|(\tilde{\mu}_{ij})_k,(\tilde{\sigma}_{ij}^2)_k),
\end{align}
where $0\leq\pi_k^j\leq1$ represents the distribution for the latent variable which is the probability of attack for the Byzantine agent $G_k$. Also, $j=0,1$ and $\pi_k^0+\pi_k^1=1$. To sum all this up, for estimating the Byzantine parameters of agent $G_k$, and in order to describe the above relationship based on single data points, we expand (\ref{mullog0}) to have,
\begin{align} \label{mullog}
Lik((\tilde{\mu}_{ij})_k,(\tilde{\sigma}_{ij}^2)_k;T_k,Z)=p(T_k,Z|(\tilde{\mu}_{ij})_k,(\tilde{\sigma}_{ij}^2)_k)=\prod_{n=1}^{L_p}\prod_{j=0}^{1}\pi_k^jp(t^n_k)^i|(\tilde{\mu}_{ij})_k,(\tilde{\sigma}_{ij}^2)_k).
\end{align}

Under the EM algorithm, for the first step, the expectation is calculated based on the log-likelihood function of the distributions. Given (\ref{mullog}), the expectation step ($Q$-function) based on the current estimate set $\tilde{\theta}_1^{(l)}$, for agent $G_k$ under the hypotheses $H_i$ $(i=0,1)$, becomes, 
\begin{align}\label{E}
&Q(\tilde{\theta}_1|\tilde{\theta}_1^{(l)})=E_{z|T_k;\tilde{\theta}_1^{(l)}}[\log Lik((\tilde{\mu}_{ij})_k,(\tilde{\sigma}_{ij}^2)_k;T_k,Z)]\nonumber\\&=E_{z|T_k;\tilde{\theta}_1^{(l)}}[\log p(T_k,Z|\tilde{\theta}_1)|T_k,\tilde{\theta}_1^{(l)}]\nonumber\\&=\sum_{n=1}^{L_0}\log[\sum_{j=0}^{1}\pi_k^jp((t_k^n)^0|(\tilde{\mu}_{0j})_k,(\tilde{\sigma}_{0j}^2)_k)p(z_j|(t_k^n)^0,(\tilde{\mu}_{0j})_k^{(l)},(\tilde{\sigma}_{0j}^2)_k^{(l)}]\nonumber\\&+\sum_{n=1}^{L_1}\log[\sum_{j=0}^{1}\pi_k^jp((t_k^n)^1|(\tilde{\mu}_{1j})_k,(\tilde{\sigma}_{1j}^2)_k)p(z_j|(t_k^n)^1,(\tilde{\mu}_{1j})_k^{(l)},(\tilde{\sigma}_{1j}^2)_k^{(l)}],
\end{align}
where $\tilde{\theta}_1^{(l)}=((\tilde{\mu}_{0j})_k,(\tilde{\sigma}_{0j}^2)_k, (\tilde{\mu}_{1j})_k,(\tilde{\sigma}_{1j}^2)_k,\pi_k^j)$ for $j=0,1$, are the current estimates for the neighboring Byzantine neighbor $G_k$. It is also well-known in the literature that given the current estimate $\theta_1^{(l)}$, the conditional distribution of $Z$, i.e. $p(z_j|(t_k^n)^r,(\tilde{\mu}_{rj})_k^{(l)},(\tilde{\sigma}_{rj}^2)_k^{(l)})$, under hypothesis $H_r$, $(r=0,1)$ respectively, for each summation in (\ref{E}). is determined by Bayes' Theorem as,
\begin{align}
p(z_j|(t_k^n)^r,(\tilde{\mu}_{rj})_k^{(l)},(\tilde{\sigma}_{rj}^2)_k^{(l)})=\frac{(\pi_k^j)^{(l)}p((t_k^n)^r|(\tilde{\mu}_{rj})_k^{(l)},(\tilde{\sigma}_{rj}^2)_k^{(l)})}{\sum_{s=0}^{1}(\pi_k^s)^{(l)}p((t_k^n)^r|(\tilde{\mu}_{rs})_k^{(l)},(\tilde{\sigma}_{rs}^2)_k^{(l)})}.
\end{align}

For the maximization step, we should maximize $Q(\tilde{\theta}_1|\tilde{\theta}_1^{(l)})$, by taking derivatives with respect to the parameters, i.e. $\tilde{\theta}_1^{(l+1)} =  arg~ max_{\tilde{\theta}_1}~Q(\tilde{\theta}_1|\tilde{\theta}_1^{(l)})$. Simplifying further, and utilizing Jensen's inequality and given the fact that log likelihood is a concave function \cite{bickel2015mathematical}, we have,
\begin{align}
&\tilde{\theta}_1^{(l+1)} =  arg~ max_{\tilde{\theta}_1}~Q(\tilde{\theta}_1|\tilde{\theta}_1^{(l)}) \nonumber\\& \equiv
arg~ max_{\tilde{\theta}_1}[\sum_{n=1}^{L_0}\sum_{j=0}^{1} [p(z_j|(t_k^n)^0,(\tilde{\mu}_{0j})_k^{(l)},(\tilde{\sigma}_{0j}^2)_k^{(l)})(\log \pi_k^j-\frac{((t_k^n)^0-(\tilde{\mu}_{0j})_k)^2}{2(\tilde{\sigma}_{0j}^2)_k}-\frac{\log (\tilde{\sigma}_{0j}^2)_k}{2})]\nonumber\\&+\sum_{n=1}^{L_1}\sum_{j=0}^{1}[p(z_j|(t_k^n)^1,(\tilde{\mu}_{1j})_k^{(l)},(\tilde{\sigma}_{1j}^2)_k^{(l)})(\log \pi_k^j-\frac{((t_k^n)^1-(\tilde{\mu}_{1j})_k)^2}{2(\tilde{\sigma}_{1j}^2)_k}-\frac{\log (\tilde{\sigma}_{1j}^2)_k}{2})]].
\end{align}

This should be done subject to the constraint that $\sum_{j=0}^{1} \pi_k^j = 1$ for the Byzantine agent $G_k$. Similar to common approaches in literature in regard to EM-based estimation of Gaussian mixtures \cite{kung2005biometric}, we utilize a Lagrangian multiplier for maximization, hence we have,
\begin{align}
&\max \mathcal{J} = Q(\tilde{\theta}_1|\tilde{\theta}_1^{(l)}) + \lambda (\sum_{j=0}^{1} \pi_k^j - 1)\nonumber\\&\equiv~
\max [\sum_{n=1}^{L_0}\sum_{j=0}^{1} [p(z_j|(t_k^n)^0,(\tilde{\mu}_{0j})_k^{(l)},(\tilde{\sigma}_{0j}^2)_k^{(l)})(\log \pi_k^j-\frac{((t_k^n)^0-(\tilde{\mu}_{0j})_k)^2}{2(\tilde{\sigma}_{0j}^2)_k}-\frac{\log (\tilde{\sigma}_{0j}^2)_k}{2})]\nonumber\\&+\sum_{n=1}^{L_1}\sum_{j=0}^{1}[ p(z_j|(t_k^n)^1,(\tilde{\mu}_{1j})_k^{(l)},(\tilde{\sigma}_{1j}^2)_k^{(l)})(\log \pi_k^j-\frac{((t_k^n)^1-(\tilde{\mu}_{1j})_k)^2}{2(\tilde{\sigma}_{1j}^2)_k}-\frac{\log (\tilde{\sigma}_{1j}^2)_k}{2})]\nonumber\\&+ \lambda (\sum_{j=0}^{1} \pi_k^j - 1)].
\end{align}

In order to maximize the above, one should solve for the equations resulting from the derivative of each parameter by equating them with zero. As an example, we have (for $j=0,1$), 
\begin{align}
\frac{d}{d\pi_k^j} \mathcal{J}=\lambda+\frac{\sum_{n=1}^{L_0}p(z_j|(t_k^n)^0,(\tilde{\mu}_{0j})_k^{(l)},(\tilde{\sigma}_{0j}^2)_k^{(l)})}{\pi_k^j}+\frac{\sum_{n=1}^{L_1}p(z_j|(t_k^n)^1,(\tilde{\mu}_{1j})_k^{(l)},(\tilde{\sigma}_{1j}^2)_k^{(l)})}{\pi_k^j}=0,
\end{align}
which gives us, 
\begin{align}
-\pi_k^j\lambda=\sum_{n=1}^{L_0}p(z_j|(t_k^n)^0,(\tilde{\mu}_{0j})_k^{(l)},(\tilde{\sigma}_{0j}^2)_k^{(l)})+\sum_{n=1}^{L_1}p(z_j|(t_k^n)^1,(\tilde{\mu}_{1j})_k^{(l)},(\tilde{\sigma}_{1j}^2)_k^{(l)}),
\end{align}
or $-\pi_k^j\lambda=(L_0 + L_1)\pi_k^j$. And we have, $\lambda=-L_p$. In a similar manner, we can take derivatives and simplify further to find the following recursive estimations for the Byzantine parameters for agent $G_k$, 
\begin{align}
	&(\pi_k^j)^{(l+1)}=\frac{\sum_{n=1}^{L_0}p(z_j|(t_k^n)^0,(\tilde{\mu}_{0j})_k^{(l)},(\tilde{\sigma}_{0j}^2)_k^{(l)})+\sum_{l=1}^{L_1}p(z_j|(t_k^n)^1,(\tilde{\mu}_{1j})_k^{(l)},(\tilde{\sigma}_{1j}^2)_k^{(l)})}{L_p},\\&
	(\tilde{\mu}_{0j})_k^{(l+1)}=\frac{\sum_{n=1}^{L_0}p(z_j|(t_k^n)^0,(\tilde{\mu}_{0j})_k^{(l)},(\tilde{\sigma}_{0j}^2)_k^{(l)})(t_k^n)^0}{\sum_{n=1}^{L_0}p(z_j|(t_k^n)^0,(\tilde{\mu}_{0j})_k^{(l)},(\tilde{\sigma}_{0j}^2)_k^{(l)})},\\&
	(\tilde{\sigma}_{0j}^2)_k^{(l+1)}=\frac{\sum_{n=1}^{L_0}p(z_j|(t_k^n)^0,(\tilde{\mu}_{0j})_k^{(l)},(\tilde{\sigma}_{0j}^2)_k^{(l)})((t_k^n)^0-(\tilde{\mu}_{0j})_k^{(l+1)})^2}{\sum_{n=1}^{L_0}p(z_j|(t_k^n)^0,(\tilde{\mu}_{0j})_k^{(l)},(\tilde{\sigma}_{0j}^2)_k^{(l)})},\\&
	(\tilde{\mu}_{1j})_k^{(l+1)}=\frac{\sum_{n=1}^{L_1}p(z_j|(t_k^n)^1,(\tilde{\mu}_{1j})_k^{(l)},(\tilde{\sigma}_{1j}^2)_k^{(l)})(t_k^n)^1}{\sum_{n=1}^{L_1}p(z_j|(t_k^n)^1,(\tilde{\mu}_{1j})_k^{(l)},(\tilde{\sigma}_{1j}^2)_k^{(l)})},\\&
	(\tilde{\sigma}_{1j}^2)_k^{(l+1)}=\frac{\sum_{n=1}^{L_1}p(z_j|(t_k^n)^1,(\tilde{\mu}_{1j})_k^{(l)},(\tilde{\sigma}_{1j}^2)_k^{(l)})((t_k^n)^1-(\tilde{\mu}_{1j})_k^{(l+1)})^2}{\sum_{n=1}^{L_1}p(z_j|(t_k^n)^1,(\tilde{\mu}_{1j})_k^{(l)},(\tilde{\sigma}_{1j}^2)_k^{(l)})}.
\end{align}

Similar to the previous algorithm, one can design independent performance criteria such as $L_0>\tau_1$ and $L_1>\tau_2$ as quantities to be met first before the learning process for each of the parameter sets starts. This is to make sure that the learning data-set is large enough for a precise estimation. The algorithm may be solved recursively. At discrete time instance $l$ when enough information is received the recursive learning algorithm starts and the estimates at the end of each expectation-maximization run may be used as initial values for the next learning iteration $l+1$ that uses a new set of $L_p$ data points.  
After the learning process has ended, the honest agent may classify its neighbor $G_k$ as a Byzantine or honest agent following the likelihood-based hypothesis testing,
\begin{align*} 
\frac{\tilde{f}_{PDF}(T_k|(Dec^i_k)_0)}{\tilde{f}_{PDF}(T_k|(Dec^i_k)_1)} \gtrless^H_B 1,
\end{align*}
where $\tilde{f}_{PDF}(.)$ represents the probability distribution function attained based on the best estimates of the parameters available to the agent. 

\bf{Mitigating the Effects of Data Falsification}:\normalfont ~ Once the Byzantine agents are identified based on the above algorithm. One can utilize this information to mitigate the effects of the attack. Unlike most approaches in the literature that rely on excluding the Byzantine agents, we utilize the Byzantine information against rogue agents in order to benefit the entire event-triggered multi-agent system. As the first step, for the decision making step, we define a new local summary statistic based on the information received from only the honest agents, i.e. $(T^\star_j)^H= \sum_{k \in  \mathcal{N}_j^{in_H}}^{}T_k^j$. Similar to before, each agent will make its own decision on the synchronization hypothesis using the predefined threshold $\gamma_j^H$,
\[ Decision_{syn}=\begin{cases}
H_0   & \quad \text{if $(T_j^*)^H<\gamma_j^H$ } \\
H_1  & \quad \text{otherwise.}\end{cases}\]

Under the hypothesis $H_0$, at the learning iteration $l+1$, the honest agent would estimate the means ($(\tilde{\mu}_{00})_k^{(l+1)}$ and $(\tilde{\mu}_{01})_k^{(l+1)}$) based on the received information from the Byzantine neighbor $G_k$ or estimate only $(\tilde{\mu}_{00})_k^{(l+1)}$ based on the received information from the honest neighbor $G_k$ according to the algorithm given in the previous sub-section. These estimates follow the form given in (\ref{D1}). As a result, the honest agent may closely estimate the attack parameter $\tilde{\Delta}_k^{l+1}$ after each learning iteration for the Byzantine neighbor $G_k$ as follows,
\begin{align} \label{delta1}
&\tilde{\Delta}_k^{l+1} \approx \sqrt{ \frac{(\tilde{\mu}_{01})_k^{(l+1)}-(\tilde{\mu}_{00})_k^{(l+1)}}{L\tilde{\sigma}_k^2 \tilde{h}_k^2}}.
\end{align}

In a similar manner, under the hypothesis $H_1$, an analogous estimation process may be undertaken by the honest agent. Here, it is important to note that,
\begin{align} \label{md}
(\tilde{\mu}_{11})_k^{(l+1)}-(\tilde{\mu}_{10})_k^{(l+1)}=L\tilde{h}^2_k \Delta^2_k+2L\tilde{h}_k\Delta_k(\mu_j-\tilde{\mu}_k),
\end{align}
where $\mu_j=\frac{1}{L}\sum_{i=1}^{L}y_j^i$. Since $\mu_k$ is not available to us, it is prudent to use the estimation $\tilde{\mu}_k=\frac{1}{L}\sum_{i=1}^{L}\mu^i$ where $\mu^i=\frac{1}{|\mathcal{N}^{in}|}\sum_{k \in \mathcal{N}^{in}}^{}y_k^i$ at time-instance $i$ during the detection interval of length $L$. $\tilde{\mu}_k$ can provide us with a good initial value as it represent the general state the entire multi-agent system is at. Later, at the learning iteration $l+1$, one can replace $\mu_k$ with $\frac{1}{L}\sum_{i=1}^{L}(\tilde{y}_k^i-\tilde{\Delta}_k^{l})$. Based on (\ref{md}), we also have, 
\begin{align} \label{md2}
\tilde{h}_k \Delta^2_k+2\Delta_k(\mu_j-\tilde{\mu}_k)=(\sqrt{\tilde{h}_k}\Delta_k+\frac{(\mu_j-\tilde{\mu}_k)}{\sqrt{\tilde{h}_k}})^2-\frac{(\mu_j-\tilde{\mu}_k)^2}{\tilde{h}_k}.
\end{align}
Finally, by utilizing (\ref{md}) and (\ref{md2}), we have,
\begin{align} \label{delta2}
\tilde{\Delta}_k^{l+1} \approx\frac{1}{\sqrt{\tilde{h}_k}}(\sqrt{ \frac{(\tilde{\mu}_{11})_k^{(l+1)}-(\tilde{\mu}_{10})_k^{(l+1)}}{L\tilde{h}_k}+\frac{(\mu_j-\tilde{\mu}_k)^2}{\tilde{h}_k}}-\frac{(\mu_j-\tilde{\mu}_k)}{\sqrt{\tilde{h}_k}}).
\end{align}

In the above relations, we have assumed that $\tilde{\sigma}_k$ and $\tilde{h}_k$ for the communication links are available to the agents. As mentioned before, these assumptions are justified by the fact that each detection unit can perform simple noise power estimation and channel gain estimation (by averaging the signal-to-noise ratio over a certain time interval) between consecutive sensing intervals to accurately obtain these values. Finally, instead of excluding the Byzantine agent $G_k$ in the process of mitigating the attack, one can utilize the false information after the estimation and mitigate the negative adversarial effects after the learning iteration $l+1$ by replacing the Byzantine agent's output information with $\tilde{y}_k=y_k \mp \tilde{\Delta}^{l+1}_k$ under the hypothesis $H_i$ $(i=0,1)$, respectively. Next, we will demonstrate our approach with an example.
\begin{example}\label{exm:ex5}
We consider a multi-agent event-triggered network system consisting of four agents ($i=1,...,4$) with the underlying balanced communication topology given in Fig. \ref{fig:ex5}. We assume the following dynamics for agents, 
	\[ G_{i} = \begin{cases}
	\dot{x}_i(t)= - c_i x_i(t) + u_i(t)   \\
	y_i(t)=x_i(t), & \\
	\end{cases} \]
	with $c_1=1.2, c_2=2.2, c_3=2.4, c_4=0.60$ and initial conditions, $y_1(0)=5, y_2(0)=10, y_3(0)=4, y_4(0)=1$. One can verify that all agents are dissipative with the storage function $V_i(x)=\frac{1}{2}x_i^T(t)x_i(t)$. This results in output passivity indices $\rho_1=1.2, \rho_2=2.2, \rho_3=2.4, \rho_4=0.6$. The Laplacian matrix of the underlying communication graph amongst agents before Byzantine attack is balances as follows,
	\begin{align*}
		L=\begin{bmatrix} 
			1 & 0  &  -1 &  0 \\
			0 &  1  & -1 &  0 \\
		   -1 &  0  &  2 & -1 \\
		    0 & -1  &  0 &  1
		\end{bmatrix}.
	\end{align*}
	with the connectivity measure, $\lambda(G)=2$. We assume that the multi-agent network system is designed based on Theorem \ref{thm:synchAss} with the following triggering conditions,
	\begin{align*}
		&||e_1(t)||_2^2>0.21||y_1(t)||_2^2,\\&
		||e_2(t)||_2^2>0.14||y_2(t)||_2^2,\\&
		||e_3(t)||_2^2>0.20||y_3(t)||_2^2,\\&
		||e_4(t)||_2^2>0.45||y_4(t)||_2^2,
	\end{align*}
An additive Gaussian noise with zero mean and variance $\sigma_k^2=1.22$ $(\mathcal{N}(0,\sigma_k^2))$ is assumed in the communication links. The channel gains are $\tilde{h}_k=1$ for $k=1,...,4$. We assume $G_1$ is a Byzantine agent. Under the hypothesis $H_0$, at time $t=3s$, $G_1$ attacks the network with the attack parameters $P_1=0.70$ and $\Delta_1=8$. One can see that the behavior of the multi-agent system drastically deviates from its desired synchronized behavior as a result of the attack (Fig. \ref{fig:ex5out} and Fig. \ref{fig:ex5dev}).
The honest agent $G_3$ detects the Byzantine agent and starts the process of learning the Byzantine agent's behavior using the proposed mitigation algorithm. The learning parameters are $L=12$ with $20$ learning iterations ($l=20$) of length $20$ ($L_p=20$) which takes advantage of an overall of $400$ data points. At time $t=6s$, the honest agent $G_3$ estimates the attack parameters as $\tilde{P}_1=0.68$ (Fig. \ref{fig:ex5prob}) and $\tilde{\Delta}_1=6.35$ using the proposed algorithm given in the previous sub-section and the relation given below,
\begin{align*} 
	\tilde{\Delta}_1^{l+1} \approx\frac{1}{\sqrt{\tilde{h}_1}}(\sqrt{ \frac{(\tilde{\mu}_{11})_1^{(l+1)}-(\tilde{\mu}_{10})_1^{(l+1)}}{L\tilde{h}_1}+\frac{(\mu_3-\tilde{\mu}_1)^2}{\tilde{h}_1}}-\frac{(\mu_3-\tilde{\mu}_1)}{\sqrt{\tilde{h}_1}}).
\end{align*}
The estimations at each learning iteration are given in Fig \ref{fig:ex5prob}, Fig \ref{fig:ex5m1} and Fig \ref{fig:ex5m2}. The mitigation process starts at time $t=8$ where the information received from agent $G_1$ by agent $G_3$ is replaced with $\tilde{y}_1=y_1-\tilde{\Delta}_1^{l+1}$. One can see the positive effects of this mitigation process in Fig. \ref{fig:ex5out} and Fig. \ref{fig:ex5dev} toward the end of the experiment where the multi-agent system enhances its performance and reaches synchronization again.
\begin{figure}[!t]
	\centering
	\includegraphics[scale = 0.7]{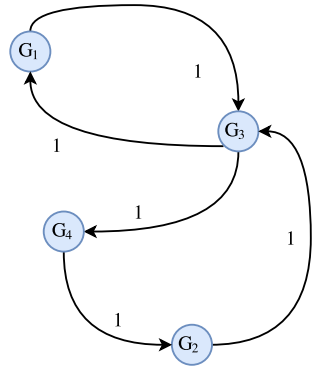}
	\caption{The Underlying Communication Graph for the Multi-Agent System Presented in Example 5.}
	\label{fig:ex5}
\end{figure}
\begin{figure}[!t]
	\centering
	\includegraphics[scale = 0.5]{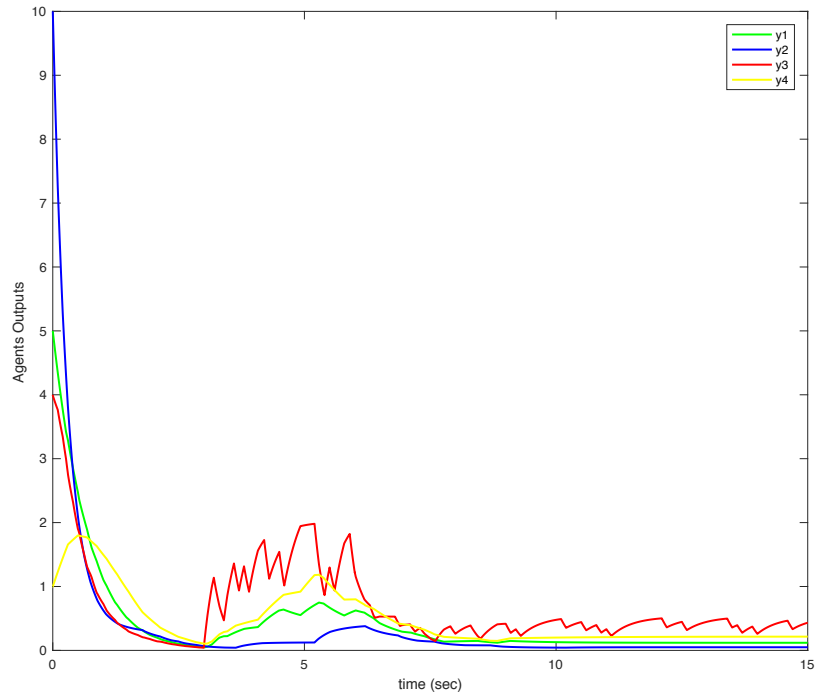}
	\caption{The Outputs of the Multi-Agent System in the Presence of the Byzantine Attack (Attack Parameters: $P_1=0.70$ and $\Delta_1=8$).}
	\label{fig:ex5out}
\end{figure}
\begin{figure}[!t]
	\centering
	\includegraphics[scale = 0.5]{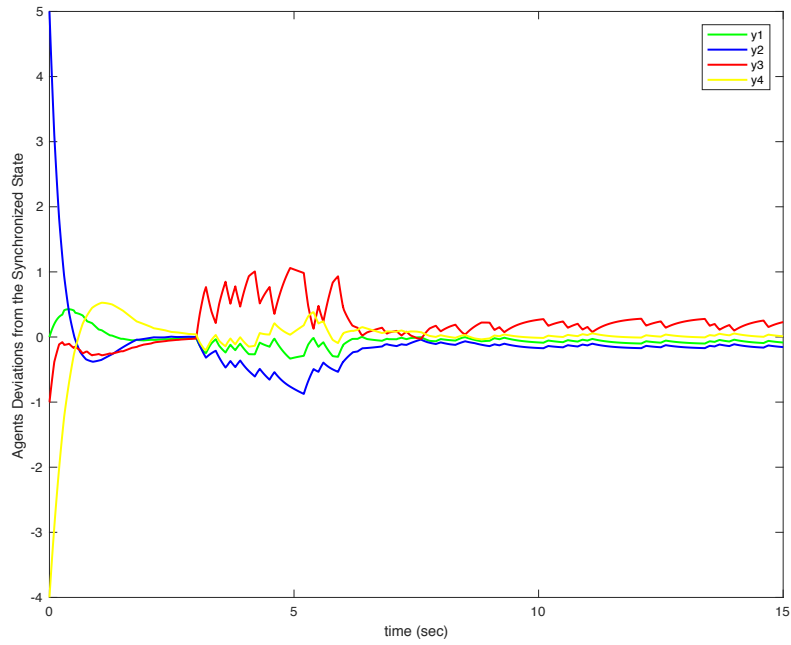}
	\caption{The Deviations of the Outputs from the Synchronized State of the Multi-Agent System in the Presence of the Byzantine Attack (Attack Parameters: $P_1=0.70$ and $\Delta_1=8$).}
	\label{fig:ex5dev}
\end{figure}
\begin{figure}[!t]
	\centering
	\includegraphics[scale = 0.5]{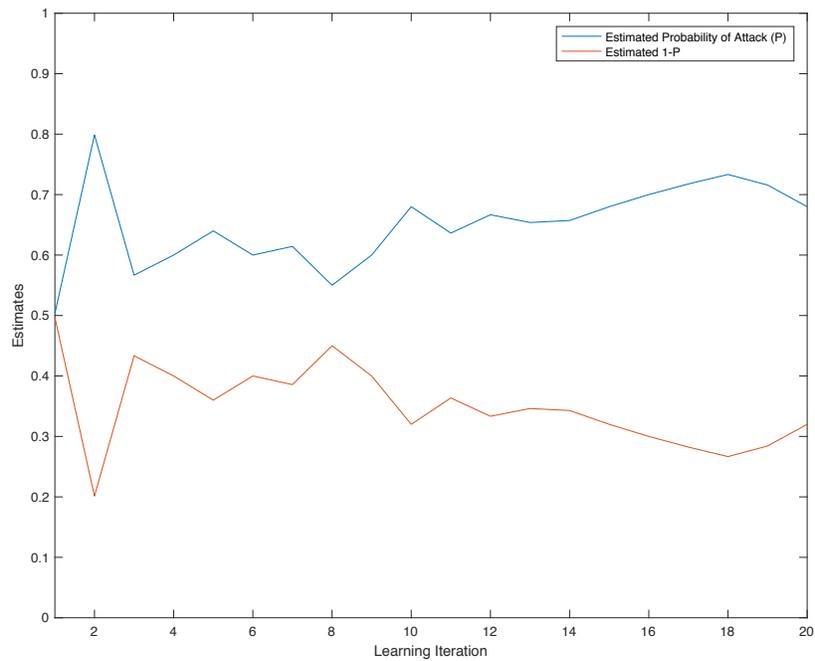}
	\caption{The Estimated Probability of Attack Using the Proposed Algorithm.}
	\label{fig:ex5prob}
\end{figure}
\begin{figure}[!t]
	\centering
	\includegraphics[scale = 0.5]{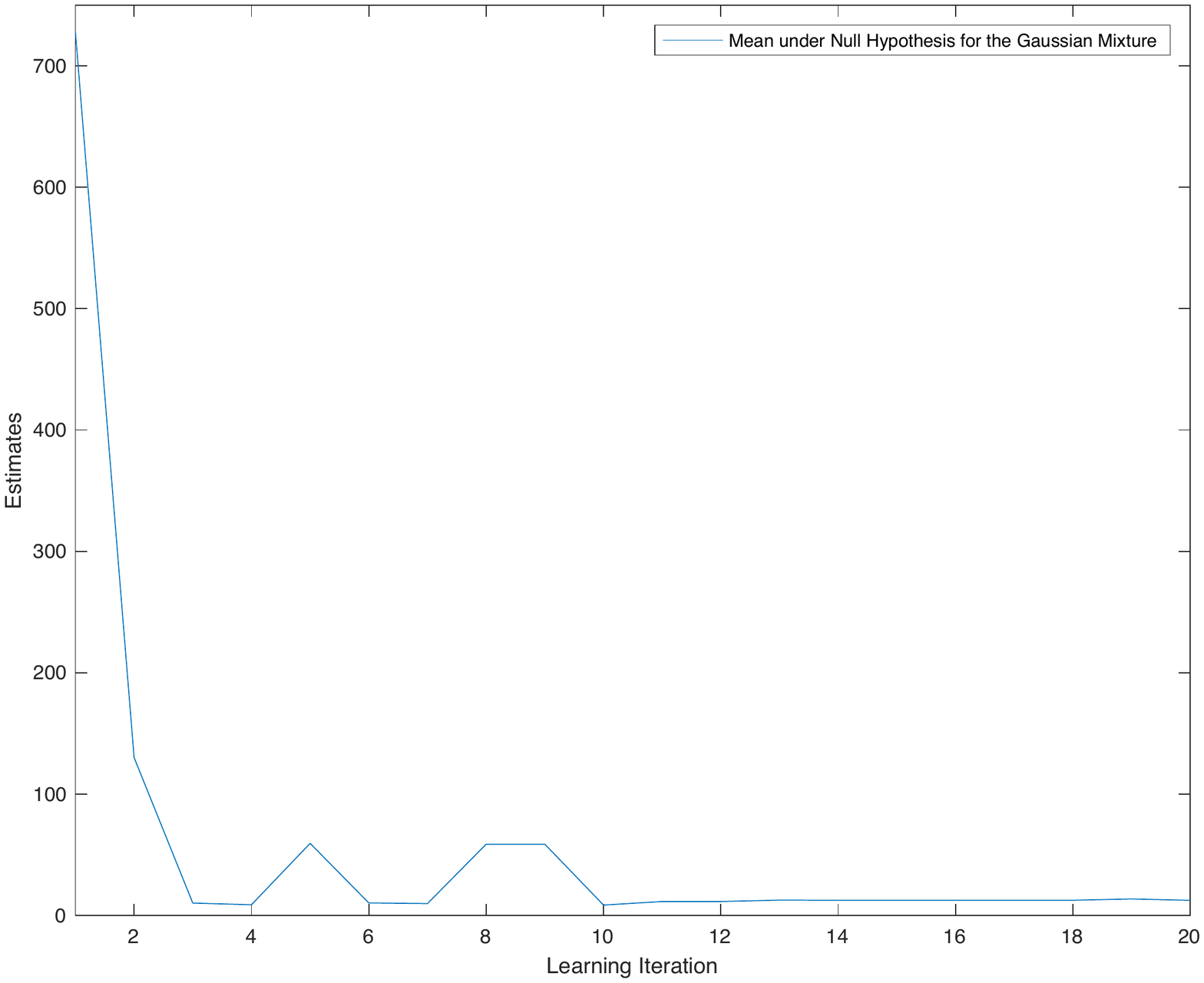}
	\caption{The Estimated Mean $(\tilde{\mu}_{00})_1^{(t+1)}$, under $H_0$ Using the Proposed Algorithm.}
	\label{fig:ex5m1}
\end{figure}
\begin{figure}[!t]
	\centering
	\includegraphics[scale = 0.5]{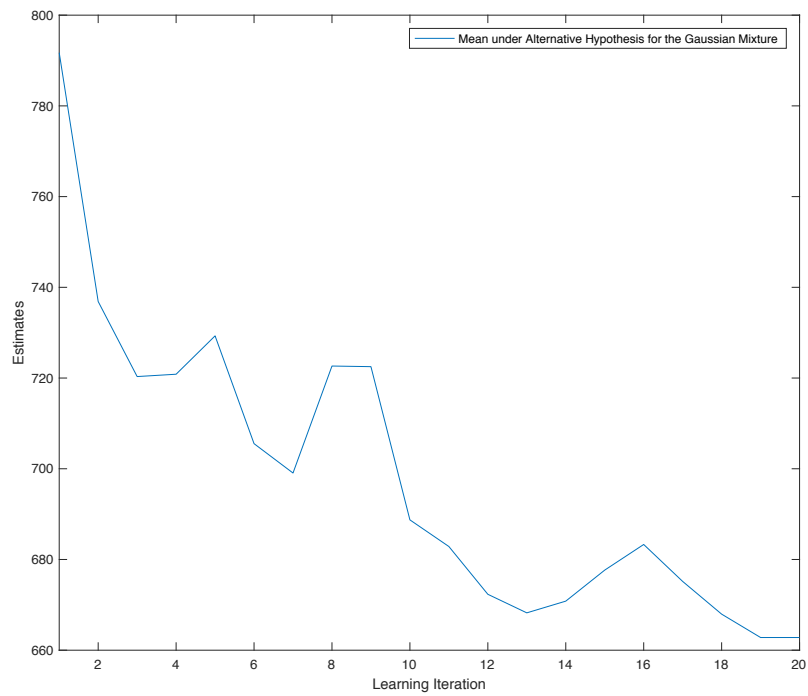}
	\caption{The Estimated Mean $(\tilde{\mu}_{01})_1^{(t+1)}$, under $H_1$ Using the Proposed Algorithm..}
	\label{fig:ex5m2}
\end{figure}
\end{example}
\clearpage
\section{Concluding Remarks}\label{sec:con}
The work presented in this paper may be divided into two parts. The first part consists of a comprehensive event-triggered control design proposal that can guarantee synchronization for a network of multi-agent systems based on their passivity properties. This proposed control design is capable of reducing the communication load amongst sub-agents while maintaining synchronization and desired performance criteria. Under this part of the work, we also show the lack of Zeno behavior for the event-triggered conditions. The second part of our work concerns security. Under this section, we introduced a general powerful model for Byzantine attacks containing both data falsification and weight manipulation. Additionally, we introduced a detection framework, through which, the honest agents will attempt to detect and mitigate the effects of the attack. We gave a full performance analysis of the detection unit based on both transient and steady-state characteristics of the framework. Lastly, we presented two learning-based mitigation processes. The first one was based on the passivity properties of the agents and intended to mitigate the negative effects of weight manipulation. The second proposed learning-based control framework dealt with the problem of data falsification. Under this framework, the honest agents attempt to estimate their neighbor's states and consequently learn the attack parameters for Byzantine neighbors. After learning these parameters then the honest agents utilize this information to eradicate the negative effects of adversarial attempts and enhance the performance and synchronization of the entire event-triggered multi-agent network system.

\newpage
\bibliographystyle{IEEEtran}
\bibliography{IEEEabrv,bibfile}
\end{document}